\newcommand{\TheTitle}{The Complexity of General-Valued CSPs} 
\newcommand{\TheAuthors}{V. Kolmogorov, A. Krokhin, and M. Rol\'inek}
\headers{\TheTitle}{\TheAuthors}
\title{{\TheTitle}\thanks{Accepted to SIAM Journal on Computing (SICOMP). An extended abstract of this work (without proofs) has appeared in FOCS 2015.
\funding{The first and third authors are supported by the European Research Council under the European Unions Seventh Framework Programme (FP7/2007-2013)/ERC grant agreement no 616160.}}}
\author{
  Vladimir Kolmogorov\thanks{IST Austria (\email{vnk@ist.ac.at}, \email{michal.rolinek@ist.ac.at}).}
  \and
  Andrei Krokhin\thanks{Durham University, UK (\email{andrei.krokhin@durham.ac.uk}).}
  \and
  Michal Rol\'inek\footnotemark[2]
}
\DeclareMathOperator{\supp}{supp}
\DeclareMathOperator{\dom}{dom}
\DeclareMathOperator{\Opt}{Opt}
\DeclareMathOperator{\BLP}{BLP}
\newcommand{\q}{\ensuremath{\mathbb{Q}}}
\newcommand{\Pol}[1]{\ensuremath{\operatorname{Pol}(#1)}}
\newcommand{\fPol}[1]{\ensuremath{\operatorname{fPol}(#1)}}
\newcommand{\fPolplus}[1]{\ensuremath{\operatorname{fPol}^+(#1)}}
\DeclareMathOperator{\pol}{Pol}
\DeclareMathOperator{\CostF}{{\cal F}}
\newcommand{\CSP}[1]{\ensuremath{\operatorname{CSP}(#1)}}
\newcommand{\vcsp}[1]{\ensuremath{\operatorname{VCSP}(#1)}}
\newcommand{\VCSP}[1]{\ensuremath{\operatorname{VCSP}(#1)}}
\newcommand{\Feas}[1]{\ensuremath{\operatorname{Feas}(#1)}}
\long\def\ignore#1{}
\def\myps[#1]#2{\includegraphics[#1]{#2}}
\def\br(#1,#2){{\langle #1,#2 \rangle}}
\def\setZ[#1,#2]{{[ #1 .. #2 ]}}
\def\zd{,\ldots,}
\newcommand{\Qc}{\mbox{$\overline{\mathbb Q}$}}
\def\closure#1{{\langle#1\rangle}}
\def\q={\quad=\quad}
\def\qq={\qquad=\qquad}
\def\calC{{\cal C}}
\def\calI{{\cal I}}
\def\calV{{\cal V}}
\def\calO{{\cal O}}
\def\psfile[#1]#2{}
\def\psfilehere[#1]#2{}
\def\assign(#1,#2){\langle#1,#2\rangle}
\def\edge(#1,#2){(#1,#2)}
\def\slack(#1){\texttt{slack}({#1})}
\def\barslack(#1){\overline{\texttt{slack}}({#1})}
\def\unitvec(#1){{{\bf u}_{#1}}}
\newtheorem{example}[theorem]{Example}
\newtheorem{conjecture}[theorem]{Conjecture}
\begin{document}

\maketitle

\begin{abstract}
An instance of the Valued Constraint Satisfaction Problem (VCSP) is given by a finite
set of variables, a finite domain of labels, and a sum of functions,
each function depending on a subset of the variables. Each function can take finite values specifying costs
of assignments of labels to its variables or
the infinite value, which indicates an infeasible assignment. The goal is to find
an assignment of labels to the variables that minimizes the sum.

We study, assuming that P $\ne$ NP, how the complexity of this
very general problem depends on the set of functions allowed in the instances, the so-called constraint language.
The case when all allowed functions take values in $\{0,\infty\}$ corresponds to ordinary CSPs, where one deals only with the feasibility issue
and there is no optimization. This case is the subject of the Algebraic CSP Dichotomy Conjecture predicting for which constraint languages
CSPs are tractable (i.e. solvable in polynomial time) and for which NP-hard. The case when all allowed functions take only finite values
corresponds to finite-valued CSP, where the feasibility aspect is trivial and
one deals only with the optimization  issue.
The complexity of finite-valued CSPs was fully classified by Thapper and \v{Z}ivn\'y.

An algebraic necessary condition for tractability of a general-valued CSP with a fixed constraint language was recently given by Kozik and Ochremiak. As our main result, we prove that if a constraint language satisfies this algebraic necessary condition, and
the feasibility CSP (i.e. the problem of deciding whether a given instance has a feasible solution) corresponding to the VCSP with this language is tractable,
then the VCSP is tractable.
The algorithm is a simple combination of the assumed algorithm for the feasibility CSP and the standard LP relaxation.
As a corollary, we obtain that a dichotomy for ordinary CSPs would imply a dichotomy for general-valued CSPs.
\end{abstract}

\begin{keywords}
  Valued constraint satisfaction problem, complexity, dichotomy, fractional polymorphism
\end{keywords}

\begin{AMS}
  68Q25
\end{AMS}

\section{Introduction}\label{sec:intro}

Computational problems from many different areas involve finding an assignment
of labels to a set of variables, where that assignment must satisfy some specified
feasibility conditions and/or optimize some specified objective function.
In many such problems, the feasibility conditions are local and also the objective function can be represented as a sum of functions,
each of which depends on some subset of the variables. Examples include:
Gibbs energy minimization, Markov Random Fields (MRF),
Conditional Random Fields (CRF), Min-Sum Problems,
Minimum Cost Homomorphism,
Constraint Optimization Problems (COP) and Valued Constraint Satisfaction Problems
(VCSP)~\cite{Boykov98:markov,Crama11:book,Lauritzen96,Rossi06:handbook,Wainwright08:graphical},.

The constraint satisfaction problem  provides a common framework for many
theoretical and practical problems in computer science~\cite{Creignou01:book,Rossi06:handbook}. An
instance of the \emph{constraint satisfaction problem} (CSP) consists of a
collection of variables that must be assigned labels from a given domain subject
to specified constraints~\cite{Montanari74:constraints}. The CSP is equivalent
to the problem of evaluating conjunctive queries on
databases~\cite{Kolaitis00:conjunctive}, and to the homomorphism problem for relational
structures~\cite{Feder98:monotone}. The CSP deals only with the feasibility
issue: can all constraints be satisfied simultaneously?

There are several natural optimization versions of the CSP:
{\sc Max CSP} (or {\sc Min CSP}) where the goal is to find the assignment maximizing the number of satisfied constraints (or minimizing the number
of unsatisfied constraints)~\cite{Cohen05:supermodular,Creignou01:book,Jonsson06:maxcsp3,Jonsson11:mincsp4},
problems like {\sc Max-Ones} and {\sc Min-Hom} where the constraints must be satisfied and some additional function
of the assignment is to be optimized~\cite{Creignou01:book,Jonsson08:maxsol,Takhanov10:dichotomy},
and, the most general version, {\em valued CSP} or VCSP (also known as soft CSP),  where each combination of values for variables in a constraint has a cost
and the goal is to minimize the aggregate cost~\cite{Cohen13:algebraic,cohen06:complexitysoft,kolmogorov15:power,tz13:stoc}.
Thus, an instance of the VCSP amounts to minimizing a sum of functions, each depending on a subset of variables.
By using infinite costs to indicate infeasible combinations,  VCSP can model both feasibility and optimization aspects and so considerably generalises all the problems mentioned above~\cite{Cohen13:algebraic,cohen06:complexitysoft,Krokhin17:new_survey}.
There is much activity and very strong results concerning various aspects of approximability of (V)CSPs (see e.g.~\cite{Barto16SIAM:robust,Brown-Cohen:ICALP16,Chan16:approximate,Creignou01:book,Deineko08:constants,Ene13:local,Hastad08:2csp,Raghavendra08:optimal} for a small sample), but in this paper we focus on solving VCSPs to optimality.

We assume throughout the paper that P $\ne$ NP.
Since all the above problems are NP-hard in full generality, a major line of research in CSP tries to identify
the tractable cases of such problems (see books/surveys~\cite{Cohen06:handbook,Creignou01:book,Creignou08:complexity,Krokhin17:new_survey}), the primary motivation being the general picture
rather than specific applications.
The two main ingredients of a constraint are (a) variables to which it is applied and (b) relations/functions specifying
the allowed combinations of values or the costs for all combinations.
Therefore, the main types of restrictions on CSP are (a) {\em structural} where
the hypergraph formed by sets of variables appearing in individual constraints is restricted~\cite{Gottlob09:tractable,Marx13:tractable}, and (b) {\em language-based} where the constraint language, i.e. the set of relations/functions that can appear in constraints, is fixed (see, e.g.~\cite{bulatov05:classifying,Cohen06:handbook,Creignou01:book,Feder98:monotone,tz13:stoc}). The ultimate sort of results in these directions are {\em dichotomy} results, pioneered by~\cite{Schaefer78:complexity},
which characterise the tractable restrictions and show that the rest are as hard as the corresponding general problem
(which cannot generally be taken for granted).
The language-based direction is considerably more active than the structural one, there are many partial language-based dichotomy results, e.g.~\cite{bulatov06:3-elementjacm,Bulatov11:conservative,cohen06:complexitysoft,Creignou01:book,Jonsson06:maxcsp3,Jonsson11:mincsp4,Kolmogorov13:conservative,Takhanov10:dichotomy}, but many central questions are still open.
In this paper, we study VCSPs with a fixed constraint language on a finite domain, and all further discussion concerns only such CSPs and VCSPs.

{\bf Related Work.} The CSP Dichotomy Conjecture, stating that each CSP is either tractable or NP-hard, was first formulated by Feder and Vardi~\cite{Feder98:monotone}. The universal-algebraic approach to this problem was discovered in~\cite{bulatov05:classifying,Jeavons98:algebraic,Jeavons97:closure}, and the precise boundary between the tractable cases and NP-hard cases was conjectured in algebraic terms in~\cite{bulatov05:classifying}, in what is now known as the Algebraic CSP Dichotomy Conjecture (see Conjecture~\ref{con:AlgCSPDich}). The hardness part was proved in~\cite{bulatov05:classifying}, and it is the tractability part that is the essence of the conjecture. This conjecture is still open in full generality and is the object of much investigation, e.g.~\cite{Barto16:collapse,Barto12:absorbing,Barto14:jacm,barto11:lics,barto09:siam,bulatov05:classifying,Bulatov11:conservative,Cohen06:handbook,Idziak10:tractability}. It is known to hold for domains with at most 3 elements~\cite{bulatov06:3-elementjacm,Schaefer78:complexity}, for smooth digraphs~\cite{barto09:siam}, and for the case when all unary relations are available~\cite{barto11:lics,Bulatov11:conservative}.  The main two polynomial-time algorithms used for CSPs are based one on local consistency (``bounded width'') and the other on compact representation of solution sets (``few subpowers''), and their applicability (in pure form) is fully characterized in~\cite{Barto16:collapse,Barto14:jacm} and~\cite{Idziak10:tractability}, respectively.

At the opposite (to CSP) end of the VCSP spectrum are the finite-valued CSPs, in which functions do not take infinite values. In such VCSPs, the feasibility aspect is trivial, and one has to deal only with the optimization issue.
One polynomial-time algorithm that solves tractable finite-valued CSPs is based on the so-called basic linear programming (BLP) relaxation,
and its applicability (also for the general-valued case) was fully characterized in~\cite{kolmogorov15:power} (see Theorem~\ref{thm:BLP}).  The complexity of finite-valued CSPs was completely classified in~\cite{tz13:stoc}, where it is shown that all finite-valued CSPs not solvable by BLP are NP-hard.

For general-valued CSPs, full classifications are known for the Boolean case (i.e., when the domain is two-element)~\cite{cohen06:complexitysoft}
and also for the case when all 0-1-valued unary cost functions are available~\cite{Kolmogorov13:conservative}.
The algebraic approach to the CSP was extended to VCSPs in~\cite{Cohen13:algebraic,Cohen06:algebraic,cohen06:complexitysoft,Kozik15:ICALP}, and was also key to
much progress.
An algebraic necessary condition for a VCSP to be tractable was recently proved by Kozik and Ochremiak in~\cite{Kozik15:ICALP}, where this condition was also conjectured to be sufficient (see Theorem~\ref{thm:VCSPhard} and Conjecture~\ref{con:VCSPtractable} below).
This conjecture can  be called the Algebraic VCSP Dichotomy Conjecture, and it is a generalization of the corresponding conjecture for CSP.
A large family of VCSPs satisfying the necessary condition from~\cite{Kozik15:ICALP} has recently been shown tractable via a low-level Sherali-Adams hierarchy relaxation~\cite{Thapper15:Sherali}.

Our proof uses the technique of ``lifting a language'' introduced in~\cite{Kolmogorov2015:hybrid}.

{\bf Our Contribution.} We completely classify the complexity of VCSPs with a fixed constraint language modulo the complexity of CSPs
(see Theorem~\ref{thm:main}). Clearly, for a VCSP to be tractable, it is necessary that the corresponding feasibility CSP is tractable. We prove that any VCSP satisfying this necessary condition and the necessary condition of Kozik and Ochremiak is tractable. The polynomial-time algorithm that solves such VCSP is a simple combination of the (assumed) polynomial-time algorithm for the feasibility CSP and BLP  (see Theorem~\ref{thm:main-alg}). Thus, our dichotomy theorem generalizes the dichotomy for finite-valued CSPs from~\cite{tz13:stoc}, and, with the help of the CSP tractability result from~\cite{Barto14:jacm}, it also implies the tractability of VCSPs
shown tractable in~\cite{Thapper15:Sherali,Thapper16:powerSA}.

Our classification result has the following several unexpected features. One is that the algorithm that solves all tractable VCSPs uses 
feasibility checking only as a black-box. The other is that the algorithm is simply feasibility preprocessing followed by BLP - this was unexpected, for example, because higher levels of the Sherali-Adams hierarchy were used in~\cite{Thapper15:Sherali} to prove tractability of a wide class
of VCSPs. Finally, the proof of our result avoids structural universal algebra present in most CSP classifications and in~\cite{Kozik15:ICALP, Kozik15:algebraic}. 

Our result says that any dichotomy for CSP (not necessarily the one predicted by the Algebraic CSP Dichotomy Conjecture)
will imply a dichotomy for VCSP.
However, if the Algebraic CSP Dichotomy Conjecture holds then the necessary algebraic condition of Kozik and Ochremiak
guarantees tractability of the feasibility CSP (see~\cite{Kozik15:ICALP}), implying that this algebraic condition alone is necessary and sufficient for tractability of a VCSP, and also that all the intractable VCSPs are NP-hard.
In particular, the Algebraic CSP Dichotomy Conjecture implies the Algebraic VCSP Dichotomy Conjecture.

On the technical level, some of our proofs (e.g.\ those in Section~\ref{sec:thsym}) use techniques established in~\cite{kolmogorov15:power,tz13:stoc}, while others (e.g.\ all of Section~\ref{sec:spec}) introduce new technical ideas.

Our result is the culmination of research into complexity classification of language-based VCSPs in the sense that its scope cannot be widened, the yet unclassified part of the VCSP landscape is the (non-valued) CSP. One could, of course, extend the classification framework by looking at
other forms of algorithmic tractability, say, approximation algorithms or fixed-parameter tractability, and such extensions will have many open questions. It is also interesting to obtain tighter and more explicit characterisations for important special cases of VCSP (as done in~\cite{Thapper16:powerSA}, for example), by deriving them from our main result or otherwise.



\section{Preliminaries}

\subsection{Valued Constraint Satisfaction Problems}\label{sec:VCSPdef}

Throughout the paper, let $D$ be a fixed finite set and let $\Qc=\mathbb{Q}\cup\{\infty\}$ denote the set
of rational numbers with (positive) infinity.

\begin{definition}
We denote the set of all functions $f:D^n \rightarrow \Qc$ by $\CostF_D^{(n)}$ and let $\CostF_D=\bigcup_{n\ge 1}{\CostF_D^{(n)}}$.
We will often call the functions in $\CostF_D$ {\em cost functions} over $D$.
For every cost function $f\in\CostF_D^{(n)}$, let $\dom f=\{x\mid f(x)<\infty\}$. Note that $\dom f$ can be considered both as an $n$-ary relation
and as a $n$-ary function such that $\dom f(x)=0$ if and only if $f(x)$ is finite.
\end{definition}

We will call the
set $D$ the \emph{domain},
elements of $D$ \emph{labels} (for variables),
and say that the cost functions in $\CostF_D$ take \emph{values}.
Note that in some papers on VCSP, e.g.~\cite{Cohen13:algebraic,Thapper15:Sherali}, cost functions are called weighted relations.

\begin{definition}
An instance of the {\em valued constraint satisfaction problem} (VCSP)
is a function from $D^V$ to $\Qc$ given by
\begin{equation}\label{eq:VCSPinst}
f_{\calI}(x)=\sum_{t\in T}{f_t(x_{v(t,1)},\ldots,x_{v(t,n_t)})},
\end{equation}
where $V$ is a finite set of variables, $T$ is a finite set of constraints, each constraint is specified
by a cost function $f_t$ of arity $n_t$ and indices $v(t,k)$, $k=1,\ldots,n_t$.
The goal is to find an {\em assignment} (or {\em labeling}) $x\in D^V$ that minimizes $f_{\calI}$.
The value of an optimal assignment is denoted by $\Opt(\calI)$.
\end{definition}

\begin{definition}
Any set $\Gamma\subseteq \CostF_D$ is called
a \emph{valued constraint language} over $D$, or simply a \emph{language}.
We will denote by \vcsp{\Gamma} the class of all VCSP instances in which the
constraint functions $f_t$ are all contained in $\Gamma$.
Instances of \vcsp{\Gamma} will sometimes be called just \emph{$\Gamma$-instances}.
\end{definition}

This framework subsumes many other frameworks studied earlier and captures many specific well-known problems,
including {\sc $k$-Sat}, {\sc Graph $k$-Colouring}, {\sc Max Cut}, {\sc Min Vertex Cover} and others (see~\cite{Krokhin17:new_survey}).
Note that if every function in $\Gamma$ takes values in $\{0,\infty\}$ (such functions are often called {\em crisp}) then $\VCSP\Gamma$ is a pure feasibility problem, commonly known as $\CSP\Gamma$.

The main goal of our line of research is to classify the complexity of problems $\VCSP\Gamma$. Problems $\CSP\Gamma$ and $\VCSP\Gamma$ are
called tractable if, for each finite $\Gamma'\subseteq \Gamma$, $\VCSP{\Gamma'}$ is tractable. Also, $\VCSP\Gamma$ is called NP-hard if, for some finite $\Gamma'\subseteq \Gamma$, $\VCSP{\Gamma'}$ is NP-hard. One advantage of defining tractability in terms of finite subsets is that the
tractability of a valued constraint language is independent of whether the cost
functions are represented explicitly (say, via full tables of values, or via tables for
the finite-valued parts) or implicitly (via oracles). Following~\cite{bulatov05:classifying}, we say that $\VCSP\Gamma$ is {\em globally tractable}
there is a polynomial-time algorithm solving $\VCSP\Gamma$, assuming all functions in instances are given by full tables of values.
For CSPs, there is no example of $\CSP\Gamma$ that is tractable, but not globally tractable, and it is conjectured in~\cite{bulatov05:classifying}
that no such $\CSP\Gamma$ exists.


\subsection{Polymorphisms, Expressibility, Cores}

Let $\calO_D^{(m)}$ denote the set of all operations $g:D^m\rightarrow D$ and let
 $\calO_D=\cup_{m\ge 1}{\calO_D^{(m)}}$. When $D$ is clear from the context, we will sometimes write simply $\calO^{(m)}$ and $\calO$.

Any language $\Gamma$ defined on $D$ can be associated
with a set of operations on $D$,
known as the polymorphisms of $\Gamma$, which allow one to combine (often in a useful way)
several feasible assignments into a new one.
\begin{definition}
\label{def:polymorphism}
An operation $g\in \calO_D^{(m)}$ is a \emph{polymorphism} of a cost function $f\in\CostF_D$ if,
for any $x^1,x^2,\ldots,x^m \in \dom f$,
we have that $g(x^1,x^2,\ldots,x^m)\in \dom f$ where $g$ is applied component-wise.

For any valued constraint language $\Gamma$ over a set $D$,
we denote by $\pol(\Gamma)$ the set of all operations on $D$ which are polymorphisms of every
$f \in \Gamma$.
\end{definition}
%
\begin{example} Let $f\in \CostF^{(n)}_{\{0,1\}}$ be such that $f(1,\ldots,1,0)=\infty$ and $f(a_1,\ldots,a_n)=0$ otherwise. It corresponds to the Horn clause $(x_1\vee \ldots \vee x_{n-1} \vee \overline{x_n})$. Then it is well known and easy to see that the binary operation $\min\in \calO_{\{0,1\}}$ is a polymorphism of $f$.
\end{example}

Clearly, if $g$ is a polymorphism of a cost function $f$, then $g$ is also a polymorphism of $\dom f$.
For $\{0,\infty\}$-valued functions, which naturally correspond to relations, the notion
of a polymorphism defined above coincides with the standard notion of a polymorphism for relations.
Note that the projections (aka dictators), i.e. operations of the form $e_n^i(x_1,\ldots,x_n)=x_i$,  are polymorphisms of all valued constraint languages.
Polymorphisms play the key role in the algebraic approach to the CSP, but, for VCSPs, more general constructs are necessary, which we now define.

\begin{definition}
An $m$-ary \emph{fractional operation} $\omega$ on $D$ is a probability distribution on $\calO_D^{(m)}$.
The support of $\omega$ is defined as $\supp(\omega)=\{g\in \calO_D^{(m)}\mid \omega(g)>0\}$.
\end{definition}

\begin{definition}\label{def:fpol}
A $m$-ary fractional operation $\omega$ on $D$ is said to be a \emph{fractional polymorphism} of a
cost function $f\in \CostF_D$ if, for any $x^1,x^2,\ldots,x^m \in \dom f$,
we have
\begin{equation}
\sum_{g\in\supp(\omega)}{\omega(g)f(g(x^1,\ldots,x^m))} \le \frac{1}{m}(f(x^1)+\ldots+f(x^m)).
\label{eq:wpol-dist}
\end{equation}

For a constraint language $\Gamma$, $\fPol\Gamma$ will denote the set of all fractional operations that are fractional polymorphisms
of each function in $\Gamma$. Also, let $\fPolplus\Gamma=\{g\in \calO_D\mid g\in\supp(\omega), \omega\in\fPol\Gamma\}$.
\end{definition}

The intuition behind the notion of fractional polymorphism is that it allows one to combine several feasible assignments
into new feasible assignments so that the expected value of a  new assignment (non-strictly) improves the average value of the original assignments.

\begin{example} 
Suppose that $\omega$ is a binary fractional operation on $D=\{0,1\}$ such that $\omega(\min)=\omega(\max)=1/2$. Then it is well-known and easy to check that the finite-valued functions with fractional polymorphism $\omega$ are the submodular functions.
Moreover, functions with this fractional polymorphism that are not necessarily finite-valued precisely correspond to submodular functions defined on a ring family.
\end{example}

More examples of fractional polymorphisms can be found in~\cite{Krokhin17:new_survey,kolmogorov15:power,tz13:stoc}.

%

We remark that, in some papers (e.g., in~\cite{Cohen13:algebraic}), fractional polymorphisms (and closely related
objects called weighted polymorphisms) are defined as rational-valued functions, which is sufficient for analysing the complexity
of VCSPs with finite constraint languages. However, real-valued fractional polymorphisms are necessary to analyse infinite constraint languages~\cite{Fulla:2016:Galois,Kozik15:algebraic,tz13:stoc}.

The key observation in the algebraic approach to (V)CSP is that
neither the complexity nor the algebraic properties of a language $\Gamma$ change when functions ``expressible'' from $\Gamma$ in a certain way are added to it.

\begin{definition}
For a constraint language $\Gamma$, let $\langle\Gamma\rangle$ denote the set of all functions $f(x_1\zd x_k)$ such that,
for some instance $\calI$ of $\VCSP\Gamma$ with objective function $f_{\calI}(x_1\zd x_k,x_{k+1}\zd x_n)$,
we have \[f(x_1\zd x_k)=\min_{x_{k+1}\zd x_n}{f_{\calI}(x_1\zd x_k,x_{k+1}\zd x_n)}.\]
We then say that $\Gamma$ {\em expresses} $f$, and call $\langle\Gamma\rangle$ the {\em expressive power} of $\Gamma$.
\end{definition}

\begin{lemma}[\cite{Cohen06:algebraic,cohen06:complexitysoft}]
\label{lem:exppower}
Let $f\in \langle\Gamma\rangle$. Then
\begin{enumerate}
\item if $\omega\in\fPol\Gamma$ then $\omega$ is a fractional polymorphism of $f$ and of $\dom f$;
\item $\VCSP\Gamma$ is tractable if and only if $\VCSP{\Gamma\cup\{f,\dom f\}}$ is tractable;
\item $\VCSP\Gamma$ is NP-hard if and only if $\VCSP{\Gamma\cup\{f,\dom f\}}$ is NP-hard.
\end{enumerate}
\end{lemma}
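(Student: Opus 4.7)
The plan is to handle the three parts in order, with part~1 doing the real work and parts~2--3 following by the standard substitution trick.

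For part~1, spell out the definition of expressibility: $f(x)=\min_{y}f_{\calI}(x,y)$ where $f_\calI=\sum_{t\in T}f_t$ with each $f_t\in\Gamma$. Fix $x^1,\ldots,x^m\in\dom f$ and, for each $i$, choose a witness $y^i$ so that $f_\calI(x^i,y^i)=f(x^i)<\infty$ (in particular $(x^i,y^i)\in\dom f_t$ for every $t$). Now apply $\omega$ componentwise to the extended tuples $(x^i,y^i)$. Since $g((x^1,y^1),\ldots,(x^m,y^m))=(g(x^1,\ldots,x^m),g(y^1,\ldots,y^m))$ componentwise, and since $\omega$ is a fractional polymorphism of each $f_t$, summing the fractional polymorphism inequality~\eqref{eq:wpol-dist} over $t\in T$ yields
\[
\sum_{g\in\supp(\omega)}\!\omega(g)\,f_\calI\bigl(g(x^1,\ldots,x^m),\,g(y^1,\ldots,y^m)\bigr)\;\le\;\frac{1}{m}\sum_{i=1}^m f_\calI(x^i,y^i)\;=\;\frac{1}{m}\sum_{i=1}^m f(x^i).
\]
Finally, using the trivial bound $f(z)\le f_\calI(z,w)$ for every $w$, the left-hand side dominates $\sum_g\omega(g)f(g(x^1,\ldots,x^m))$, giving the desired fractional polymorphism inequality for $f$. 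For $\dom f$, observe that the inequality just proved keeps the left-hand side finite whenever the right-hand side is, so $g(x^1,\ldots,x^m)\in\dom f$ for every $g\in\supp(\omega)$; the fractional polymorphism inequality for the $\{0,\infty\}$-valued function $\dom f$ then holds trivially (both sides are $0$ if all $x^i\in\dom f$, and the right-hand side is $\infty$ otherwise).

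For parts~2 and~3, one direction of each is immediate from $\Gamma\subseteq\Gamma\cup\{f,\dom f\}$. For the nontrivial directions, I use the standard ``unfolding'' reduction: given an instance $\calJ$ of $\VCSP{\Gamma\cup\{f,\dom f\}}$, replace each constraint whose cost function is $f$ (resp.\ $\dom f$) on scope $(z_1,\ldots,z_k)$ by a fresh copy of the defining instance $\calI$ of $f$, identifying its first $k$ variables with $z_1,\ldots,z_k$ and introducing fresh auxiliary variables for $x_{k+1},\ldots,x_n$. Because $\min_{x_{k+1},\ldots,x_n}f_\calI=f$ and the auxiliary variables are independent across constraints, the resulting instance over $\Gamma$ has the same optimum as $\calJ$, and an optimal assignment of it restricts to an optimal assignment of $\calJ$. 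This construction is polynomial in $|\calJ|$ (the instance $\calI$ has fixed size), giving both the tractability reduction and the polynomial-time reduction certifying NP-hardness transfer.

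The only genuinely delicate point is the argument in part~1, and specifically the handling of the infinite values: one must verify that the witnesses $y^i$ can be chosen so that all $f_t(x^i,y^i)$ are simultaneously finite, which holds because $x^i\in\dom f$ forces the minimum defining $f(x^i)$ to be achieved on a tuple inside $\dom f_t$ for every $t\in T$ at once. Everything else is bookkeeping with~\eqref{eq:wpol-dist}.
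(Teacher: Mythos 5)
The paper does not prove this lemma; it is an external citation to \cite{Cohen06:algebraic,cohen06:complexitysoft}, so there is no internal argument to compare your proof against. That said, your proof of part~1 is correct and is the standard argument: fix minimizing witnesses $y^i$ (which exist because the minimum is over a finite set and $f(x^i)<\infty$ forces each $f_t$-term of $f_\calI(x^i,y^i)$ to be finite), apply the fractional polymorphism of each $f_t$ coordinate-wise, sum over $t$, and bound from below by $f$ using the min-definition; closure of $\dom f$ then follows because the right-hand side is finite and $\omega(g)>0$ on $\supp(\omega)$.

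However, parts~2 and~3 have a genuine gap in the treatment of $\dom f$. Your unfolding replaces \emph{both} $f$-constraints and $\dom f$-constraints by copies of the defining instance $\calI$ of $f$. For an $f$-constraint this is exact: $\min_{y}f_\calI(z,y)=f(z)$. But for a $\dom f$-constraint the same block contributes $f(z)$ at the optimum rather than $\dom f(z)\in\{0,\infty\}$, so the resulting $\Gamma$-instance does \emph{not} have the same optimum as $\calJ$ whenever $f$ takes nonzero finite values. Note that there is in general no instance of $\VCSP\Gamma$ whose min-projection equals $\dom f$ (that is why the lemma explicitly adds $\dom f$ rather than appealing to $\dom f\in\closure\Gamma$, which can fail even for a single unary $f$), so one cannot simply ``use the defining instance of $\dom f$'' either. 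A correct repair is more delicate, e.g.\ scale up all non-$\dom f$ constraints by a large integer $M$ (repeating them $M$ times) so that the spurious contributions from the $f_\calI$-blocks substituting for $\dom f$-constraints become a strict tiebreaker, and then recover $\Opt(\calJ)$ from $\Opt(\calJ')$ by rounding; making this work requires a computable lower bound on the gap between distinct attainable values of the scaled objective, which is available since all weights are rationals of bounded bit-size. As written, though, your reduction does not preserve the optimum and hence does not establish parts~2 and~3.
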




The dichotomy problem for VCSPs can be reduced to a class of constraint languages called rigid cores, defined below.
Apart from reducing the cases that need to be considered, this reduction enabled the use of much more powerful results
from universal algebra than what can be done without this restriction (see, e.g.~\cite{Kozik15:algebraic}).

For a subset $D'\subseteq D$, let $u_{D'}$ be the function defined as follows: $u_{D'}(d)=0$ if $d\in D'$ and $u_{D'}(d)=\infty$ otherwise.
We write $u_d$ for $u_{\{d\}}$. Let $\calC_D=\{\{u_d\}\mid d\in D\}$.

\begin{lemma}[\cite{Kozik15:algebraic}]
For any valued constraint language $\Gamma'$ on a finite set $D'$, there is a subset $D\subseteq D'$ and a valued constraint language
$\Gamma$ on $D$ such that $\calC_D\subseteq \Gamma$ and
the problems $\VCSP{\Gamma'}$ and $\VCSP\Gamma$ are polynomial-time equivalent.
\end{lemma}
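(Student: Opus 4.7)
The plan is the standard two-phase core reduction, transposed from the algebraic theory of CSPs to the valued setting: first restrict the domain to a minimal unary core $D$, then enrich the language with the constants $\calC_D$ without increasing complexity.

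Phase 1. Among unary operations in $\Polplus{\Gamma'}$, pick $g$ with $|g(D')|$ minimum; set $D:=g(D')$, and let $\Gamma_0$ denote the language on $D$ obtained by restricting each $f\in\Gamma'$ of arity $n$ to $D^n$. Fix a unary $\omega\in\fPol\Gamma'$ with $g\in\supp\omega$. Applying Definition~\ref{def:fpol} with $m=1$ termwise to (\ref{eq:VCSPinst}) gives
\[\sum_{g'\in\supp\omega}\omega(g')\,f_\calI(g'\circ x)\ \le\ f_\calI(x)\quad\text{for every }x\in(D')^V,\]
so some $g'\in\supp\omega$ weakly decreases the cost. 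Iterating, together with the (easily checked) closure of $\Polplus{\Gamma'}$ under unary composition and the minimality of $|D|$, pushes every optimum into $D^V$. This yields polynomial-time equivalence between $\VCSP{\Gamma'}$ and $\VCSP{\Gamma_0}$ via restrict/lift reductions, with lifting composing any solution with $g$ to land back in $D^V$.

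Phase 2. Let $\Gamma:=\Gamma_0\cup\calC_D$. The inclusion $\VCSP{\Gamma_0}\le_P\VCSP\Gamma$ is immediate, and the converse $\VCSP\Gamma\le_P\VCSP{\Gamma_0}$ is the technical heart of the argument and the main obstacle. The plan is to first deduce from Phase 1 that every unary $h\in\Polplus{\Gamma_0}$ is a bijection on $D$: otherwise $h\circ g$ would lie in $\Polplus{\Gamma'}$ with image strictly smaller than $D$, contradicting the choice of $g$. Hence the unary part of $\Polplus{\Gamma_0}$ is a finite permutation group $\mathcal{G}$ of $D$. Using this rigidity, a standard gadget simulates each constant constraint $u_d(v)$ by taking $|\mathcal{G}|$ parallel copies of the instance and gluing them, via expressibility (Lemma~\ref{lem:exppower}), so that the group action is enforced across copies; the original constant constraints then pin a single orbit representative, after which the constants may be removed. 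Verifying that this gadget is polynomial in size and simultaneously preserves feasibility, finite optimum values, and infinite costs is classical for rigid cores of CSPs and transfers to the valued setting through the fractional-polymorphism calculus of Lemma~\ref{lem:exppower}, but requires care in handling the interplay between $\mathcal{G}$, the group-invariant cost functions, and the additive structure of (\ref{eq:VCSPinst}). Chaining Phases 1 and 2 gives the claimed polynomial-time equivalence of $\VCSP{\Gamma'}$ and $\VCSP\Gamma$.
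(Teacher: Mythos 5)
This lemma is not proved in the paper; it is cited from~\cite{Kozik15:algebraic}, with the paper only recording the construction of the rigid core (restrict to the minimal image $D=g'(D')$ of a unary $g'\in\Polplus{\Gamma'}$ and then adjoin $\calC_D$). Your outline follows that construction, but it is not a proof: Phase~2, the reduction $\VCSP{\Gamma_0\cup\calC_D}\le_P\VCSP{\Gamma_0}$, is what you yourself call ``the technical heart of the argument and the main obstacle,'' and you then dispose of it by appeal to ``a standard gadget'' that ``requires care.'' In the CSP setting this reduction over a core is classical, but the valued version is genuinely harder: unary operations in $\Polplus{\Gamma_0}$ preserve feasibility, yet they do \emph{not} individually preserve cost --- cost is preserved only in expectation over the support of a fractional polymorphism. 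Your plan of $|\mathcal{G}|$ parallel copies glued via expressibility is a plausible direction, but you do not exhibit the gadget, show it is expressible from $\Gamma_0$ in polynomial size, or argue that it reproduces the optimum (including the infinite case). Asserting that the CSP argument ``transfers to the valued setting through the fractional-polymorphism calculus of Lemma~\ref{lem:exppower}'' without carrying it out is precisely the missing content.

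Two subsidiary claims are also stated as obvious when they are not. The ``easily checked'' closure of $\Polplus{\Gamma'}$ under unary composition is the statement that $\Polplus{\Gamma'}$ is a clone, which requires a nontrivial superposition construction on fractional polymorphisms. Likewise, deducing that every unary $h\in\Polplus{\Gamma_0}$ is a bijection from $h\circ g\in\Polplus{\Gamma'}$ needs an argument: $h$ is supported by a fractional polymorphism of the \emph{restricted} language $\Gamma_0$ over $D$, and passing from that to a fractional polymorphism of $\Gamma'$ over $D'$ whose support contains $h\circ g$ is itself a superposition fact that you must prove. Finally, Phase~1 can be tightened and, as written, does not clearly land in $D$ (only in some set of size $|D|$): if $x$ attains a finite optimum then the $m=1$ fractional polymorphism inequality together with optimality forces $f_\calI(h\circ x)=f_\calI(x)$ for \emph{every} $h$ in the support, in particular $g\circ x\in D^V$ is already optimal, so a single application of $g$ suffices and no iteration is needed.
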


This language $\Gamma$ is called the \emph{rigid core} of $\Gamma'$, and it can be obtained from $\Gamma'$ as follows.
Let $g'$ be a unary operation on $D'$ with minimum $|g'(D')|$ among all unary operations $g'\in \fPolplus{\Gamma'}$. Then $D$ is set to be $g'(D')$ and $\Gamma$ is set to be $\{f|_D: f\in\Gamma'\}\cup\calC_D$.
Thus, the intuition behind moving to the rigid core is that (a) one removes labels from the domain that
can always be (uniformly) replaced in any solution to an instance without increasing its value, and (b) one allows constraints
of the form $u_d$ that can be used to fix labels for variables, leading to applicability of more powerful algebraic
results.

\subsection{Cyclic and symmetric operations}

Several types of operations play a special role in the algebraic approach to (V)CSP.

\begin{definition}
An operation $g\in \calO_D^{(m)}$, $m\ge 2$, is called
\begin{itemize}
\item \emph{idempotent} if $g(x,\ldots,x)=x$ for all $x\in D$;
\item \emph{Taylor} if, for each $1\le i\le m$, it satisfies an identity of the form $g(\vartriangle_1,\vartriangle_2,\ldots,\vartriangle_m)=g(\square_1,\square_2,\ldots,\square_m)$ where all $\vartriangle_j,\square_j$ are in $\{x,y\}$ and $\vartriangle_i\ne \square_i$.
\item \emph{cyclic} if $g(x_1,x_2,\ldots,x_m)=g(x_2,\ldots,x_m,x_1)$ for all $x_1,\ldots,x_m\in D$;
\item \emph{symmetric} if $g(x_1,x_2,\ldots,x_m)\!=\!g(x_{\pi(1)},x_{\pi(2)},\ldots,x_{\pi(m)})$ for all $x_1,\ldots,x_m\!\in\! D$, and any permutation $\pi$ on $[m]$.
\end{itemize}
A fractional operation $\omega$ is said to be idempotent/cyclic/symmetric if all operations in $\supp(\omega)$ have the corresponding property.
\end{definition}

It is well known and easy to see that all polymorphisms and fractional polymorphisms of a rigid core are idempotent.

%

The following lemma is contained in the proof of Theorem 50 in~\cite{Kozik15:algebraic}.

\begin{lemma}\label{lem:trans}
Let $\Gamma$ be a rigid core on a set $D$. Then the following are equivalent:
\begin{enumerate}
\item $\fPolplus\Gamma$ contains a Taylor operation of arity at least 2;
\item $\Gamma$ has a cyclic fractional polymorphism of (some) arity at least 2;
\item $\Gamma$ has a cyclic fractional polymorphism of every prime arity $p>|D|$.
\end{enumerate}
\end{lemma}

The following theorem is Corollary 51 from~\cite{Kozik15:algebraic}.

\begin{theorem}[\cite{Kozik15:algebraic}]\label{thm:VCSPhard}
Let $\Gamma$ be a valued constraint language that is a rigid core. If $\fPolplus\Gamma$ does not contain a Taylor operation then $\VCSP\Gamma$ is NP-hard.
\end{theorem}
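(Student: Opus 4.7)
The plan is to derive this directly from Theorem~30 of~\cite{Kozik15:algebraic}, which asserts NP-hardness of $\VCSP\Gamma$ under the hypothesis that the rigid core $\Gamma$ admits no Taylor operation of arity at least $2$ in $\Polplus\Gamma$. The bridge between that statement and the one we need is provided precisely by Lemma~\ref{lem:trans}: its equivalence $(1)\Leftrightarrow(4)$ says that, for a rigid core, the existence of a Taylor operation of arity $\ge 2$ in $\Polplus\Gamma$ is equivalent to the existence of a cyclic fractional polymorphism of $\Gamma$ of arity $\ge 2$. Contrapositively, the hypothesis of Theorem~\ref{thm:VCSPhard} that there is no cyclic fractional polymorphism of arity $\ge 2$ is equivalent to the absence of any Taylor operation in $\Polplus\Gamma$, which is exactly the hypothesis of Theorem~30 of~\cite{Kozik15:algebraic}. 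Applying that theorem then yields the desired NP-hardness of $\VCSP\Gamma$.

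What remains to verify is only that Lemma~\ref{lem:trans} is applicable in our setting. Its only hypothesis is that $\Gamma$ is a rigid core, which matches the hypothesis of Theorem~\ref{thm:VCSPhard}. Moreover, the chain of equivalences in the lemma is assembled from three ingredients already in the literature: the Barto--Kozik result of~\cite{Barto12:absorbing} giving $(1)\Leftrightarrow(2)$, Proposition~39 of~\cite{Kozik15:algebraic} giving $(2)\Leftrightarrow(3)$, and an elementary passage between weighted and fractional polymorphisms sharing the same support giving $(3)\Leftrightarrow(4)$. Since we need only the implication from "no cyclic fractional polymorphism" to "no Taylor operation," we use the chain in a single direction, and each of the three individual steps applies to rigid cores without any further assumption.

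The main conceptual obstacle is not on our side but is hidden inside~\cite{Kozik15:algebraic}: their argument must pass from the algebraic absence of a Taylor operation in the weighted polymorphism structure of $\Gamma$ to a concrete polynomial-time reduction from a known NP-hard problem. That argument combines the classical Bulatov--Jeavons--Krokhin hardness theorem for relational CSPs with expressibility (analogous to Lemma~\ref{lem:exppower}) and gadget constructions tailored to the valued framework, so that finite-valued cost functions from $\Gamma$ can be used as weighted gadgets enforcing the hard relational instance. We take this entire argument as a black box. For the purpose of proving Theorem~\ref{thm:VCSPhard} as stated, the proof reduces to one sentence: apply Theorem~30 of~\cite{Kozik15:algebraic} after translating its hypothesis via the equivalence $(1)\Leftrightarrow(4)$ of Lemma~\ref{lem:trans}.
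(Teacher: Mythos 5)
Your proposal matches the paper's own treatment exactly: the paper explicitly introduces the theorem as ``Theorem 30 from~\cite{Kozik15:algebraic}, restated, using Lemma~\ref{lem:trans}, via cyclic fractional polymorphisms,'' which is precisely your one-line argument of invoking the cited theorem and translating its Taylor-operation hypothesis through the equivalence $(1)\Leftrightarrow(4)$ of Lemma~\ref{lem:trans}. No further verification is needed beyond what you already note, since Lemma~\ref{lem:trans} carries no hypothesis other than $\Gamma$ being a rigid core.
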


Kozik and Ochremiak state a conjecture (which they attribute to L.~Barto) that the above theorem describes all NP-hard valued constraint languages, and all other languages
are tractable. Using Lemma~\ref{lem:trans}, we restate the original conjecture via cyclic fractional polymorphisms.

\begin{conjecture}[\cite{Kozik15:ICALP}]\label{con:VCSPtractable}
Let $\Gamma$ be a valued constraint language that is a rigid core. If $\Gamma$ has a cyclic fractional polymorphism of arity at least 2, then $\VCSP\Gamma$ is tractable.
\end{conjecture}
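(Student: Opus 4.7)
My plan is to reduce Conjecture~\ref{con:VCSPtractable} to the Algebraic CSP Dichotomy Conjecture (Conjecture~\ref{con:AlgCSPDich}) by invoking the paper's main reduction theorem together with Lemma~\ref{lem:trans}. The main theorem of the paper (as advertised in the abstract and in Theorem~\ref{thm:main-alg}) asserts that $\VCSP\Gamma$ is tractable whenever $\Gamma$ admits a cyclic fractional polymorphism of arity at least $2$ \emph{and} $\Feas\Gamma$ is tractable. The first condition is the hypothesis of the conjecture, so to prove the conjecture outright it is enough to derive tractability of $\Feas\Gamma$ from the existence of a cyclic fractional polymorphism alone.

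The derivation I have in mind proceeds as follows. Let $\omega\in\fPol\Gamma$ be cyclic of arity $m\ge 2$. By Lemma~\ref{lem:trans}, condition~(4) implies condition~(2), so $\Polplus\Gamma$ contains a cyclic operation $g$ of arity $\ge 2$. By Definition~\ref{def:polymorphism} and the observation immediately after it, any $g\in\Polplus\Gamma$ preserves $\dom f$ for every $f\in\Gamma$, so $g$ is a genuine polymorphism of the relational language $\{\dom f : f\in\Gamma\}$ underlying $\Feas\Gamma$. Hence $\Pol(\Feas\Gamma)$ contains a cyclic operation of arity $\ge 2$, which by Siggers' theorem and the Barto--Kozik characterization is equivalent to $\Feas\Gamma$ satisfying the Taylor condition; this places $\Feas\Gamma$ squarely on the tractable side of Conjecture~\ref{con:AlgCSPDich}. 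Granting Conjecture~\ref{con:AlgCSPDich}, $\Feas\Gamma$ is tractable, and the paper's main theorem then delivers tractability of $\VCSP\Gamma$.

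The main obstacle, and the reason this strategy falls short of an unconditional proof, is that Conjecture~\ref{con:AlgCSPDich} itself remains open. An unconditional argument would require either a resolution of the CSP Dichotomy Conjecture or a genuinely new algorithmic paradigm for VCSP that avoids first solving $\Feas\Gamma$ — for instance, a unified rounding framework combining bounded-width local consistency, few-subpower compact representations, and LP or low-level Sherali--Adams relaxations, all driven directly by the cyclic fractional polymorphism. Since the known tractability algorithms for CSP are tightly tied to specific algebraic conditions (bounded width, few subpowers), and since BLP already fails to capture the full strength of merely cyclic (as opposed to symmetric of all arities) fractional polymorphisms (Theorem~\ref{thm:BLP}), any direct VCSP-side attack appears to require essentially reconstructing the CSP dichotomy algorithm internally. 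Consequently, the honest status of my proof plan is that Conjecture~\ref{con:VCSPtractable} is equivalent, modulo the paper's main reduction, to Conjecture~\ref{con:AlgCSPDich}, and unconditionally holds precisely in those regimes (domains of size $\le 3$, conservative languages, smooth digraphs, and algebras of bounded width or with few subpowers) where the CSP dichotomy is already established.
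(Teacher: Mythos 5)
The statement you were asked to prove is a conjecture that the paper itself does not prove unconditionally: the paper's contribution is exactly the reduction you describe, namely Theorem~\ref{thm:main} (cyclic fractional polymorphism plus tractable $\CSP{\Feas\Gamma}$ implies tractable $\VCSP\Gamma$) together with the observation that, under the Algebraic CSP Dichotomy Conjecture, the cyclic fractional polymorphism already forces $\CSP{\Feas\Gamma}$ to be tractable (via $\Polplus\Gamma\subseteq\Pol\Gamma=\Pol{(\Feas\Gamma)}$ and Lemma~\ref{lem:trans}). Your conditional derivation, and your honest assessment that Conjecture~\ref{con:VCSPtractable} is thereby reduced to Conjecture~\ref{con:AlgCSPDich} and holds unconditionally only where the CSP dichotomy is known, matches the paper's own position exactly; the only superfluous step is the detour through Siggers/Taylor, since a cyclic polymorphism of arity at least $2$ is already literally the tractability hypothesis of Conjecture~\ref{con:AlgCSPDich} as stated.
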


Note that, for a finite core $\Gamma$ (but with fixed $D$), the above condition can be checked in polynomial time. Indeed,
if $p>|D|$ is some fixed prime number, then it is sufficient to check for a cyclic fractional polymorphism of arity $p$.
Such polymorphisms, by definition, are solutions to a system of linear inequalities. Since the number of cyclic operations of arity $p$ on $D$ is constant, the system will have size polynomial in $\Gamma$ and its feasibility can be decided by linear programming.

For the case when (possibly infinite) $\Gamma$ consists only of $\{0,\infty\}$-valued functions, $\VCSP\Gamma$ is actually a CSP.
For such $\Gamma$, \emph{any} probability distribution on polymorphisms (of the same arity) is a fractional polymorphism.
Then a theorem and a conjecture (the latter now known as the Algebraic CSP Dichotomy Conjecture) equivalent to Theorem~\ref{thm:VCSPhard} and Conjecture~\ref{con:VCSPtractable} were given in~\cite{bulatov05:classifying}. One of several equivalent forms of the Algebraic CSP Dichotomy Conjecture is as follows.

\begin{conjecture}[\cite{bulatov05:classifying,Barto12:absorbing}]
\label{con:AlgCSPDich}
Let $\Gamma$ be a valued constraint language that is a rigid core and that consists of $\{0,\infty\}$-valued functions.
If $\Gamma$ has a cyclic polymorphism of arity at least 2, then $\VCSP\Gamma$ is tractable. Otherwise, $\VCSP\Gamma$ is NP-hard.
\end{conjecture}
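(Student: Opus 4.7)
The plan is to prove the tractability direction; the ``otherwise'' half is immediate from Theorem~\ref{thm:VCSPhard} restricted to relational languages, because for $\{0,\infty\}$-valued $\Gamma$ the uniform distribution concentrated on any cyclic polymorphism is itself a cyclic fractional polymorphism, so absence of the former forces absence of the latter. So assume $\Gamma$ is a rigid core, $\Pol\Gamma$ contains a cyclic operation of arity $\ge 2$, and, by Lemma~\ref{lem:exppower}, that $\Gamma$ is closed under primitive-positive definability (equivalently, $\Gamma=\langle\Gamma\rangle$ restricted to $\{0,\infty\}$-valued functions). All polymorphisms are then idempotent, and by Lemma~\ref{lem:trans} $\Pol\Gamma$ is a Taylor clone, which is the standard algebraic formulation that we must convert into an algorithm.

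The algorithmic backbone would combine the two well-understood paradigms whose pure applicability is already fully characterised. In the \emph{bounded width} regime---weak near-unanimities of all but finitely many arities, equivalently omitting the affine and unary tame-congruence types---the $(k,\ell)$-consistency algorithm solves $\CSP\Gamma$ for fixed $k,\ell$ depending only on $\Gamma$. In the \emph{few subpowers} regime---an edge-term polymorphism, generalising Maltsev, majority, and near-unanimity---compact generator representations of solution sets yield a Gaussian-elimination-style procedure. Neither regime alone covers all Taylor clones, and the hard instances are precisely those exhibiting both phenomena in different parts of the same algebra.

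For the general case I would run a recursive algorithm with simultaneous induction on $|D|$ and on the height of the congruence lattice of the algebra $\mathbf{A}=(D;\Pol\Gamma)$. After establishing an appropriate level of local consistency as a bookkeeping invariant, one tries in order: an absorbing subuniverse $B\subsetneq D$ in the Barto--Kozik sense that is guaranteed to meet every solution component, on which one recurses with a strictly smaller domain; failing that, a maximal congruence $\theta$, so that one solves the block-CSP in $\mathbf{A}/\theta$ recursively and then lifts the block assignment; and the lifting itself is performed either by a bounded-width argument, when the algebra induced on each block is non-affine, or by solving a linear system over a finite abelian group, when a tame-congruence type of module flavour appears and a Maltsev chain exposes that module inside each block.

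The main obstacle---and the reason this statement is a long-standing conjecture rather than a theorem---is the lifting step, equivalently the proof that the Taylor hypothesis \emph{always} forces at least one of absorption, quotient, or linear system to be available and non-trivial at each recursive node, and that gluing locally-consistent sub-solutions yields a globally feasible assignment. The delicate combinatorial argument navigating between absorbing subuniverses, congruence quotients, and module-like strata, while maintaining the invariant that consistency has already been propagated, is where the real work lies. Any successful attempt would most likely rest on a careful case analysis by tame-congruence types, together with a new structural lemma certifying that the recursion is well-founded and that absorbing or module-bearing substructures can always be detected in polynomial time.
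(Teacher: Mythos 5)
This statement is a \emph{conjecture}, not a theorem: the paper does not prove it, cites it as the well-known open Algebraic CSP Dichotomy Conjecture of Bulatov, Jeavons and Krokhin, and indeed the entire point of the paper's main result (Theorem~\ref{thm:main}) is to reduce the VCSP dichotomy to \emph{this still-open} CSP conjecture, so that any future resolution of it would immediately yield a VCSP dichotomy. There is therefore no ``paper's own proof'' against which to compare your attempt.

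You recognise this yourself in your final paragraph, and your sketch is an honest survey of the known partial tools (bounded width via weak near-unanimities, few subpowers via edge terms, absorption, congruence quotients, tame-congruence-theoretic case analysis) rather than a proof. That is an accurate picture of the state of the art at the time of this paper, but it is not what was asked: you were to reconstruct the paper's proof, and the paper offers none because none existed. The only thing the paper actually establishes about Conjecture~\ref{con:AlgCSPDich} is its hardness half, which, as you correctly note, follows from Theorem~\ref{thm:VCSPhard} together with the observation that for crisp languages a cyclic polymorphism yields a cyclic fractional polymorphism (the point distribution) and conversely. Everything beyond that in your write-up is an outline of an open problem, and you candidly flag the unresolved gluing/lifting step as the genuine gap --- which it is.
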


In view of this, it is natural to call Conjecture~\ref{con:VCSPtractable} the {\em Algebraic VCSP Dichotomy Conjecture}.

\subsection{Basic LP relaxation}

Symmetric operations are known to be closely related to LP-based algorithms for CSP-related problems.
One algorithm in particular has been known to solve many VCSPs to optimality. This algorithm is based
on the so-called {\em basic LP relaxation}, or BLP, defined as follows.

Let $\mathbb M_n=\{\mu\ge 0\:|\:\sum_{x\in D^n}\mu(x)=1\}$ be the set of probability distributions over
labelings in $D^n$.
We also denote $\Delta=\mathbb M_1$; thus, $\Delta$ is the standard ($|D|-1$)-dimensional simplex.
The corners of $\Delta$ can be identified with elements in $D$.
For a distribution $\mu\in\mathbb M_n$ and a variable $v\in\{1,\ldots,n\}$, let
 $\mu_{[v]}\in \Delta$ be the marginal probability of distribution $\mu$ for $v$:
\begin{equation*}
\mu_{[v]}(a)\ = \sum_{x\in D^n:x_v=a} \mu(x) \qquad \forall a \in D.
\end{equation*}
Given a VCSP instance $\calI$ in the form~\eqref{eq:VCSPinst}, we define the value $\BLP(\calI)$ as follows:
\begin{eqnarray}
&& \hspace{-85pt} \BLP(\calI)\ =\ \min\ \sum_{t\in T}\sum_{x\in \dom f_t}\mu_t(x)f_t(x)  \label{eq:BLP} \\
 \mbox{s.t.~~} (\mu_t)_{[k]}&=&\alpha_{v(t,k)} \hspace{20pt} \forall t\in T,k\in\{1,\ldots,n_t\} \nonumber \\
\mu_t&\in&\mathbb M_{n_t}                      \hspace{29pt} \forall t\in T \nonumber \\
\mu_t(x)&=&0                                   \hspace{43pt} \forall t\in T,x\notin \dom f_t \nonumber \\
\alpha_v&\in&\Delta                            \hspace{40pt} \forall v\in V \nonumber
\end{eqnarray}
If there are no feasible solutions then $\BLP(\calI)=\infty$.
The objective function and all constraints in this system are linear, therefore this is a linear program. Its size is polynomial in the size of $\calI$, so
$\BLP(\calI)$ can be found in time polynomial in $|\calI|$.

We say that BLP \emph{solves} $\calI$ if
$\BLP(\calI)=\min_{x\in D^n}f_\calI(x)$, and BLP
solves $\VCSP\Gamma$ if it solves all instances $\calI$ of $\VCSP\Gamma$.
 If BLP solves $\VCSP\Gamma$ and $\Gamma$ is a rigid core, then the optimal solution for every instance can be found
 by using the standard self-reducibility method. In this method, one goes through the variables in some order, finding
 $d\in D$ for the current variable $v$ such that instances $\calI$ and $\calI+u_d(v)$ have the same optimal value (which can be checked by BLP),
 updating $\calI:=\calI+u_d(v)$, and moving to the next variable. At the end, the instance will have a unique feasible assignment whose value
 is the optimum of the original instance.
 Note that in this case $\VCSP\Gamma$ is globally tractable.

\begin{theorem}[\cite{kolmogorov15:power}]
\label{thm:BLP}
BLP solves $\VCSP\Gamma$ if and only if, for every $m>1$, $\Gamma$ has a symmetric fractional polymorphism of arity $m$.
\end{theorem}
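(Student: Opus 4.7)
The plan is to handle the two directions of Theorem~\ref{thm:BLP} separately: the forward direction is a rounding argument from an optimal BLP solution using the symmetric fractional polymorphisms, and the converse is an LP-duality argument producing a bad instance for BLP whenever no such polymorphism exists.

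For the \textbf{forward direction}, assume $\Gamma$ has a symmetric $m$-ary fractional polymorphism $\omega_m$ for every $m>1$. Given an instance $\calI$ of $\VCSP\Gamma$, let $(\mu,\alpha)$ be an optimal rational BLP solution and choose $m$ so that all $m\mu_t(x)$ and $m\alpha_v(d)$ are integers. Each $\mu_t$ is then the uniform distribution over a multiset $\{x^{t,1},\dots,x^{t,m}\}\subseteq\dom f_t$, and the marginal constraint gives a multiset $M_v$ of size $m$ over $D$ with $\{x^{t,1}_k,\dots,x^{t,m}_k\}=M_v$ whenever $v(t,k)=v$. Since each $g\in\supp(\omega_m)$ is symmetric, $g(x^{t,1}_k,\dots,x^{t,m}_k)$ depends only on $M_v$, so we may consistently set $y_v^g:=g(M_v)$ for any fixed ordering of $M_v$. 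Applying the fractional polymorphism inequality to each constraint $t$ and summing gives
\[
\sum_g \omega_m(g)\,f_\calI(y^g)\;\le\;\sum_t\sum_x\mu_t(x)\,f_t(x)\;=\;\BLP(\calI),
\]
so some $g\in\supp(\omega_m)$ produces an integral assignment of cost at most $\BLP(\calI)$.

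For the \textbf{converse}, fix $m>1$ and consider the finite LP whose variables are $\omega(g)\ge 0$ ranging over symmetric $g\in\calO_D^{(m)}$, with $\sum_g\omega(g)=1$ and one constraint $\sum_g\omega(g)\,f(g(x^1,\dots,x^m))\le\frac{1}{m}\sum_j f(x^j)$ for each $f\in\Gamma$ and each $(x^1,\dots,x^m)\in(\dom f)^m$. If this LP is infeasible, Farkas' lemma yields non-negative rational weights $\lambda_{f,x}$, finitely many nonzero, such that for every symmetric $g$ the weighted combination $\sum_{f,x}\lambda_{f,x}\,f(g(x^1,\dots,x^m))$ strictly exceeds $\sum_{f,x}\lambda_{f,x}\cdot\frac{1}{m}\sum_j f(x^j)$. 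Build $\calI$ with variables $z_M$ indexed by size-$m$ multisets $M$ over $D$: for each nonzero $\lambda_{f,x}$, after clearing denominators, add the constraint $f(z_{M_1},\dots,z_{M_{n_f}})$ (where $M_k=\{x^1_k,\dots,x^m_k\}$) with integer multiplicity proportional to $\lambda_{f,x}$. Setting $\mu_t$ uniform on $\{x^1,\dots,x^m\}$ and $\alpha_{z_M}$ equal to the empirical label distribution of $M$ is BLP-feasible with objective value exactly the certificate's right-hand side, whereas any integral assignment $\bar z$ defines a single symmetric operation $g^*(M):=\bar z_M$ of cost strictly larger by the Farkas inequality. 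This contradicts the assumption that BLP solves $\VCSP\Gamma$.

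The main obstacle is the converse: one has to arrange that all the Farkas gadgets share the multiset-indexed variables, so that integral solutions are forced to commit to a \emph{single} symmetric operation while the intended BLP solution remains simultaneously feasible across gadgets. A minor subtlety in the forward direction is that the orderings of the tuples $x^{t,j}$ chosen for different constraints $t$ need not be consistent across shared variables, which is precisely what symmetry of $g\in\supp(\omega_m)$ neutralizes via the well-defined quantity $g(M_v)$.
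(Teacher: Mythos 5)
The statement you are proving is only \emph{cited} in this paper (from Kolmogorov, Thapper, \v{Z}ivn\'y, ``The power of linear programming for general-valued CSPs''); no proof appears here, so there is nothing in the paper itself to compare against. I therefore evaluate your proposal on its own terms.

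Your \textbf{forward direction} is correct. The observation that the shared marginal $\alpha_v$ corresponds to a multiset $M_v$ and that symmetry of each $g\in\supp(\omega_m)$ makes $g(M_v)$ independent of how the rows of $\mu_t$ are ordered is exactly the right point, and the chain of inequalities gives $\Opt(\calI)\le\BLP(\calI)$ as needed.

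The \textbf{converse direction} has a genuine gap. The LP you write down ranges over all symmetric $g\in\calO_D^{(m)}$, but for any $g$ with $g(x^1,\dots,x^m)\notin\dom f$ the coefficient $f(g(x^1,\dots,x^m))$ is $\infty$; as stated this is not a linear program and Farkas' lemma does not apply to it. The natural repair is to restrict the variable set to symmetric $g$ that are polymorphisms of $\Feas\Gamma$ (so that all coefficients are finite), plus a separate argument for the degenerate case where that set is empty. But then Farkas only yields the strict inequality $\sum_{f,x}\lambda_{f,x}\,f(g(x^1,\dots,x^m))>\sum_{f,x}\lambda_{f,x}\,d_{f,x}$ for those $g$ that \emph{are} polymorphisms of $\Feas\Gamma$, not ``for every symmetric $g$'' as you claim. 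This matters for your gadget: an integral assignment $\bar z$ to the variables $z_M$ defines a symmetric operation $g^*$ that is automatically feasible on every constraint you placed in the instance (the ones with $\lambda_{f,x}>0$), yet $g^*$ may fail to be a polymorphism of $\Feas\Gamma$ on some pair $(f',x')$ with $\lambda_{f',x'}=0$. For such $g^*$ you have no lower bound on $\sum_{\lambda_{f,x}>0}\lambda_{f,x}f(g^*(x^1,\dots,x^m))$, so the conclusion $\Opt(\calI)>\BLP(\calI)$ does not follow. One concrete way to close this is to add, to your gadget, a copy of every constraint $f(z_{M_1},\dots,z_{M_{n_f}})$ for \emph{all} pairs $(f,x)\in\Gamma^+$ with a sufficiently small weight $\epsilon>0$: this forces any feasible integral $g^*$ to be a genuine symmetric polymorphism of $\Feas\Gamma$ (otherwise cost $\infty$), while $\epsilon$ can be chosen small enough (relative to the minimum Farkas gap over the finitely many $g\in S_m$) that the strict inequality and the bound $\BLP(\calI)\le\sum\lambda d+\epsilon\sum d$ are not destroyed. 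The same gadget with $\epsilon$-weights alone, and the canonical fractional solution, also handles the edge case $S_m=\emptyset$. Without some fix of this kind the converse as written does not go through.
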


\begin{theorem}[\cite{kolmogorov15:power,tz13:stoc}]
\label{thm:finite-valued}
Let $\Gamma$ be a rigid core constraint language that is finite-valued.
If $\Gamma$ has a symmetric fractional polymorphism of arity $2$ then BLP solves $\VCSP\Gamma$, and so $\VCSP\Gamma$ is tractable. Otherwise, $\VCSP\Gamma$ is NP-hard.
\end{theorem}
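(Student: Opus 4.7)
My plan is to combine Theorem~\ref{thm:BLP} with direct combinatorial arguments specific to the finite-valued case, exploiting the fact that $\dom f = D^n$ for every $f \in \Gamma$, so fractional-polymorphism inequalities carry no feasibility side conditions and convex combinations of fractional polymorphisms remain fractional polymorphisms.

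\textbf{Tractability.} Suppose $\omega_2$ is a binary symmetric fractional polymorphism of $\Gamma$. I would show directly that BLP solves $\VCSP\Gamma$ by a rounding argument. Given an instance $\calI$, compute an optimal BLP solution $(\{\mu_t\}, \{\alpha_v\})$ of value $\BLP(\calI)$, then iteratively reduce its fractionality: as long as some $\alpha_v$ has $|\supp(\alpha_v)| \ge 2$, pick two labels $a \ne b$ in $\supp(\alpha_v)$, split the fractional solution into two ``halves'' biased toward $a$ and toward $b$ on $v$, and merge them via $\omega_2$. Finite-valuedness and the symmetric FP inequality ensure that the expected cost does not increase, while the total fractional support strictly decreases. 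After polynomially many rounds the solution becomes integral with cost at most $\BLP(\calI) \le \Opt(\calI)$, so $\BLP(\calI) = \Opt(\calI)$. Combined with Theorem~\ref{thm:BLP}, this simultaneously yields symmetric fractional polymorphisms of every arity $m \ge 2$ as a corollary.

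\textbf{Hardness.} Suppose $\Gamma$ has no binary symmetric fractional polymorphism. The existence of such an $\omega_2$ is a finite LP feasibility question: decision variables are values $\omega_2(g) \ge 0$ indexed by symmetric binary $g \in \Pol\Gamma$ summing to $1$, constrained by $\sum_g \omega_2(g) f(g(x,y)) \le \tfrac12 (f(x)+f(y))$ for each triple $(f, x, y) \in \Gamma \times D^n \times D^n$. Infeasibility produces, via LP duality, a finite collection of such triples with nonnegative dual weights that is jointly violated by every symmetric binary polymorphism. I would use this certificate to construct a gap-preserving gadget reduction from a canonical NP-hard VCSP (for instance \textsc{Max Cut}, suitably lifted to the rigid core on $D$) to $\VCSP\Gamma$: edges of the input are replaced by bundles of copies of the certifying triples with multiplicities set by the dual weights, so that the $\VCSP\Gamma$ value on the resulting instance is an affine function of the underlying Max-Cut objective.

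\textbf{Main obstacle.} The substantive difficulty is the hardness direction: turning the LP-duality certificate into an honest gap-preserving gadget reduction requires considerable combinatorial control and constitutes the bulk of the Thapper-\v{Z}ivn\'y argument in \cite{tz13:stoc}. One cannot take a shortcut through Theorem~\ref{thm:VCSPhard}, because a finite-valued rigid core could a priori possess a cyclic fractional polymorphism of some arity $m > 2$ without possessing any binary symmetric one; even with the idempotency guaranteed by the rigid-core assumption, the naive arity reduction $g'(x,y) = g(x,y,c,\ldots,c)$ of a cyclic $m$-ary polymorphism fails to be symmetric in $(x,y)$, and bridging ``cyclic of some arity'' to ``symmetric of arity~$2$'' in the finite-valued setting requires precisely the additional rounding/reduction machinery sketched above.
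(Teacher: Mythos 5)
Your proposal takes a genuinely different route from the paper, but two points need addressing.

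First, the paper does not re-prove the Thapper--\v{Z}ivn\'y dichotomy from scratch; it derives Theorem~\ref{thm:finite-valued} as a corollary of its own main result. For a finite-valued rigid core, $\CSP{\Feas\Gamma}$ is trivially tractable and $\calI$ and $\bar{\calI}$ coincide up to constraints of constant value $0$, so Theorem~\ref{thm:main-alg} says: if $\Gamma$ has a cyclic fractional polymorphism of arity $\ge 2$, then BLP solves $\VCSP\Gamma$, and hence by Theorem~\ref{thm:BLP} $\Gamma$ has symmetric fractional polymorphisms of \emph{every} arity, in particular arity $2$. Since a binary symmetric fractional polymorphism is itself cyclic, this shows the two conditions ``cyclic of some arity $\ge 2$'' and ``symmetric of arity $2$'' are equivalent for finite-valued rigid cores --- which is exactly the bridge your ``Main obstacle'' paragraph identifies as the missing piece. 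With this bridge in hand, the hardness half is nothing more than an invocation of Theorem~\ref{thm:VCSPhard}. Your assertion that ``one cannot take a shortcut through Theorem~\ref{thm:VCSPhard}'' is therefore precisely backwards here: the entire point of the paper's Theorem~\ref{th:symm} (feeding Theorem~\ref{thm:main-alg}) is to make that shortcut legitimate. You have, in effect, rediscovered the obstacle that the paper's main theorem was built to remove, and then concluded it cannot be removed.

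Second, your tractability sketch is not a proof as written. A binary fractional polymorphism is a distribution over operations $D^2\to D$ acting on pairs of integral assignments; ``merging two halves of a fractional BLP solution via $\omega_2$'' is not a well-defined operation on the LP feasible region, and it is not clear how the claimed monotone decrease of fractional support would be obtained while preserving the marginal-consistency constraints. The known rounding arguments behind Theorem~\ref{thm:BLP} require symmetric fractional polymorphisms of arbitrarily large arity: one clears denominators and applies an $N$-ary symmetric operation to a size-$N$ multiset of labels drawn according to the marginals, and the symmetry is exactly what makes this well-defined. Getting from a single binary symmetric fractional polymorphism to symmetric fractional polymorphisms of all arities is the genuinely hard content of~\cite{tz13:stoc} (and, in the generalized setting, of the present paper's Theorem~\ref{th:symm}); it is a prerequisite for the BLP rounding, not a corollary of it. As it stands, the tractability half of your sketch assumes the very strengthening it is meant to establish.
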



\section{Main Result}

\begin{definition}\label{feas-inst}
Let $\calI$ be a VCSP instance over variables $V$ with domain $D$.
The {\em feasibility instance}, $\Feas{\calI}$, associated to $\calI$ is a CSP instance obtained
from $\calI$ by replacing each constraint function $f_t$ with $\dom f_t$.
\end{definition}

For a language $\Gamma$, let $\Feas\Gamma=\{\dom f\mid f\in \Gamma\}$. Then the instances of the problem
$\CSP{\Feas\Gamma}$ are the instances $\Feas\calI$ where $\calI$ runs through all instances of $\VCSP\Gamma$.

\begin{definition}\label{1-infty-min-inst}
Let $\calI$ be a VCSP instance over variables $V$ with domain $D$.
For each variable $v\in V$, let $D_v=\{d\in D\mid d=\sigma(v) \mbox{ for some feasible solution } \sigma \mbox{ for } \calI\}$.
Then \emph{$(1,\infty)$-minimal instance} $\bar{\calI}$ associated with $\calI$ is the VCSP instance obtained from
$\calI$ by adding, for each $v\in V$, the constraint $u_{D_v}(x_v)$.
\end{definition}


Note that if $\Gamma$ is a rigid core and the problem $\CSP{\Feas\Gamma}$ is tractable, then, for any instance $\calI$ of $\VCSP\Gamma$,
one can construct the associated $(1,\infty)$-minimal instance in polynomial time.
Indeed, to find out whether a given $d\in D$ is in $D_v$, one only needs to decide whether the CSP instance obtained from
$\Feas{\calI}$ by adding the constraint $u_{d}(x_v)$ is satisfiable. Since $\Gamma$ is a rigid core, the latter instance is also an instance
of $\CSP{\Feas\Gamma}$.

If $\Gamma$ is a rigid core then, for $\VCSP\Gamma$ to be tractable, $\Gamma$ must satisfy the assumption
of Conjecture~\ref{con:VCSPtractable}, and also, clearly, the feasibility part of the problem, $\CSP{\Feas\Gamma}$, must be tractable.
Our main result shows that if these necessary conditions are satisfied then $\VCSP\Gamma$ is indeed tractable.

\begin{theorem}\label{thm:main}
Let $\Gamma$ be a valued constraint language over domain $D$ that is a rigid core.
If the following conditions hold then $\VCSP{\Gamma}$ is tractable:
\begin{enumerate}
\item $\Gamma$ has a cyclic fractional polymorphism of arity at least 2, and
\item $\CSP{\Feas{\Gamma}}$ is tractable.
\end{enumerate}
Otherwise, $\VCSP{\Gamma}$ is not tractable.
\end{theorem}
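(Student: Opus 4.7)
The plan is to first dispose of the hardness direction. If condition~(1) fails, Theorem~\ref{thm:VCSPhard} immediately gives NP-hardness. If condition~(2) fails, then $\CSP{\Feas\Gamma}$ is already not tractable; since $\Feas\Gamma\subseteq\langle\Gamma\rangle$ (just take $f$ and drop finite costs) and Lemma~\ref{lem:exppower} preserves tractability under expressive power, $\VCSP\Gamma$ cannot be tractable either. So the whole content of the theorem lies in the tractability direction.

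For tractability, I exploit self-reducibility via the unary costs $u_d\in\calC_D\subseteq\Gamma$. An oracle computing $\Opt(\calI)$ in polynomial time for every instance $\calI$ of $\VCSP\Gamma$ yields an optimal assignment by scanning the variables one by one and, for the current variable $v$, fixing some $d\in D$ such that $\Opt(\calI+u_d(x_v))=\Opt(\calI)$. So it suffices to compute $\Opt(\calI)$. The algorithm is: (i) use the assumed polynomial-time algorithm for $\CSP{\Feas\Gamma}$ to compute, for each $v\in V$ and $d\in D$, whether $d\in D_v$, and form $\bar{\calI}$ as in Definition~\ref{1-infty-min-inst}; (ii) return $\BLP(\bar{\calI})$. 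Step~(i) is polynomial time because $\Feas{\calI}+u_d(x_v)$ is still an instance of $\CSP{\Feas\Gamma}$ (using $\calC_D\subseteq\Gamma$), and step~(ii) is a linear program of polynomial size.

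Correctness reduces to the single identity $\BLP(\bar{\calI})=\Opt(\bar{\calI})=\Opt(\calI)$. The right equality is built into the construction of $\bar{\calI}$. The left one, which is the heart of the matter, amounts to the following key lemma I would prove separately: if $\Gamma$ is a rigid core with a cyclic fractional polymorphism of arity at least $2$, then BLP is tight on every $(1,\infty)$-minimal instance of $\VCSP\Gamma$. My intended approach follows the ``lifting a language'' technique of~\cite{kolmogorov15:mrt}: form an enriched language $\Gamma^\sharp$ that, in addition to $\Gamma$, contains the unary cost functions $u_{D'}$ for every $D'\subseteq D$ that can occur as some $D_v$ (equivalently, every $D'$ that is the image of some operation in $\Polplus\Gamma$). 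Then reinterpret $\bar{\calI}$ over $\Gamma^\sharp$ and show that, for every arity $m\geq 2$, $\Gamma^\sharp$ admits a \emph{symmetric} fractional polymorphism, after which Theorem~\ref{thm:BLP} closes the case.

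The main obstacle, and the one genuinely non-routine step, is upgrading the single cyclic fractional polymorphism of $\Gamma$ to symmetric fractional polymorphisms of all arities on $\Gamma^\sharp$: cyclicity is strictly weaker than symmetry. The idea is that once each variable is pinned to its sharp domain $D_v$, the cyclic polymorphism $\omega$ can be iterated and composed with itself without ever leaving the $D_v$'s, because each $D_v$ is invariant under $\supp(\omega)$ by the very definition of $D_v$ together with Lemma~\ref{lem:exppower}(1). Combined with idempotency, which is automatic in a rigid core, this lets one follow the symmetrization strategy of the finite-valued dichotomy~\cite{tz13:stoc}: average suitable compositions of $\omega$ to produce, for each $m\geq 2$, a symmetric fractional polymorphism of arity $m$ of $\Gamma^\sharp$. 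Once that lemma is in hand, Theorem~\ref{thm:BLP} gives $\BLP(\bar{\calI})=\Opt(\bar{\calI})$ and the algorithm described above proves Theorem~\ref{thm:main}.
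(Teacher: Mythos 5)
Your algorithm, self-reducibility step, and hardness direction are all correct and match the paper: when condition~(1) fails you invoke Theorem~\ref{thm:VCSPhard}; when condition~(2) fails, any algorithm for $\VCSP\Gamma$ would decide finiteness of $f_\calI$ and hence solve $\CSP{\Feas\Gamma}$ (Lemma~\ref{lem:exppower} also works but is more indirect). You also correctly identify the key lemma: BLP is tight on every $(1,\infty)$-minimal instance (Theorem~\ref{thm:main-alg}).

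The genuine gap is in your proof sketch of that key lemma. You propose to enrich $\Gamma$ to $\Gamma^\sharp = \Gamma\cup\{u_{D'}\}$ \emph{over the same domain $D$} and then prove that $\Gamma^\sharp$ admits symmetric fractional polymorphisms of every arity. That claim is false in general: a rigid core with a cyclic fractional polymorphism need not admit symmetric fractional polymorphisms of all arities --- that would, by Theorem~\ref{thm:BLP}, say BLP already solves $\VCSP{\Gamma}$, which is exactly what fails for, e.g., 2-SAT-type CSPs (solvable by local consistency, not BLP). Adding the unary functions $u_{D_v}$ does not change this, and the observation that each $D_v$ is invariant under $\supp(\omega)$ does not give you a global symmetric fractional operation on $D$: tuples whose coordinates straddle several overlapping $D_v$'s break the averaging argument from~\cite{tz13:stoc}. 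What the paper actually does --- and what ``lifting a language'' in~\cite{kolmogorov15:mrt} means --- is to change the domain to $D'=\bigcup_{v\in V} D'_v$ with \emph{disjoint} copies $D'_v=\{(v,a):a\in D_v\}$, and to lift every cost function to this new domain. The resulting $\Gamma'$ is \emph{block-finite} (Definition~\ref{def:block-finite}), and the block structure is exactly what makes the symmetrization work: Lemma~\ref{lemma:RangeInTheSameBlock} needs the entries of any tuple in $\arg\min f$ to lie in a single block $D'_v$ so that the constant tuple $(a,\ldots,a)$ stays in $\dom f$, and condition~(a) of block-finiteness (the existence of the collapsing unary polymorphism $g_a$) is what supplies this. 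Without the disjoint-copy lifting, none of this is available, and "average suitable compositions of $\omega$" does not go through. You should therefore replace $\Gamma^\sharp$ by the block-finite $\Gamma'$, show $\BLP(\calI')=\BLP(\bar\calI)$ and $\Opt(\calI')=\Opt(\bar\calI)$, transfer the cyclic fractional polymorphism from $\Gamma$ to $\Gamma'$ (Lemma~\ref{lem:cyclic}), and then prove Theorem~\ref{th:symm} for block-finite languages.
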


In Theorem~\ref{thm:main}, the intractability part for (absence of) the first condition follows from Theorem~\ref{thm:VCSPhard},
and it is obvious for the second condition. The tractability part follows from Theorem~\ref{thm:main-alg} below.

\begin{theorem}\label{thm:main-alg}
Let $\Gamma$ be an arbitrary language that has a cyclic fractional polymorphism of arity at least 2.
If $\calI$ is an instance of $\VCSP{\Gamma}$ and $\bar{\calI}$ is its associated $(1,\infty)$-minimal instance,
then $\Opt(\calI)=\BLP(\bar{\calI})$.
\end{theorem}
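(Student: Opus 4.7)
The plan is to prove the two inequalities separately. The direction $\BLP(\bar{\calI}) \le \Opt(\calI)$ is immediate: by construction of the sets $D_v$, the added unary constraints $u_{D_v}(x_v)$ are satisfied by every feasible assignment of $\calI$ and only by such assignments, so $\Opt(\bar{\calI}) = \Opt(\calI)$; and since BLP is a standard relaxation of the integer problem, $\BLP(\bar{\calI}) \le \Opt(\bar{\calI})$.

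For the nontrivial direction $\Opt(\calI) \le \BLP(\bar{\calI})$, I would fix an optimal BLP solution $(\mu_t^*, \alpha_v^*)$ for $\bar{\calI}$. The $u_{D_v}$ constraints force $\supp(\alpha_v^*) \subseteq D_v$, while the BLP constraint $\mu_t(x) = 0$ for $x \notin \dom f_t$ forces $\mu_t^*$ to be supported on $\dom f_t$. The crucial extra feature contributed by $(1,\infty)$-minimality is that every label in $D_v$ is individually extendable to a full feasible assignment of $\calI$; this is the bridge between ``locally feasible BLP mass'' and ``globally feasible integer assignments.'' I would then apply the lifting construction of \cite{kolmogorov15:mrt}, building an auxiliary VCSP instance $\calI'$ over a lifted language $\Gamma'$ on an enlarged domain such that: (i) the cyclic fractional polymorphism $\omega$ of arity $m$ for $\Gamma$ induces on $\Gamma'$ a symmetric fractional polymorphism of every arity, so by Theorem~\ref{thm:BLP} BLP solves $\VCSP{\Gamma'}$ and in particular $\calI'$; (ii) $\BLP(\calI') \le \BLP(\bar{\calI})$, obtained by lifting $(\mu_t^*, \alpha_v^*)$ to a feasible BLP solution of $\calI'$ of no larger value; and (iii) any integer solution of $\calI'$ projects to a feasible integer assignment of $\calI$ of no greater cost. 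Combining (i)--(iii) yields an integer assignment of $\calI$ of cost at most $\BLP(\bar{\calI})$.

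The main obstacle lies in step (i)--(ii): turning the mere cyclic symmetry of $\omega$ into full symmetric fractional polymorphisms on the lifted language, while simultaneously ensuring that the BLP value does not strictly increase under the lift. The $(1,\infty)$-minimality is essential precisely here: it guarantees that every element in $\supp(\alpha_v^*) \subseteq D_v$ corresponds to a genuine global feasible completion of $\calI$, so the lifted instance can be populated with bona fide feasible completions over which the cyclic rotations of $\omega$ average out to effective symmetry. Without this, the lift could contain locally--but--not--globally feasible configurations that would obstruct either the symmetrization of $\omega$ or the value transfer of the BLP optimum, breaking both (i) and (ii).
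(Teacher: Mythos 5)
Your high-level plan is essentially the paper's own: lift $\bar{\calI}$ to an auxiliary instance $\calI'$ over a lifted language $\Gamma'$ on an enlarged domain, show $\BLP(\calI')=\BLP(\bar\calI)$ (the paper proves this as a bijection of BLP feasible solutions, Lemma~\ref{lem:BLPeq}), show $\Opt(\calI)=\Opt(\calI')$ via projection, and then close the loop by showing BLP solves $\VCSP{\Gamma'}$ because $\Gamma'$ admits symmetric fractional polymorphisms of every arity $m\ge2$, invoking Theorem~\ref{thm:BLP}. You also correctly locate where $(1,\infty)$-minimality enters: it is what makes the lifted language ``well-behaved'' — in the paper's terminology, it is exactly what gives $\Gamma'$ property~(a) of a block-finite language (every $a\in D_v$ arises from a global feasible solution $\sigma$, which supplies the collapsing unary polymorphism $g_a$).

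The genuine gap is that you explicitly flag step (i) — upgrading a single cyclic fractional polymorphism to symmetric fractional polymorphisms of \emph{all} arities on the lifted language — as ``the main obstacle'' and do not prove it. This is not a detail; it is the bulk of the paper's technical content (Theorem~\ref{th:symm}, developed over Sections~\ref{sec:G}--\ref{sec:symm} and Appendix~\ref{sec:proof:G}). Moreover, the intuition you offer for it — that ``cyclic rotations of $\omega$ average out to effective symmetry'' over lifted feasible completions — does not reflect the actual mechanism, and taken literally it would fail: one cannot simply average a cyclic operation over its rotations and obtain a symmetric polymorphism, nor does the lift by itself symmetrize anything. The real argument isolates the abstract notion of a block-finite language, builds the graph of generalized operations $(\mathbb G,E)$, constructs special functions $f\in\langle\Gamma'\rangle$ with $\arg\min f=\mathbb G(\hat x)$ (Theorem~\ref{th:specialInstance}, requiring Farkas duality and a careful instance construction), and then proves by induction on arity (Theorem~\ref{th:rangeMain}, separately for $m=2$ using the cyclic polymorphism and for $m\ge3$ using a symmetric polymorphism of arity $m-1$) that in the ranges of ``sink'' operations all $m$ coordinates coincide, from which symmetry of the resulting fractional polymorphism follows. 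The block-finite property~(a), supplied by $(1,\infty)$-minimality, is used in Lemma~\ref{lemma:RangeInTheSameBlock} to keep constant tuples inside $\dom f$ — a much more specific role than ``populating the lift with bona fide feasible completions.'' So the proposal is sound in architecture but is missing the theorem that makes the architecture work, and the sketch of how that theorem would go is not on the right track.
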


\noindent
 Indeed, if $\Gamma$ is a rigid core satisfying conditions (1) and (2) from Theorem~\ref{thm:main} and $\calI$ is an instance of $\VCSP\Gamma$ then the equality $\Opt(\calI)=\BLP(\bar{\calI})$ means that we can efficiently find the optimum value for $\calI$ by constructing $\bar{\calI}$ (which we can do efficiently because $\Gamma$ is a rigid core and $\CSP{\Feas{\Gamma}}$ is tractable) and then applying BLP to $\bar{\calI}$. Then we can find an optimal assignment for $\calI$ by self-reduction (see the discussion before Theorem~\ref{thm:BLP}).

Recall the notion of global tractability from Section~\ref{sec:VCSPdef}. The algorithm that we just described gives the following.

\begin{corollary}\label{cor:global}
Let $\Gamma$ be a valued constraint language over domain $D$ that is a rigid core.
If
\begin{enumerate}
\item $\Gamma$ has a cyclic fractional polymorphism of arity at least 2, and
\item $\CSP{\Feas{\Gamma}}$ is globally tractable,
\end{enumerate}
 then $\VCSP{\Gamma}$ is globally tractable.
\end{corollary}

It also follows from Theorem \ref{thm:main-alg} that, for every language $\Gamma$ that has a cyclic fractional polymorphism of arity at least 2, $\VCSP{\Gamma}$ is polynomial time equivalent to $\CSP{\Feas{\Gamma}}$. In particular, \emph{any} complexity  classification of CSPs, whether it is the dichotomy as predicted by Conjecture \ref{con:AlgCSPDich} or anything else, gives a complexity classification of VCSPs.

Let us now discuss how Theorem~\ref{thm:main} can be combined with known CSP complexity classifications to obtain new, previously unknown, VCSP classifications which are tighter than Theorem~\ref{thm:main}.

As we explained in Section~\ref{sec:intro}, if the Algebraic CSP Dichotomy Conjecture holds, then condition (2) in Theorem~\ref{thm:main} can be omitted and all intractable VCSPs are NP-hard.
Since this conjecture holds when $|D|\le 3$~\cite{bulatov06:3-elementjacm,Schaefer78:complexity} or when $D$ is arbitrary finite, but $\Gamma$ contains all unary crisp functions~\cite{barto11:lics,Bulatov11:conservative}, we get the following corollaries.

\begin{corollary}\label{cor:VCSP-3-dichotomy}
Let $|D|\le 3$ and let $\Gamma$ be a valued constraint language that is a rigid core on $D$. If $\Gamma$ has a cyclic fractional polymorphism then the problem $\VCSP\Gamma$ is tractable, otherwise it is NP-hard.
\end{corollary}

For the case $|D|=2$, the tractable cases can be characterised by six specific cyclic fractional polymorphisms~\cite{cohen06:complexitysoft}, and it was shown in~\cite{Kozik15:algebraic} that the presence of any cyclic fractional polymorphism (when $|D|=2$) implies the presence of one of those six.
Also, Corollary~\ref{cor:VCSP-3-dichotomy} generalizes results from~\cite{Uppman13:icalp,Uppman14:stacs} where the dichotomy was shown for the special case when $|D|=3$ and all non-crisp functions in $\Gamma$ are unary. The specific conditions for tractability in~\cite{Uppman13:icalp,Uppman14:stacs} have not been shown to be directly implied by the presence of a cyclic fractional polymorphism, though.

\begin{corollary}\label{cor:VCSP-cons-dichotomy}
Let $\Gamma$ be a valued constraint language on $D$ that contains all unary crisp functions. If $\Gamma$ has a cyclic fractional polymorphism then the problem $\VCSP\Gamma$ is tractable, otherwise it is NP-hard.
\end{corollary}

Corollary~\ref{cor:VCSP-cons-dichotomy} generalizes a result from~\cite{Uppman13:icalp} where the dichotomy was shown for the special case when  $\Gamma$ includes all unary crisp functions and
all non-crisp functions in $\Gamma$ are unary. Again, the specific condition for tractability in~\cite{Uppman13:icalp} is not known to be directly implied by the presence of a cyclic fractional polymorphism.

It is shown in~\cite{Kozik15:algebraic} how Theorem~\ref{thm:main} implies the dichotomy results (including specific conditions for tractability)
for the finite-valued case from~\cite{tz13:stoc} (Theorem~\ref{thm:finite-valued}) and for the case when $\Gamma$ contains all unary functions taking values in $\{0,1\}$~\cite{Kolmogorov13:conservative}. The algorithm for the tractable case in~\cite{Kolmogorov13:conservative}
is somewhat similar in spirit to our algorithm, and actually inspired the latter.

Let us now explain how Theorem~\ref{thm:main} implies the tractability result from~\cite{Thapper15:Sherali} (stated below).
An idempotent operation $g\in \calO_D$ of arity at least 2 satisfying $g(y,x,x\ldots,x,x)=g(x,y,x,\ldots,x,x)=\ldots =g(x,x,x,\ldots,x,y)$ for all $x,y\in D$ is called a \emph{weak near-unanimity} operation.
The tractability result result from~\cite{Thapper15:Sherali} states that if $\fPolplus\Gamma$ contains weak near-unanimity operations of all but finitely many arities, then $\VCSP\Gamma$ is tractable (in fact, via a specific algorithm based on Sherali-Adams hierarchy, which does not follow from our results). This condition on $\fPolplus\Gamma$ is well known in the algebraic
approach to the CSP, it characterizes (when appropriately formulated) CSPs of bounded width~\cite{Barto14:jacm}.
So assume that $\fPolplus\Gamma$ satisfies this condition.
 Since $\fPolplus\Gamma\subseteq \Pol\Gamma$, the set $\Pol\Gamma$ also contains these operations, so $\CSP{\Feas{\Gamma}}$ is tractable by~\cite{Barto14:jacm}. Moreover, by~\cite{Barto12:absorbing},
$\fPolplus\Gamma$ then also contains a cyclic operation of arity at least 2. Now (the proof of) Theorem 50 of~\cite{Kozik15:algebraic} implies that $\Gamma$ has a cyclic fractional polymorphism of arity at least 2, and then tractability of $\VCSP\Gamma$ follows from Theorem~\ref{thm:main}.

We remark that some known VCSP classifications with tighter and more explicit characterisations of tractability can be easily derived from our main result, e.g. the classification for the Boolean case ($|D|=2$)  can be easily derived following
the lines of Section 8 of~\cite{Cohen13:algebraic}. However, it might take additional effort to derive some others - for example, the dichotomy result from~\cite{Thapper16:powerSA} was proved without using our theorem, and it is not known how to derive it from our main result.

\section{Proof of  Theorem~\ref{thm:main-alg}: Reduction to a block-finite language}
We will prove Theorem~\ref{thm:main-alg} by constructing, from a given (feasible) instance $\calI$, a finite valued constraint language $\Gamma'$ on some finite set $D'$ and an instance $\calI'$  of $\VCSP{\Gamma'}$ such that $\Opt(\calI)=\Opt(\bar{\calI})=\Opt(\calI')=\BLP(\calI')=\BLP(\bar{\calI})$. 
The first equality is immediate from the definition $\bar{\calI}$, the second one will follow trivially from the construction of $\Gamma'$ and $\calI'$, and the last equality holds by Lemma~\ref{lem:BLPeq} below, while the key equality $\Opt(\calI')=\BLP(\calI')$ will follow from the fact that BLP solves $\VCSP{\Gamma'}$ that we prove, using Theorem~\ref{thm:BLP}, in Theorem~\ref{th:symm}. The construction is inspired by~\cite{Kolmogorov2015:hybrid}, where a similar technique of ``lifting'' a language was used in a different context.

Let $V$ be the set of variables of instance $\calI$, and let
\begin{equation}
f_{\calI}(x)\ =\ \sum_{t\in T} f_t(x_{v(t,1)},\ldots,x_{v(t,n_t)})\qquad \forall x:V\rightarrow D
\end{equation}
be its objective function. 
For the $(1,\infty)$-minimal instance $\bar{\calI}$, the objective function is
\begin{equation}
f_{\bar{\calI}}(x)\ =\ \sum_{t\in T} f_t(x_{v(t,1)},\ldots,x_{v(t,n_t)})+\sum_{v\in V}{u_{D_v}(x_v)} \qquad \forall x:V\rightarrow D
\end{equation}

Now let $D'_v=\{(v,a)\:|\:a\in D_v\}$ be a unique copy of $D_v$.
We now define a new language $\Gamma'$ over domain $D'=\bigcup_{v\in V}D'_v$ as follows:
$$
\Gamma'=\bigcup_{t\in T}\left\{f_t^{\langle v(t,1),\ldots,v(t,n_t)\rangle}, \dom f_t^{\langle v(t,1),\ldots,v(t,n_t)\rangle}\right\}
\cup\bigcup_{v\in V}\left\{u_{D'_v}\right\}\cup \{=_{D'}\}
$$
where functions $u_{D'_v}$ are as defined above, $=_{D'}$ is the binary $\{0,\infty\}$-valued function corresponding to the equality relation, and, for an $n$-ary function $f$ over $D$ and variables $v_1,\ldots,v_n\in V$,
we define function $f^{\langle v_1,\ldots,v_n\rangle}:(D')^n\rightarrow \Qc$ as follows:
$$
f^{\langle v_1,\ldots,v_n\rangle}(x)=\begin{cases}
f(\hat x) & \mbox{if }x=((v_1,\hat x_1),\ldots,(v_n,\hat x_n))  \\
\infty & \mbox{otherwise}
\end{cases}\qquad\forall x\in (D')^n
$$

The above mentioned instance $\calI'$ of $\VCSP{\Gamma'}$  is obtained from $\bar{\calI}$ by replacing each function $f_t$ with $f_t^{\langle v(t,1),\ldots,v(t,n_t)\rangle}$ and replacing each function $u_{D_v}$ with $u_{D'_v}$.



It is straightforward to check that there is a one-to-one correspondence between the sets of feasible solutions to BLP relaxations for $\calI'$ and $\bar{\calI}$, and that this correspondence also preserves the values of the solutions.

\begin{lemma}\label{lem:BLPeq}
We have $\BLP(\calI')=\BLP(\bar{\calI})$.
\end{lemma}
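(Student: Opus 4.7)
The plan is to exhibit an explicit value-preserving bijection between the feasible solutions of $\BLP(\bar{\calI})$ and $\BLP(\calI')$. The construction of $\Gamma'$, of $\calI'$, and of the lifted functions $f_t^{\langle v_1,\ldots,v_n\rangle}$ is engineered precisely so that, via the identification $a \leftrightarrow (v,a)$, feasible integer assignments for $\calI'$ correspond one-to-one with feasible assignments for $\bar{\calI}$; the same correspondence should lift smoothly to distributions over such assignments.

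First I would observe that in $\BLP(\calI')$ the unary constraints $u_{D'_v}(x_v)$ force $\alpha'_v$ to be supported on $D'_v$, so by the marginal conditions each $\mu'_t$ must be supported on $\prod_{k} D'_{v(t,k)}$; moreover $\mu'_t$ is supported on $\dom f_t^{\langle v(t,1),\ldots,v(t,n_t)\rangle}$ by the BLP support constraints. Symmetrically, in $\BLP(\bar{\calI})$ the unary constraints $u_{D_v}$ force $\alpha_v$ supported on $D_v$ and each $\mu_t$ supported on $\prod_{k} D_{v(t,k)} \cap \dom f_t$. I would then define the forward map by
\[
\alpha'_v((v,a)) = \alpha_v(a), \qquad \mu'_t((v(t,1),\hat x_1),\ldots,(v(t,n_t),\hat x_{n_t})) = \mu_t(\hat x_1,\ldots,\hat x_{n_t}),
\]
extending both by $0$ on all other coordinates. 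Feasibility is then routine: the probabilities still sum to $1$; the support of $\mu'_t$ lies in $\dom f_t^{\langle\ldots\rangle}$ by the very definition of the lifted domain together with $\supp \mu_t \subseteq \dom f_t$; and the marginal equalities $(\mu'_t)_{[k]} = \alpha'_{v(t,k)}$ transfer coordinate by coordinate from the original equalities $(\mu_t)_{[k]} = \alpha_{v(t,k)}$ under the identification $a \leftrightarrow (v(t,k),a)$.

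For value preservation I would use that $f_t^{\langle v(t,1),\ldots\rangle}((v(t,1),\hat x_1),\ldots,(v(t,n_t),\hat x_{n_t})) = f_t(\hat x)$ by definition of the lifted function, while the contributions of the $u_{D_v}$ and $u_{D'_v}$ constraints vanish in both objectives because the relevant $\mu$'s are supported on the zero-sets of these cost functions. Term-by-term summation then gives equality of the two BLP objectives under the bijection. The inverse map is defined symmetrically by $\alpha_v(a) = \alpha'_v((v,a))$ and $\mu_t(\hat x) = \mu'_t(((v(t,k),\hat x_k))_k)$, and the same verifications show it lands in feasible solutions of $\BLP(\bar{\calI})$ of the same value; the two relaxations are also simultaneously infeasible, which settles the $\infty$ case. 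No step presents a genuine obstacle; the main risk is notational, namely to keep clear which coordinates are original labels in $D$ versus the tagged copies in $D'_v \subseteq D'$, but the construction of $\Gamma'$ forces all of the relevant supports, and the bookkeeping then goes through immediately.
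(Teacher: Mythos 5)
Your proof is correct and is exactly the bijective correspondence the paper invokes: the paper's proof of this lemma is the single sentence ``It is straightforward to check that there is a one-to-one correspondence between the sets of feasible solutions to BLP relaxations for $\calI'$ and $\bar{\calI}$, and that this correspondence also preserves the values of solutions.'' You have simply spelled out the map $\alpha'_v((v,a))=\alpha_v(a)$, $\mu'_t(((v(t,k),\hat x_k))_k)=\mu_t(\hat x)$, and the routine feasibility and value checks that the authors left implicit.
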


\begin{lemma}
\label{lem:cyclic}
If $\Gamma$ has a cyclic fractional polymorphism of arity $m>1$ then $\Gamma'$ has the same property.
\end{lemma}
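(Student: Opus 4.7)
The plan is to lift the cyclic fractional polymorphism $\omega$ of $\Gamma$ to a fractional operation $\omega'$ of the same arity $m$ on $D'$: for each $g\in\supp(\omega)$ I construct an operation $g':(D')^m\rightarrow D'$ and set $\omega'(g')=\omega(g)$ (aggregating weights in case distinct $g$'s happen to yield the same $g'$). The construction exploits one crucial feature of $\Gamma'$: whenever $g'$ is applied componentwise to $m$ tuples lying in the domain of any constraint of $\Gamma'$, each resulting column shares a common first coordinate from $V$.

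As a preparatory step I would show that every $g\in\supp(\omega)$ preserves each set $D_v\subseteq D$. The instance $\calI$ itself expresses, by minimizing $f_\calI$ over all variables except $v$, a unary cost function whose finite part is exactly $D_v$; hence $u_{D_v}\in\langle\Gamma\rangle$, and by Lemma~\ref{lem:exppower}(1) $\omega$ is a fractional polymorphism of $u_{D_v}$, which for a $\{0,\infty\}$-valued function forces $g(D_v^m)\subseteq D_v$ for every $g\in\supp(\omega)$. I then fix an arbitrary $d_0\in D'$ and define
$$g'((v_1,a_1),\ldots,(v_m,a_m))\ =\ \begin{cases}(v,g(a_1,\ldots,a_m)) & \text{if } v_1=\cdots=v_m=v,\\ d_0 & \text{otherwise.}\end{cases}$$
Well-definedness of the first case uses the auxiliary fact; cyclicity of $g'$ reduces to cyclicity of $g$ on column-aligned inputs and is automatic elsewhere, because the property ``all first coordinates coincide'' is invariant under cyclic shifts.

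It remains to verify that $\omega'$ is a fractional polymorphism of each function in $\Gamma'$. For $=_{D'}$ the inequality is vacuous, because componentwise application to equal pairs yields equal pairs; for $u_{D'_v}$ it follows from the auxiliary fact, since the single column is $v$-aligned and $g'$ therefore lands in $D'_v$. For a lifted constraint $f_t^{\langle v(t,1),\ldots,v(t,n_t)\rangle}$ (and its domain indicator), the very definition of $f^{\langle\cdot\rangle}$ forces the $k$-th column of any tuple in its domain to have first coordinate $v(t,k)$, so the aligned branch of $g'$ applies in every column and inequality~\eqref{eq:wpol-dist} collapses to the fractional polymorphism inequality for $\omega$ on $f_t$. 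I do not foresee a serious obstacle; the only step that deserves attention is the observation that every input relevant to~\eqref{eq:wpol-dist} is column-aligned, which is immediate from the construction of $\Gamma'$.
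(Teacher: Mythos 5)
Your proposal is correct and follows essentially the same construction as the paper: the same lift $g\mapsto g'$ with a fixed fallback element for misaligned inputs and $\omega'(g')=\omega(g)$. The one addition you make --- explicitly deducing from $u_{D_v}=\dom\bigl(\min_{x_{-v}}f_\calI\bigr)\in\langle\Gamma\rangle$ and Lemma~\ref{lem:exppower}(1) that each $g\in\supp(\omega)$ preserves $D_v$, so that $(v,g(\hat x_1,\ldots,\hat x_m))$ really lies in $D'$ --- is exactly the well-definedness point the paper subsumes under ``straightforward to check,'' and is correctly handled.
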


\begin{proof}
Let $\omega$ be a cyclic fractional polymorphism of $\Gamma$.
Fix an arbitrary element $d'\in D'$. For each operation $g\in \supp(\omega)$, define the operation $g'$ on $D'$ as follows:
\[
g'(x_1,\ldots,x_m)=\begin{cases}
(v,g(\hat{x}_1,\ldots,\hat{x}_m)) &\mbox{if } x_1\!=\!(v,\hat{x}_1),\ldots,x_m\!=\!(v,\hat{x}_m) \mbox{ for some } v\!\in\! V \\
d' &\mbox{otherwise }
\end{cases}
\]
Clearly, each operation $g'$ is cyclic.
Consider the fractional operation $\omega'$ on $D'$ such that $\omega(g')=\omega(g)$ for all $g\in supp(\omega)$.
It is straightforward to check that $\omega'$ is a fractional polymorphism of $\Gamma'$.
\end{proof}

To prove Theorem~\ref{thm:main-alg}, it remains to show that $\Opt(\calI')=\BLP(\calI')$. We will prove the more general fact that BLP solves $\VCSP{\Gamma'}$.
The properties of the language $\Gamma'$ that we use for this (apart from having a cyclic fractional polymorphism)
are given below in Definition~\ref{def:block-finite}.



\begin{definition}
A finite language $\Gamma$ 
is called {\em block-finite} if its domain $D$ can be
partitioned into disjoint subsets $\{D_v\:|\:v\in V\}$ such that
\begin{itemize}
\item[(a)] For any $a\in D_v$ with $v\in V$ there exists a polymorphism $g_a\in\calO^{(1)}$ of $\Feas\Gamma$
such that $g_a(b)=a$ for all $b\in D_v$.
\item[(b)] For any $n$-ary function $f\in\Gamma$,
the relation $\dom f$ (viewed as a function $D^n\rightarrow\{0,\infty\}$) belongs to $\Gamma$.
Furthermore, the binary equality relation on $D$, denoted as $=_D:D^2\rightarrow\{0,\infty\}$, also belongs to $\Gamma$.
\item[(c)] Any $n$-ary function $f\in\Gamma-\{=_D\}$ satisfies $\dom f\subseteq D_{v_1}\times\ldots\times D_{v_n}$
for some $v_1,\ldots,v_n\in V$.
\end{itemize}
\label{def:block-finite}
\end{definition}

It is easy to see that the language $\Gamma'$ defined in the previous section is block-finite.
It obviously has properties (b) and (c), and it has property (a) because the instance
$\calI'$ is $(1,\infty)$-minimal. Indeed, if $a=(v,d)\in D'_v$ then, by definition, $\calI$
has a feasible solution $\sigma:V\rightarrow D$ with $\sigma(v)=d$.
Define function $g_a$ as follows: for each $a'=(v',d')\in D'$, set $g_a(a')=(v',\sigma(v'))$.
It is easy to check that $g_a$ has the required properties.

From now on, we forget about the original language $\Gamma$ from the previous section and about the specific language $\Gamma'$ and
work with an arbitrary block-finite language that has a cyclic fractional polymorphism of arity at least 2. For simplicity, we denote our language
by $\Gamma$.
Note that $\Gamma$ is not necessarily a (rigid) core, but this property is not required in Theorem~\ref{thm:BLP}.
By Theorem~\ref{thm:BLP}, in order to prove Theorem~\ref{thm:main-alg}, it remains to show the following.

\begin{theorem}
Suppose that a block-finite language $\Gamma$
admits a cyclic fractional polymorphism $\nu$ of arity at least 2.
Then, for every $m\ge 2$, $\Gamma$ admits a symmetric fractional polymorphism $\omega^{\tt sym}_m$ of arity $m$.
%
\label{th:symm}
\end{theorem}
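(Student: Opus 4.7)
My plan is to invoke Theorem~\ref{thm:BLP}, which reduces the task to showing that BLP solves every instance of $\VCSP{\Gamma}$; by the ``only if'' direction of that theorem, BLP-solvability immediately yields symmetric fractional polymorphisms of every arity $m \geq 2$. So fix an instance $\calI$ with optimal BLP solution $(\mu,\alpha)$, and the aim is to construct an integral assignment whose cost is at most $\BLP(\calI)$.

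The block-finite structure provides the scaffolding. Property~(c) of Definition~\ref{def:block-finite} confines the support of each $\mu_t$ to a product of blocks, and property~(b), together with the equality relation $=_D\in\Gamma$, allows us to partition the variables of $\calI$ into classes in which all variables take values from a common block $D_v$. Working block-by-block, I would run a rounding argument in the spirit of Thapper--\v{Z}ivn\'y~\cite{tz13:stoc}: draw independent assignments consistent with the BLP marginals $\alpha_v$ and combine them through the cyclic fractional polymorphism $\nu$ of some arity $k \geq 2$. The cyclic invariance of $\nu$ is what ensures that the expected cost of the combined assignment stays bounded by $\BLP(\calI)$. The unary polymorphisms $g_a$ guaranteed by property~(a) then serve as a ``projection'': each $g_a$ collapses its block to a single element $a$ while remaining a polymorphism of $\Feas{\Gamma}$, which lets us turn the rounded distribution into a deterministic feasible assignment without losing optimality.

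The main technical obstacle, as opposed to the Thapper--\v{Z}ivn\'y setting, is the presence of $\infty$-valued functions: a naive averaging could produce infeasible tuples, something finite-valued rounding never has to worry about. Block-finiteness rescues the argument here, because $\dom f$ lies in a product of blocks and the operations $g_a$ preserve $\Feas{\Gamma}$, so the rounding can be confined to individual blocks where feasibility is automatically maintained. Making this feasibility-aware rounding rigorous, and pushing it through to the exact equality $\Opt(\calI)=\BLP(\calI)$ rather than merely an approximate bound, is the delicate step, and it is exactly where the cyclic invariance of $\nu$ and the three block-finite axioms must cooperate.
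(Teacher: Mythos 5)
Your plan is not a proof but a sketch, and the sketch defers exactly the hard content of the theorem. Invoking Theorem~\ref{thm:BLP} in the ``only if'' direction is logically sound, but it is not a reduction: BLP-solvability of $\VCSP\Gamma$ is \emph{equivalent} (by that same theorem) to having symmetric fractional polymorphisms of all arities, so you have merely restated the goal, not simplified it. You would then need to prove BLP-solvability from scratch, and the rounding argument you gesture at does not hold up. A single cyclic fractional polymorphism $\nu$ of some fixed arity $k$ gives, for $k$ feasible tuples $x^1,\ldots,x^k$, the bound $\sum_g \nu(g) f(g(x^1,\ldots,x^k)) \le \tfrac{1}{k}\sum_i f(x^i)$ --- a \emph{uniform} average --- whereas $\BLP(\calI)$ is a weighted average under arbitrary distributions $\mu_t$ that need not be approximable by a fixed number $k$ of samples. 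Bridging that gap is precisely what symmetric fractional polymorphisms of \emph{all} arities provide; the claim that ``the cyclic invariance of $\nu$ ensures that the expected cost \ldots stays bounded by $\BLP(\calI)$'' asserts the conclusion rather than deriving it. In fact the cited Thapper--\v{Z}ivn\'y argument proceeds in the opposite direction you suggest: it first \emph{constructs} symmetric polymorphisms of all arities and then deduces that BLP solves, which is also the route taken in this paper.

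You also leave the roles of the block-finite axioms undetermined. Property~(c) does confine $\dom f$ to a product of blocks, and property~(a) supplies unary polymorphisms $g_a$ collapsing a block to a point, but you never specify how these are woven into a rounding procedure that simultaneously preserves feasibility and achieves the \emph{exact} equality $\Opt(\calI)=\BLP(\calI)$ --- you yourself flag this as ``the delicate step.'' The paper's actual proof is algebraic rather than a rounding argument: it builds a maximal generalized fractional polymorphism $\omega$ supported inside $\Omega$ (the permutation-commuting mappings), studies the associated graph $(\mathbb{G},E)$ and its sinks $\mathbb{G}^\ast$, constructs special functions $f\in\langle\Gamma\rangle$ with $\arg\min f = \mathbb{G}(\hat x)$ (Theorem~\ref{th:specialInstance}), and shows via Theorem~\ref{th:rangeMain} --- handled separately for $m=2$ (using the cyclic polymorphism directly) and $m\ge 3$ (bootstrapping from the $(m-1)$-ary symmetric polymorphism) --- that every $\hat x\in Range_1(\mathbb{G}^\ast)$ is a constant tuple, whence the fractional polymorphism supported on $\mathbb{G}^\ast$ is symmetric. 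Your plan engages with none of this inductive structure and supplies no substitute for it. As it stands the proposal has a genuine gap: the central step is asserted, not proved.
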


In the rest of the paper we prove Theorem~\ref{th:symm}.
This will be done in two steps: (i) using the existence of $\nu$, prove the existence of $\omega^{\tt sym}_2$;
(ii) using the existence of $\omega^{\tt sym}_{m-1}$ for some $m\ge 3$, prove the existence of $\omega^{\tt sym}_{m}$.
The claim will then follow by induction on $m$.

Note that for finite-valued languages step (i) was proved in~\cite{tz13:stoc}
(or rather a very closely related statement), while step (ii) was established in~\cite{kolmogorov15:power}.
However, in both cases it was essential that the language is finite-valued.
The arguments in~\cite{kolmogorov15:power,tz13:stoc} seem to break down when infinities are allowed.
 For example, we were unable to extend the approach in~\cite{tz13:stoc} that exploits the connectivity
 of a certain graph on $D$. To deal with block-finite languages, we will introduce (in Section~\ref{sec:spec}) a new technical tool
 where we first prove, via Farkas Lemma, the existence of a certain function with special properties in $\closure\Gamma$.

\section{A graph of generalized operations}\label{sec:G}
In this section we describe a basic tool that will be used for constructing new fractional polymorphisms,
namely a graph of generalized operations introduced in~\cite{kolmogorov15:power}.

Let $\calO^{(m\rightarrow m)}$ be the set of mappings ${\bf g}:D^m\rightarrow D^m$ and let ${\mathds 1}\in\calO^{(m\rightarrow m)}$ be the identity mapping.
Consider a sequence $x$ of $m$ labelings $x\in [D^n]^m$; this means that $x=(x^1,\ldots,x^m)$ where $x^i\in D^n$ (think of $x$ as an $m \times n$ matrix whose rows are $x^1$, \ldots, $x^m$).
For an $n$-ary function $f$, we define $f^m(x)=\frac{1}{m}(f(x^1)+\ldots+f(x^m))$ (thus $f^m(x)$ is the average value of $f$ on the rows of $x$).
For a mapping ${\bf g}=(g_1,\ldots,g_m)\in\calO^{(m\rightarrow m)}$, we also
 denote $x^{{\bf g}i}=g_i(x)$ for $i\in [m]$ and ${\bf g}(x)=(x^{{\bf g}1},\ldots,x^{{\bf g}m})$ (so $g(x)$ is an $m \times n$ matrix where row $i$ is obtained by column-wise application of $g_i$ to $x$).

A probability distribution $\rho$ over $\calO^{(m\rightarrow m)}$ will be called
a {\em (generalized) fractional polymorphism of $\Gamma$ of arity $m\rightarrow m$} if each function $f\in\Gamma$  satisfies
\begin{eqnarray}
\sum_{{\bf g}\in\supp(\rho)} \rho({\bf g})f^m({\bf g}(x))&\le& f^m(x)\qquad \forall x\in  [\dom f]^m \label{eq:genfpol}
\end{eqnarray}

We will sometimes represent fractional polymorphisms of arity $m$ and generalised fractional polymorphisms of arity $m\rightarrow m$ as vectors in $\mathbb{R}^{\calO^{(m)}}$ and $\mathbb{R}^{\calO^{(m\rightarrow m)}}$, respectively. For $g\in \calO^{(m)}$ and $\mathbf{g}\in \calO^{(m\rightarrow m)}$,
we denote the corresponding characteristic vectors by $\chi_{g}$ and $\chi_{{\bf g}}$ respectively.
It can be checked
that a generalized fractional polymorphism $\rho$ of arity $m\rightarrow m$ can be converted into a fractional polymorphism $\rho'$ of arity $m$, as follows:
\[
\rho'=\sum_{\mathbf{g} =(g_1,\ldots,g_m)\in\supp(\rho)}{\frac{\rho({\bf g})}{m}(\chi_{g_1}+\ldots+\chi_{g_m})}.
\]


We will use the following construction in several parts of the proof. Assume that we have some probability distribution $\omega$ with a finite support such that (i)
each element $s\in\supp(\omega)$ corresponds to an element of $\calO^{(m\rightarrow m)}$ denoted as ${\mathds 1}^s$, and (ii)
this distribution satisfies the following property for each $f\in\Gamma$:
\begin{subequations}
\begin{eqnarray}
\sum_{s\in\supp(\omega)} \omega(s)f^m({\mathds 1}^s(x))&\le& f^m(x)\qquad \forall x\in  [\dom f]^m \label{eq:G:omega}
\end{eqnarray}
Condition~\eqref{eq:G:omega} then can be rephrased as saying that vector $\sum_{s\in\supp(\omega)}\omega(s)\chi_{{\mathds 1}^s}$ is a  fractional polymorphism of $\Gamma$ of arity $m\rightarrow m$.
We will also consider the following condition:
\begin{eqnarray}
\sum_{s\in\supp(\omega)} \omega(s)f(x^{{\mathds 1}^si})&\le& f^{m-1}(x_{-i})\qquad \forall x\in   [\dom f]^m,i\in[m] \label{eq:G:omega'}
\end{eqnarray}
\end{subequations}
where $x_{-i}\in[\dom f]^{m-1}$ denotes the sequence of $m-1$ labelings obtained from $x$ by removing the $i$-th labeling.
Note that condition \eqref{eq:G:omega'} implies  \eqref{eq:G:omega}
(since summing \eqref{eq:G:omega'} over $i\in[m]$ and dividing by $m$ gives \eqref{eq:G:omega}).
The second condition will be used only in one of the results;
unless noted otherwise, $\omega$ is only assumed to satisfy~\eqref{eq:G:omega}.

For a mapping ${\bf g}\in \calO^{(m\rightarrow m)}$ denote ${\bf g}^s={\mathds 1}^s\circ{\bf g}$.
(This notation is consistent with the earlier one since ${\mathds 1}^s\circ{\mathds 1}={\mathds 1}^s$ for any $s$).
We use ${\bf g}^{s_1\ldots s_k}$ to denote $(\ldots({\bf g}^{s_1})^{\ldots})^{s_k}={\mathds 1}^{s_k}\circ\ldots\circ{\mathds 1}^{s_1}\circ{\bf g}$.
Next, define a directed graph $(\mathbb G,E)$ as follows:
\begin{itemize}
\item $\mathbb G=\{{\mathds 1}^{s_1\ldots s_k}\:|\:s_1,\ldots,s_k\in\supp(\omega),k\ge 0\}$
is the set of all mappings that can be obtained from ${\mathds 1}$ by applying operations from $\supp(\omega)$;
\item $E=\{({\bf g},{\bf g}^s)\:|\:{\bf g}\in \mathbb G,s\in\supp(\omega)\}$.
\end{itemize}
This graph can be decomposed into strongly connected components, yielding a directed acyclic
graph (DAG) on these components. We define
 ${\sf Sinks}({\mathbb G,E})$ to be the set of those strongly connected components $\mathbb H\subseteq \mathbb G$
of $(\mathbb G,E)$ that are sinks of this DAG (i.e.\ have no outgoing edges). Any DAG has at least one sink, therefore ${\sf Sinks}({\mathbb G,E})$ is non-empty. We denote
$\mathbb G^\ast=\bigcup_{\mathbb H\in {\sf Sinks}({\mathbb G,E})} \mathbb H\subseteq \mathbb G$
and $Range_n({\mathbb G}^\ast)=\{{\bf g}^\ast(x)\:|\:{\bf g}^\ast\in\mathbb G^\ast,x\in [D^n]^m\}$.
%
Also, for a tuple $\hat x\in D^m$
we will denote $\mathbb G(\hat x)=\{{\bf g}(\hat x)\:|\:{\bf g}\in\mathbb G\}\subseteq D^m$.

The following facts can easily be shown (see Appendices~\ref{sec:proof:prop:Range}, ~\ref{sec:proof:prop:Ghatx}).
\begin{proposition}
(a) If ${\bf g},{\bf h}\in\mathbb G$ then ${\bf h}\circ{\bf g}\in\mathbb G$.
Moreover, if ${\bf g}\in\mathbb H\in{\sf Sinks}({\mathbb G,E})$ then ${\bf h}\circ{\bf g}\in\mathbb H$. \\
(b) Consider connected components ${\mathbb H}',{\mathbb H}\in{\sf Sinks}({\mathbb G,E})$.
For each ${\bf g}'\in\mathbb H'$ there exists ${\bf g}\in\mathbb H$ satisfying ${\bf g}\circ{\bf g}'={\bf g}'$. \\
(c)
For each $x\in Range_n({\mathbb G}^\ast)$ and $\mathbb H\in{\sf Sinks}({\mathbb G,E})$
there exists ${\bf g}\in\mathbb H$ satisfying ${\bf g}(x)=x$.
\label{prop:Range}
\end{proposition}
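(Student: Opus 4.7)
The plan is to work through parts (a), (b), (c) in order, since (b) will depend on (a) and (c) will follow immediately from (b).

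For (a), I would unfold definitions: any $\mathbf{g}\in\mathbb{G}$ has the form $\mathds{1}^{s_1\ldots s_k} = \mathds{1}^{s_k}\circ\cdots\circ\mathds{1}^{s_1}$, and similarly $\mathbf{h} = \mathds{1}^{r_1\ldots r_\ell}$, so $\mathbf{h}\circ\mathbf{g} = \mathds{1}^{s_1\ldots s_k r_1\ldots r_\ell}\in\mathbb{G}$. For the ``moreover'' part, the key observation is that the passage $\mathbf{g} \mapsto \mathbf{g}^s = \mathds{1}^s\circ\mathbf{g}$ is exactly following an edge of $(\mathbb{G},E)$. Thus starting from $\mathbf{g}\in\mathbb{H}$ and left-composing with $\mathds{1}^{r_1},\mathds{1}^{r_2},\ldots,\mathds{1}^{r_\ell}$ in turn produces a directed walk $\mathbf{g}\to\mathbf{g}^{r_1}\to\mathbf{g}^{r_1 r_2}\to\cdots\to\mathbf{h}\circ\mathbf{g}$. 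Since $\mathbb{H}$ is a sink SCC, no edge escapes it, so by induction the entire walk stays in $\mathbb{H}$.

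For (b), I would pick any $\mathbf{h}\in\mathbb{H}$ (nonempty since $\mathbb{H}$ is an SCC). By part (a) applied in the sink $\mathbb{H}^*$, $\mathbf{h}\circ\mathbf{g}^*\in\mathbb{H}^*$. Because $\mathbb{H}^*$ is strongly connected and contains both $\mathbf{h}\circ\mathbf{g}^*$ and $\mathbf{g}^*$, there is a directed path from $\mathbf{h}\circ\mathbf{g}^*$ back to $\mathbf{g}^*$ in $(\mathbb{G},E)$; unfolding the edges of this path yields $t_1,\ldots,t_j\in\supp(\omega)$ with $\mathds{1}^{t_j}\circ\cdots\circ\mathds{1}^{t_1}\circ\mathbf{h}\circ\mathbf{g}^* = \mathbf{g}^*$. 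Setting $\mathbf{g} := \mathds{1}^{t_j}\circ\cdots\circ\mathds{1}^{t_1}\circ\mathbf{h}$, the ``moreover'' of (a) (starting from $\mathbf{h}\in\mathbb{H}$ and left-composing) gives $\mathbf{g}\in\mathbb{H}$, and by construction $\mathbf{g}\circ\mathbf{g}^* = \mathbf{g}^*$.

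For (c), any $x\in Range_n(\mathbb{G}^\ast)$ can be written as $x = \mathbf{g}^*(y)$ for some $\mathbf{g}^*\in\mathbb{G}^*$ (so $\mathbf{g}^*$ lies in some sink $\mathbb{H}^*$) and some $y\in[D^n]^m$. Applying (b) to this $\mathbf{g}^*$ and the target sink $\mathbb{H}$ produces $\mathbf{g}\in\mathbb{H}$ with $\mathbf{g}\circ\mathbf{g}^* = \mathbf{g}^*$, and then $\mathbf{g}(x) = \mathbf{g}(\mathbf{g}^*(y)) = \mathbf{g}^*(y) = x$.

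I do not expect a real obstacle: everything reduces to the elementary observation that left-composition with generators corresponds to edges of $(\mathbb{G},E)$, together with the two defining features of a sink SCC (no escaping edges, and strong connectivity inside). The only point worth being careful about is keeping the composition order straight when translating between the notation $\mathbf{g}^{s_1\ldots s_k}$ and the composition $\mathds{1}^{s_k}\circ\cdots\circ\mathds{1}^{s_1}\circ\mathbf{g}$, so that the ``walks'' in the graph really correspond to left-multiplications.
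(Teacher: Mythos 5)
Your proof is correct and follows essentially the same route as the paper's: unfolding $\mathbb{G}$ elements as compositions of generators for part (a), using strong connectivity of $\mathbb{H}^*$ to close a loop back to $\mathbf{g}^*$ for part (b), and deducing (c) directly from (b). The only difference is cosmetic (you use separate index sets $s_i$, $r_i$ rather than a shared range, and you make explicit the appeal to the ``moreover'' of (a) where the paper says ``it can be checked'').
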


\begin{proposition}
Suppose that $\hat x\in Range_1(\mathbb G^\ast)$ and $x\in\mathbb G(\hat x)$. \\
(a) There holds $x\in Range_1(\mathbb G^\ast)$. \\
(b) There exists ${\bf g}\in\mathbb G$ such that ${\bf g}(x)=\hat x$.
\label{prop:Ghatx}
\end{proposition}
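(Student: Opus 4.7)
The plan is to reduce both parts to Proposition~\ref{prop:Range}(a), which asserts that left-multiplication by any element of $\mathbb G$ preserves membership in a sink component. First I would unpack the hypotheses: write $\hat x={\bf g}^\ast(y)$ for some ${\bf g}^\ast\in\mathbb G^\ast$ and $y\in D^m$, and $x={\bf h}(\hat x)$ for some ${\bf h}\in\mathbb G$; let $\mathbb H^\ast\in{\sf Sinks}(\mathbb G,E)$ be the sink component containing ${\bf g}^\ast$.

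For (a), Proposition~\ref{prop:Range}(a) immediately gives ${\bf h}\circ{\bf g}^\ast\in\mathbb H^\ast\subseteq\mathbb G^\ast$, and then $x={\bf h}(\hat x)={\bf h}({\bf g}^\ast(y))=({\bf h}\circ{\bf g}^\ast)(y)$ exhibits $x$ as an element of $Range_1(\mathbb G^\ast)$.

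For (b), I would exploit strong connectedness of $\mathbb H^\ast$. Since both ${\bf g}^\ast$ and ${\bf h}\circ{\bf g}^\ast$ lie in $\mathbb H^\ast$, there is a directed path in $(\mathbb G,E)$ from ${\bf h}\circ{\bf g}^\ast$ back to ${\bf g}^\ast$. By the definition of $E$, such a path is described by a sequence $s_1,\ldots,s_k\in\supp(\omega)$ with ${\mathds 1}^{s_k}\circ\cdots\circ{\mathds 1}^{s_1}\circ{\bf h}\circ{\bf g}^\ast={\bf g}^\ast$. Set ${\bf g}:={\mathds 1}^{s_k}\circ\cdots\circ{\mathds 1}^{s_1}$; this lies in $\mathbb G$ by the very definition of $\mathbb G$, and evaluating the displayed identity at $y$ yields ${\bf g}({\bf h}(\hat x))={\bf g}(x)=\hat x$, as required.

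The only delicate point is the bookkeeping in (b): translating a directed path in $(\mathbb G,E)$ into an explicit composition with the correct left-to-right order. Each edge $({\bf g}',{\bf g}'^s)$ prepends ${\mathds 1}^s$ on the left, so a path of length $k$ produces a leftward composition of $k$ operations from $\{{\mathds 1}^s:s\in\supp(\omega)\}$. Beyond this, no new ideas are needed: the nontrivial closure property, namely that sink components are preserved under arbitrary left-multiplication by elements of $\mathbb G$, has already been packaged into Proposition~\ref{prop:Range}(a).
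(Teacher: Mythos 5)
Your proof is correct and coincides essentially step for step with the paper's own argument: part (a) is exactly the composition $x=({\bf h}\circ{\bf g}^\ast)(y)$ combined with Proposition~\ref{prop:Range}(a), and part (b) uses the same strong-connectedness path from ${\bf h}\circ{\bf g}^\ast$ back to ${\bf g}^\ast$ inside the sink component, yielding ${\bf g}^\ast={\bf g}\circ{\bf h}\circ{\bf g}^\ast$ and then ${\bf g}(x)=\hat x$ by evaluation at $y$.
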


We now state main theorems related to the graph $(\mathbb G,E)$, that are slight extensions of the results in \cite{kolmogorov15:power}. Their proofs use the same techniques as \cite{kolmogorov15:power} and can be found in Appendices~\ref{sec:proof:G:rho},~\ref{sec:proof:th:Ga},~\ref{sec:proof:th:Gb}.
\begin{theorem}
Let $\widehat{\mathbb G}$ be a subset of $\mathbb G$
satisfying the following property: for each ${\bf g}\in\mathbb G$
there exists a path in $(\mathbb G,E)$ from ${\bf g}$ to some node $\widehat{\bf g}\in\widehat{\mathbb G}$.
Then there exists a fractional polymorphism $\rho$ of $\Gamma$ of arity $m\rightarrow m$
with $\supp(\rho)=\widehat{\mathbb G}$.
\label{th:G:rho}
\end{theorem}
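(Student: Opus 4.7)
My plan is to construct $\rho$ as the absorption distribution of a modified random walk on $\mathbb G$ that starts at ${\mathds 1}$, transitions along edges of $(\mathbb G,E)$ according to $\omega$, and is probabilistically absorbed whenever it visits a node of $\widehat{\mathbb G}$. Concretely, fix any $p\in(0,1)$ and consider the Markov chain on $\mathbb G$ in which, from ${\bf g}\notin\widehat{\mathbb G}$ we move to ${\bf g}^s$ with probability $\omega(s)$, while from ${\bf g}\in\widehat{\mathbb G}$ we are absorbed (stay at ${\bf g}$ forever) with probability $p$ and otherwise move to ${\bf g}^s$ with probability $(1-p)\omega(s)$. Because $\mathbb G$ is a finite subset of $\calO^{(m\rightarrow m)}$ and the reachability hypothesis ensures that every state reaches $\widehat{\mathbb G}$ in a bounded number of steps, the chain is absorbing; I define $\rho(\widehat{\bf g})$ to be the probability of absorption at $\widehat{\bf g}$ when the chain is started at ${\mathds 1}$. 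This yields a rational probability distribution supported inside $\widehat{\mathbb G}$.

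To verify that $\rho$ is a fractional polymorphism of arity $m\rightarrow m$, I would fix $f\in\Gamma$ and $x\in[\dom f]^m$ and track the potential $F_t(x)=\sum_{{\bf g}\in\mathbb G}\rho_t({\bf g})\,f^m({\bf g}(x))$, where $\rho_t$ is the full (absorbed plus non-absorbed) distribution at time $t$. Clearly $F_0(x)=f^m({\mathds 1}(x))=f^m(x)$. The key step is the monotonicity $F_{t+1}(x)\le F_t(x)$: at each non-absorbed node ${\bf g}'$, the combined contribution of the $p$-mass that stays in place (if ${\bf g}'\in\widehat{\mathbb G}$) together with the $(1-p)$-mass pushed to successors is, by~\eqref{eq:G:omega} applied at the point ${\bf g}'(x)$, at most $\rho^{\mathrm{nonabs}}_t({\bf g}')\,f^m({\bf g}'(x))$, which matches exactly ${\bf g}'$'s contribution to $F_t$. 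Here one uses that ${\bf g}'(x)\in[\dom f]^m$, which follows by induction on the length of the sequence defining ${\bf g}'$ from the observation that each ${\mathds 1}^s$ with $s\in\supp(\omega)$ preserves $[\dom f]^m$ (otherwise the left-hand side of~\eqref{eq:G:omega} would be infinite). Since $\mathbb G$ is finite and the non-absorbed mass decays geometrically, letting $t\to\infty$ and using finiteness of $f^m(\widehat{\bf g}(x))$ for $\widehat{\bf g}\in\mathbb G$ yields $\sum_{\widehat{\bf g}}\rho(\widehat{\bf g})\,f^m(\widehat{\bf g}(x))\le f^m(x)$.

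It then remains to show $\supp(\rho)=\widehat{\mathbb G}$. The inclusion $\supp(\rho)\subseteq\widehat{\mathbb G}$ is immediate from the construction. Conversely, given $\widehat{\bf g}\in\widehat{\mathbb G}\subseteq\mathbb G$, write $\widehat{\bf g}={\mathds 1}^{s_1\ldots s_k}$ with $s_1,\ldots,s_k\in\supp(\omega)$; the path ${\mathds 1}\to{\mathds 1}^{s_1}\to\cdots\to\widehat{\bf g}$ is traversed without premature absorption with probability at least $(1-p)^k\prod_i\omega(s_i)>0$, and the chain then absorbs at $\widehat{\bf g}$ with probability $p>0$, so $\rho(\widehat{\bf g})>0$. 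The step I expect to be most delicate is the monotonicity $F_{t+1}\le F_t$, since the bookkeeping at nodes of $\widehat{\mathbb G}$, where mass both stays and moves, has to cancel cleanly against~\eqref{eq:G:omega}; once this is verified, the rest is a standard absorbing-chain computation.
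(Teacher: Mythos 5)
Your proof is correct, and it takes a genuinely different route from the paper's. The paper constructs $\rho$ by an explicitly finitary ``mass-shifting'' procedure: it first shows that $\rho[{\bf g}] = \rho + \frac{\rho({\bf g})}{2}\bigl[-\chi_{\bf g}+\sum_{s}\omega(s)\chi_{{\bf g}^s}\bigr]$ is again a generalized fractional polymorphism whenever $\rho$ is, uses this to produce some $\widehat\rho$ with $\supp(\widehat\rho)=\mathbb G$, and then invokes a compactness/maximization argument over the polytope of all such $\rho$ with $\supp(\rho)\subseteq\mathbb G$ and $\rho\ge\widehat\rho$ on $\widehat{\mathbb G}$, showing that a maximizer of $\rho(\widehat{\mathbb G})$ must have support exactly $\widehat{\mathbb G}$. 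You instead realize $\rho$ directly as the absorption distribution of a random walk governed by $\omega$ with geometric-$p$ absorption at $\widehat{\mathbb G}$, and verify the defining inequality via the nonincreasing potential $F_t(x)=\sum_{\bf g}\rho_t({\bf g})f^m({\bf g}(x))$, with the crucial telescoping at nodes of $\widehat{\mathbb G}$ handled by splitting the mass into a $p$-fraction that freezes and a $(1-p)$-fraction to which \eqref{eq:G:omega} applies. The two approaches are of comparable length; yours is constructive (an explicit absorption distribution) and probabilistically intuitive, whereas the paper's avoids Markov-chain language and pushes the ``which support is achievable'' issue into a clean compactness step. Two small things worth tightening if you were to write this up: (i) absorption being almost sure should be justified from the reachability hypothesis together with finiteness of $\mathbb G$ and $p>0$ (your ``bounded number of steps'' phrasing is slightly loose but the conclusion is right); (ii) the finiteness of $f^m({\bf g}(x))$ for all ${\bf g}\in\mathbb G$, which you correctly note follows by induction on words in $\supp(\omega)$, is exactly the observation the paper makes right after Theorem~\ref{th:G:rho}, and it is what lets you pass to the limit $t\to\infty$.
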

We will use this result either for the set $\widehat{\mathbb G}=\mathbb G$ or for the set $\widehat{\mathbb G}=\mathbb G^\ast$;
clearly, both choices satisfy the condition of the theorem. The first choice
gives that $\Gamma$ admits a fractional polymorphism $\rho$ with $\supp(\rho)=\mathbb G$;
therefore, if ${\bf g}\in\mathbb G$, $f\in\closure\Gamma$ and $x\in[\dom f]^m$ then ${\bf g}(x)\in[\dom f]^m$.


\begin{theorem} Consider function $f\!\in\!\closure\Gamma$ of arity $n$ and labelings $x\hspace{-2.1pt}\in\hspace{-2.1pt} Range_n(\mathbb G^\ast)\cap\! [\dom f]^m$.
\begin{itemize}
\item[(a)] There holds $f^m({\bf g}(x))=f^m(x)$ for any ${\bf g}\in\mathbb G$.

\item[(b)] Suppose that condition~\eqref{eq:G:omega'} holds.
Then there exists a probability distribution $\lambda$ over $\mathbb G^\ast$ (which is independent of $f,x$)
such that $f^\lambda_{i'}(x)=f^\lambda_{i''}(x)$ for any $i',i''\in[m]$ where
\begin{equation}
f^\lambda_i(x)=\sum_{{\bf g}\in\mathbb G^\ast}\lambda_{\bf g}f(x^{{\bf g}i}) 
\label{eq:flambdai}
\end{equation}
\end{itemize}
\label{th:G}
\end{theorem}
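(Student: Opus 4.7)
My plan is to drive both claims from the fractional polymorphism $\rho$ of $\Gamma$ supported on $\mathbb G$ that Theorem~\ref{th:G:rho} provides (applied with $\widehat{\mathbb G}=\mathbb G$), together with the structural facts about sinks in Proposition~\ref{prop:Range}, by squeezing the one-sided inequality~\eqref{eq:genfpol} into equalities in every coordinate.

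For part (a), I set $S=\{{\bf g}(x):{\bf g}\in\mathbb G\}$, $m^\ast=\min_{y\in S}f^m(y)$, and fix a minimiser $y_\ast={\bf g}_\ast(x)$; note that $S\subseteq [\dom f]^m$ by Theorem~\ref{th:G:rho}, so $m^\ast$ is finite. Applying $\rho$ at $y_\ast$,
\[
\sum_{{\bf h}\in\mathbb G}\rho({\bf h})\,f^m({\bf h}(y_\ast))\;\le\;f^m(y_\ast)=m^\ast,
\]
while each ${\bf h}(y_\ast)=({\bf h}\circ{\bf g}_\ast)(x)\in S$ by Proposition~\ref{prop:Range}(a), so every summand is at least $m^\ast$; a squeeze forces $f^m({\bf h}(y_\ast))=m^\ast$ for all ${\bf h}\in\mathbb G$. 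Following an edge-path of ${\mathds 1}^s$'s from ${\bf g}_\ast$ into any sink writes some ${\bf g}^\ast={\bf h}\circ{\bf g}_\ast\in\mathbb G^\ast$ with $f^m({\bf g}^\ast(x))=m^\ast$, so I may assume from the outset that ${\bf g}_\ast$ lies in some sink $\mathbb H^\ast$. Because $x\in Range_n(\mathbb G^\ast)$, Proposition~\ref{prop:Range}(c) produces some ${\bf g}\in\mathbb H^\ast$ with ${\bf g}(x)=x$; strong connectedness of $\mathbb H^\ast$ in $(\mathbb G,E)$ writes ${\bf g}={\bf h}'\circ{\bf g}_\ast$ with ${\bf h}'\in\mathbb G$, whence $x={\bf h}'(y_\ast)$ and therefore $f^m(x)=m^\ast$. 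Running the same minimum-principle squeeze at $x$ then gives $f^m({\bf g}(x))=m^\ast=f^m(x)$ for every ${\bf g}\in\mathbb G$, which is (a).

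For part (b), I take $\lambda$ to be a stationary distribution on $\mathbb G$ of the Markov chain that sends ${\bf g}$ to ${\mathds 1}^s\circ{\bf g}$ with probability $\omega(s)$ (for instance, a Ces\`aro limit of $\delta_{{\mathds 1}}P^k$); any such $\lambda$ is supported on $\mathbb G^\ast$ and satisfies $\lambda_{\bf g}=\sum_{s,{\bf h}:\,{\mathds 1}^s\circ{\bf h}={\bf g}}\omega(s)\lambda_{\bf h}$. Substituting this identity into $f^\lambda_i(x)=\sum_{{\bf g}}\lambda_{\bf g}f(g_i(x))$ and using $({\mathds 1}^s\circ{\bf h})_i(x)=({\mathds 1}^s)_i({\bf h}(x))$ gives
\[
f^\lambda_i(x)\;=\;\sum_{{\bf h}\in\mathbb G}\lambda_{\bf h}\sum_{s}\omega(s)\,f\bigl(({\mathds 1}^s)_i({\bf h}(x))\bigr).
\]
Since ${\bf h}(x)\in[\dom f]^m$ (Theorem~\ref{th:G:rho} again), I can feed ${\bf h}(x)$ into condition~\eqref{eq:G:omega'} to bound the inner sum by $\frac{1}{m-1}\sum_{j\neq i}f(h_j(x))$, which yields $f^\lambda_i(x)\le\frac{1}{m-1}\sum_{j\neq i}f^\lambda_j(x)$, i.e.\ $m\,f^\lambda_i(x)\le\sum_j f^\lambda_j(x)$. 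Summing this over $i\in[m]$ produces an equality, so each of the $m$ inequalities is individually tight and all $f^\lambda_i(x)$ coincide, which is (b).

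The most delicate step, I expect, is the chain of structural manipulations in (a): lifting an arbitrary minimiser into a sink, locating inside that sink an operation that fixes $x$, and then re-expressing it as ${\bf h}'\circ{\bf g}_\ast$ so that $x$ itself becomes a minimiser of $f^m$. Once this is in hand, the rest is a symmetric-sum trick that flips a row-wise inequality into pointwise equality---the same device drives both halves of the proof.
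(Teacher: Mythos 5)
Your proposal is correct and reaches both conclusions by the same underlying mechanism as the paper: a squeeze argument based on the generalized fractional polymorphism of Theorem~\ref{th:G:rho} for (a), and a stationary distribution combined with condition~\eqref{eq:G:omega'} and an averaging/max argument for (b).

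The one place where you take a genuinely different route is the construction of $\lambda$ in part (b). You produce it as a Markov-chain stationary distribution (via a Ces\`aro limit of $\delta_{\mathds 1}P^k$) and then invoke the standard fact that stationary distributions of a finite chain are supported on the recurrent (sink) classes. The paper instead obtains the same stationary system~\eqref{eq:GAISFLAISHGA} directly on the subgraph $(\mathbb G^\ast,E')$ by a Farkas-lemma duality argument; it in fact remarks explicitly that it follows the argument of~\cite{tz13:stoc} ``without using the language of Markov chains, relying on the Farkas lemma instead.'' Your version is closer to~\cite{tz13:stoc} and is shorter if one is willing to import ergodic-theory facts; the paper's version is more self-contained and delivers a rational $\lambda$ for free (not needed for this theorem, but convenient). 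In part (a) the two proofs differ only in bookkeeping: you take the global minimum $m^\ast$ over $S=\{{\bf g}(x):{\bf g}\in\mathbb G\}$ and then route through a sink via Proposition~\ref{prop:Range}(c), whereas the paper works inside a single sink $\mathbb H$ from the start, shows $\arg\min\{f^m({\bf g}(x)):{\bf g}\in\mathbb H\}=\mathbb H$, and then anchors at the fixed point ${\bf h}(x)=x$; these are the same idea arranged in a different order. Both of your arguments are complete.
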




\section{Constructing special functions}\label{sec:spec}

In this section, we construct special functions in $\langle\Gamma\rangle$ that play an important role in the proof of Theorem~\ref{th:symm}.

For a sequence $x=(x^1,\ldots,x^m)\in D^m$ and a permutation $\pi$ of $[m]$, we define $x^\pi=(x^{\pi(1)},\ldots,x^{\pi(m)})$.
Similarly, for a mapping  ${\bf g}=(g_1,\ldots,g_m)\in\calO^{(m\rightarrow m)}$ define ${\bf g}^\pi=( g_{\pi(1)},\ldots, g_{\pi(m)})$.
Let $\Omega$ be the set of mappings ${\bf g}\in\calO^{(m\rightarrow m)}$ that satisfy the following condition:
\begin{eqnarray}
~&&\hspace{-65pt}\,~~\mbox{ ${\bf g}^\pi(x)={\bf g}(x^\pi)$ for any $x\in D^m$ and any permutation $\pi$ of $[m]$. } \label{eq:Omega:def:a} \\
~&&\hspace{-65pt}\,~~\mbox{ Equivalently, $g_{\pi(i)}(x)=g_i(x^\pi)$ for any $i\in[m]$.} \nonumber
\end{eqnarray}

\begin{proposition}\label{prop:Omegaclosed} If ${\bf g}, {\bf h} \in \Omega$, then ${\bf g} \circ {\bf h} \in \Omega$.
\end{proposition}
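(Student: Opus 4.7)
The plan is to verify $\mathbf{g}\circ\mathbf{h}\in\Omega$ by direct computation, using the equivalent formulation of $\Omega$ from condition~\eqref{eq:Omega:def:a}: a mapping $\mathbf{k}=(k_1,\ldots,k_m)\in\calO^{(m\rightarrow m)}$ lies in $\Omega$ iff $k_{\pi(i)}(x)=k_i(x^\pi)$ for every $i\in[m]$, every $x\in D^m$, and every permutation $\pi$ of $[m]$.

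First I would set $\mathbf{k}=\mathbf{g}\circ\mathbf{h}$, so that $k_i(x)=g_i(\mathbf{h}(x))=g_i(h_1(x),\ldots,h_m(x))$. Fix an arbitrary $x\in D^m$ and an arbitrary permutation $\pi$, and compute $k_i(x^\pi)=g_i(\mathbf{h}(x^\pi))$. Using $\mathbf{h}\in\Omega$, replace $\mathbf{h}(x^\pi)$ by $\mathbf{h}^\pi(x)$. Writing $y:=\mathbf{h}(x)\in D^m$, note that $\mathbf{h}^\pi(x)=(h_{\pi(1)}(x),\ldots,h_{\pi(m)}(x))=(y_{\pi(1)},\ldots,y_{\pi(m)})=y^\pi$. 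Therefore $k_i(x^\pi)=g_i(y^\pi)$.

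Now apply $\mathbf{g}\in\Omega$ to the tuple $y$: this yields $g_i(y^\pi)=g_{\pi(i)}(y)=g_{\pi(i)}(\mathbf{h}(x))=k_{\pi(i)}(x)$. Chaining these equalities gives $k_i(x^\pi)=k_{\pi(i)}(x)$, which holds for all $i,x,\pi$, so $\mathbf{k}\in\Omega$.

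There is no real obstacle here; the only care needed is bookkeeping with the two different actions of $\pi$ — permuting the components of the tuple input ($x\mapsto x^\pi$) versus permuting the indexing of the coordinate functions ($\mathbf{g}\mapsto \mathbf{g}^\pi$) — and recognising that the intermediate tuple $y=\mathbf{h}(x)$ is exactly what lets the condition for $\mathbf{g}$ be invoked after the condition for $\mathbf{h}$ has been used.
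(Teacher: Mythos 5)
Your proof is correct and is essentially the same calculation as the paper's, just spelled out at the coordinate level via the equivalent form $k_{\pi(i)}(x)=k_i(x^\pi)$ rather than as a single chain of equalities between tuples. The key steps — using $\mathbf{h}\in\Omega$ to turn $\mathbf{h}(x^\pi)$ into $(\mathbf{h}(x))^\pi$ and then applying $\mathbf{g}\in\Omega$ to the intermediate tuple $y=\mathbf{h}(x)$ — match the paper's argument exactly.
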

\begin{proof}
Just note that
$$\left( {\bf g} \circ {\bf h} \right)^\pi(x) = {\bf g}^\pi \left( {\bf h}(x) \right) = {\bf g} \left( {\bf h}^\pi(x) \right) = {\bf g} \left( {\bf h}(x^\pi) \right) = \left( {\bf g} \circ {\bf h} \right)(x^\pi)$$
for any $x \in D^m$.
\end{proof}

Consider all generalized fractional polymorphisms $\omega$ of $\Gamma$ of arity $m\rightarrow m$
satisfying $\supp(\omega)\subseteq\Omega$.
At least one such polymorphism exists, namely $\omega=\chi_{\mathds 1}$ where ${\mathds 1}\in\calO^{(m\rightarrow m)}$
is the identity mapping. Among such $\omega$'s, pick one with the largest support.
It exists due to the following observation: if $\omega',\omega''$ are generalized fractional polymorphisms
of $\Gamma$ of arity $m\rightarrow m$ then so is the vector $\omega=\frac{1}{2}[\omega'+\omega'']$,
and $\supp(\omega)=\supp(\omega')\cup\supp(\omega'')$.

Let us apply the construction of Section~\ref{sec:G} starting with the chosen distribution $\omega$,
where for ${\bf g}\in\supp(\omega)$ we define operation ${\mathds 1}^{\bf g}\in\calO^{(m\rightarrow m)}$ via ${\mathds 1}^{\bf g}={\bf g}$. Let the resulting graph be $(\mathbb G, E)$.
It is straightforward to check that condition~\eqref{eq:G:omega} holds: it simply expresses
the fact that $\omega$ is a generalized fractional polymorphsism of $\Gamma$ of arity $m\rightarrow m$.


\begin{proposition} \label{prop:GisOmega} It holds that $\supp(\omega) = \mathbb G$.
\end{proposition}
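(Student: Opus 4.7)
The inclusion $\supp(\omega)\subseteq\mathbb{G}$ is immediate from the definition of $\mathbb{G}$ (take $k=1$ and $s_1={\bf g}$). For the reverse inclusion, my plan is to exploit the maximality of $\supp(\omega)$ by constructing new generalized fractional polymorphisms whose support lies in $\Omega$ and contains the desired elements of $\mathbb{G}$; maximality will then force those elements into $\supp(\omega)$.

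First I will handle the trivial element $\mathds{1}\in\mathbb{G}$. The characteristic vector $\chi_{\mathds{1}}$ is trivially a generalized fractional polymorphism of $\Gamma$ of arity $m\to m$ with support $\{\mathds{1}\}\subseteq\Omega$, so the convex combination $\tfrac{1}{2}(\omega+\chi_{\mathds{1}})$ is also one, with support $\supp(\omega)\cup\{\mathds{1}\}\subseteq\Omega$. Maximality of $\supp(\omega)$ then yields $\mathds{1}\in\supp(\omega)$.

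The main step is to verify that the convolution $\omega*\omega$, defined by
\[
(\omega*\omega)({\bf h})=\sum_{{\bf g}_2\circ{\bf g}_1={\bf h}}\omega({\bf g}_1)\,\omega({\bf g}_2),
\]
is itself a generalized fractional polymorphism of arity $m\to m$. The key observation is that for $x\in[\dom f]^m$, applying~\eqref{eq:G:omega} to $\omega$ forces $f^m({\bf g}_1(x))<\infty$ for every ${\bf g}_1\in\supp(\omega)$ (otherwise the left-hand side would exceed the finite right-hand side), which means ${\bf g}_1(x)\in[\dom f]^m$; hence~\eqref{eq:G:omega} can be applied a second time at the labeling ${\bf g}_1(x)$. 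Summing over ${\bf g}_1$ with weights $\omega({\bf g}_1)$ gives the required inequality for $\omega*\omega$. Moreover, by Proposition~\ref{prop:Omegaclosed}, every element of $\supp(\omega*\omega)$ is a composition of two elements of $\Omega$ and therefore lies in $\Omega$.

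Combining with $\omega$ one more time, $\tfrac{1}{2}(\omega+\omega*\omega)$ is a generalized fractional polymorphism with support contained in $\Omega$, and maximality forces $\supp(\omega*\omega)\subseteq\supp(\omega)$. An induction on $k\ge 1$ using $\omega*\omega^{*(k-1)}$ at each step (the two-fold application argument carries over verbatim) yields $\supp(\omega^{*k})\subseteq\supp(\omega)$ for every $k\ge 1$. Since every element of $\mathbb{G}$ is either $\mathds{1}$ or a composition $s_k\circ\cdots\circ s_1$ with each $s_i\in\supp(\omega)$, this gives $\mathbb{G}\subseteq\supp(\omega)$ and hence equality. The only delicate point is the two-fold application of~\eqref{eq:G:omega}, which relies on automatic propagation of the domain condition along elements of $\supp(\omega)$; everything else is a routine combination of convex mixtures, maximality, and Proposition~\ref{prop:Omegaclosed}.
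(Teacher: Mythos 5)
Your proof is correct, and it takes a genuinely different route from the paper. Both proofs share the first half: $\supp(\omega)\subseteq\mathbb G$ is immediate, and $\mathbb G\subseteq\Omega$ follows from Proposition~\ref{prop:Omegaclosed} (closure of $\Omega$ under composition). The difference lies in how each exhibits a generalized fractional polymorphism with support in $\Omega$ covering all of $\mathbb G$. The paper simply invokes Theorem~\ref{th:G:rho} with $\widehat{\mathbb G}=\mathbb G$ to produce one $\rho$ with $\supp(\rho)=\mathbb G$, and then applies maximality once. You instead build the needed witnesses directly: the convolution $\omega*\omega$, justified by the ``domain propagation'' observation (that for any generalized fractional polymorphism $\rho$ and $x\in[\dom f]^m$ one has ${\bf g}(x)\in[\dom f]^m$ for every ${\bf g}\in\supp(\rho)$, since $\Qc$ has no $-\infty$), and then iterate. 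This is self-contained and does not rely on the graph-theoretic machinery of Theorem~\ref{th:G:rho}; in effect you reprove the small slice of it that this proposition needs. The paper's route is terser here because Theorem~\ref{th:G:rho} is already established (it is needed elsewhere with the non-trivial choice $\widehat{\mathbb G}=\mathbb G^\ast$, where the sink/path condition does real work). One small simplification to note: once you have $\mathds{1}\in\supp(\omega)$ and $\supp(\omega*\omega)\subseteq\supp(\omega)$, the set $\supp(\omega)$ is a submonoid of $\calO^{(m\to m)}$ containing $\supp(\omega)$, hence contains all finite compositions, i.e.\ $\mathbb G$; the explicit induction on $k$ is not strictly necessary, though it is of course valid.
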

\begin{proof} If ${\bf g } \in \supp(\omega)$ then ${\bf g} = {\mathds 1}^{\bf g} \in \mathbb G$.
Conversely, suppose that ${\bf g}\in\mathbb G$.
We can write
${\bf g} = {\mathds 1}^{{\bf g}_k}\circ\ldots\circ{\mathds 1}^{{\bf g}_1}={\bf g}_k \circ \dots \circ {\bf g}_1$ with ${\bf g}_1, \dots, {\bf g}_k \in \supp(\omega) \subseteq \Omega$.
Since $\Omega$ is closed under composition by Proposition \ref{prop:Omegaclosed}, we get ${\bf g} \in \Omega$.
By Theorem \ref{th:G:rho} there exists a generalized fractional polymorphism $\rho$ with $\supp(\rho) = \mathbb G$, and so ${\bf g}\in\supp(\rho)$.
By maximality of $\omega$ we get ${\bf g}\in \supp(\omega)$.
\end{proof}

In the remainder of this section we prove the following theorem.
\begin{theorem}
For any $\hat x\in D^m$ there exists a function $f\in\closure\Gamma$ of arity $m$ with $\arg\min f=\mathbb G(\hat x)$.
\label{th:specialInstance}
\end{theorem}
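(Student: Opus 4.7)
The approach is to construct an instance $\mathcal{J}_{\hat x}$ of $\VCSP\Gamma$ with $m$ distinguished ``output'' variables $y_1,\ldots,y_m$ and some auxiliary variables $z$, and to define $f(y_1,\ldots,y_m):=\min_z f_{\mathcal{J}_{\hat x}}(y,z)$; by construction $f\in\closure\Gamma$. It suffices to make $f$ take values in $\{0,\infty\}$, i.e.\ to express $\mathbb G(\hat x)$ as an $m$-ary relation by a primitive-positive formula over $\Gamma$. Such a formula has access to $\dom h$ for every $h\in\Gamma$, as well as to $=_D$, both by Definition~\ref{def:block-finite}(b).

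I would carry out the construction in two stages. First, \emph{localize}: let $v_i\in V$ be such that $\hat x_i\in D_{v_i}$. Using property~(c) of block-finiteness (which confines $\dom h$-constraints to block products) together with $=_D$, I would force each output $y_i$ into the block $D_{v_i}$. The polymorphism $g_{\hat x_i}\in\pol(\Feas\Gamma)$ from property~(a) would then be used, via standard pp-definability arguments, to further pin candidate outputs towards $\hat x_i$ inside $D_{v_i}$, supplying the role ordinarily played by a unary ``fix-to-$\hat x_i$'' constraint. Second, \emph{copy}: by Theorem~\ref{th:G:rho} applied with $\widehat{\mathbb G}=\mathbb G$, $\Gamma$ admits a generalised fractional polymorphism $\rho$ of arity $m\rightarrow m$ with $\supp(\rho)=\mathbb G$. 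For each ${\bf g}\in\mathbb G$ I would introduce an auxiliary $m$-tuple $z^{\bf g}$ coupled to $y$ via $\dom h$- and equality-gadgets expressing the intended equality $z^{\bf g}={\bf g}(y^{(0)})$ for a common ``anchor'' tuple $y^{(0)}$ pinned to $\hat x$ by stage one.

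For the inclusion $\mathbb G(\hat x)\subseteq\arg\min f$, I would exhibit, for each ${\bf g}\in\mathbb G$, a zero-cost feasible completion: set $y^{(0)}=\hat x$, take $y={\bf g}(\hat x)$, and set $z^{\bf h}={\bf h}(\hat x)$ for all ${\bf h}\in\mathbb G$. Feasibility of every $\dom h$-constraint inserted in stage two would follow from the fact that $\dom h$ is preserved under the action of $\mathbb G$---a consequence of $\rho$ being a generalised fractional polymorphism of $\Gamma$ together with the closure properties of $\mathbb G$ in Proposition~\ref{prop:Range}.

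The main obstacle is the reverse inclusion $\arg\min f\subseteq\mathbb G(\hat x)$. The plan is to argue by contradiction: if there were a feasible $y\notin\mathbb G(\hat x)$, the assignment of the auxiliary tuples $\{z^{\bf g}\}_{{\bf g}\in\mathbb G}$ in a minimum-cost completion would describe a new mapping ${\bf g}'\in\calO^{(m\rightarrow m)}$ sending the anchor to $y$. A careful choice of gadget should ensure ${\bf g}'\in\Omega$ (as defined by~\eqref{eq:Omega:def:a}), and then averaging $\chi_{{\bf g}'}$ into $\omega$ would yield a generalised fractional polymorphism of arity $m\rightarrow m$ with support strictly larger than $\supp(\omega)=\mathbb G$, contradicting Proposition~\ref{prop:GisOmega}. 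Verifying that ${\bf g}'$ lies in $\Omega$ and that the averaged vector still satisfies~\eqref{eq:genfpol} for every $h\in\Gamma$ is the delicate point where I expect the bulk of the work to reside.
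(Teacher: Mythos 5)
Your central reduction---``It suffices to make $f$ take values in $\{0,\infty\}$, i.e.\ to express $\mathbb G(\hat x)$ as an $m$-ary relation by a primitive-positive formula''---is the main gap, and it cannot be repaired. A pp-formula using only $\dom h$ (for $h\in\Gamma$) and $=_D$ defines a relation invariant under \emph{every} polymorphism of $\Feas\Gamma$, whereas $\mathbb G$ is determined by the numerical values of the cost functions (it is the support of a \emph{maximal} generalized fractional polymorphism $\omega$), not just by their domains. Concretely, when $\Gamma$ is finite-valued, $\Feas\Gamma$ consists of full relations, $\Pol(\Feas\Gamma)=\calO_D$, and the only pp-definable $m$-ary relations are Boolean combinations of equalities between coordinates and the full/empty relation; yet $\mathbb G(\hat x)$ is in general a much richer set. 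So $\mathbb G(\hat x)$ is simply not pp-definable over the crisp reduct, and the function $f$ you need is genuinely valued.

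The secondary steps also do not go through. In stage one, you propose to use the unary polymorphisms $g_a\in\Pol(\Feas\Gamma)$ from block-finiteness~(a) to ``pin candidate outputs towards $\hat x_i$.'' This is backwards: because $g_a$ is a polymorphism of $\Feas\Gamma$ and collapses all of $D_v$ to $a$, \emph{every} pp-definable unary relation over $\Feas\Gamma$ containing any point of $D_v$ must contain $a$ for every $a\in D_v$; that is, property (a) is an obstruction to pinning, not a tool for it. (The paper uses (a) only in Lemma~\ref{lemma:RangeInTheSameBlock}, to show constant tuples are feasible.) In stage two, the intended constraint ``$z^{\bf g}={\bf g}(y^{(0)})$'' is not expressible: the operations ${\bf g}$ are not relations in $\Gamma$ and are not pp-definable over $\Gamma$. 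Finally, the reverse inclusion argument---extract a new ${\bf g}'$ from a bad completion, average $\chi_{{\bf g}'}$ into $\omega$, contradict Proposition~\ref{prop:GisOmega}---would require ${\bf g}'$ to actually \emph{improve} every $f\in\Gamma$ in the sense of~\eqref{eq:genfpol}; feasibility of a crisp gadget gives you membership in $\Omega'$ at best, which is much weaker.

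The paper makes exactly your intuition about maximality of $\omega$ precise, but via Farkas duality rather than pointwise extension: infeasibility of strictly enlarging $\supp(\omega)$ within $\Omega'$ yields rational dual multipliers $\lambda(f,x)$, and these are used as \emph{weights} on copies of the cost functions $f\in\Gamma$ (not just $\dom f$) to build a valued instance $\calI'$ on variables $[m]\times D^m$ whose minimizers are precisely the mappings in $\mathbb G$. The target function is then obtained by the ordinary expressibility operation $f(x)=\min\{f_{\calI'}({\bf g}):{\bf g}(\hat x)=x\}$, with no pinning and no attempt to make $f$ crisp. You should redesign the proof around a valued instance and Farkas duality along these lines.
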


\begin{proof}
Let $\Gamma^+$ be the set of pairs $(f,x)$ with $f\in\Gamma$ and $x\in[\dom f]^m$.
Let $\Omega'\subseteq\Omega$ be the set of mappings ${\bf g}\in\Omega$ that
satisfy  ${\bf g}(x)\in[\dom f]^m$ for all $(f,x)\in\Gamma^+$. Note that $\mathbb G=\supp(\omega)\subseteq\Omega'$.
By the choice of $\omega$, the following system does not have a solution with rational $\rho$:
\begin{subequations}
\begin{eqnarray}
\rho({\bf g})&\ge &0 \qquad \forall {\bf g}\in \Omega' \\
\sum_{{\bf g}\in\Omega'}\rho({\bf g})f^m(x)-\sum_{{\bf g}\in\Omega'}\rho({\bf g})f^m({\bf g}(x)) &\ge&0\qquad \forall (f,x)\in\Gamma^+ \\
\sum_{{\bf g}\in\Omega'-\mathbb G}-\rho({\bf g})&<&0
\end{eqnarray}
\end{subequations}
Next, we use the following well-known result (see, e.g.~\cite{Schrijver86:book}).
\begin{lemma}[Farkas Lemma]
Let $A$ be a $p\times q$ matrix and $b$ be a $p$-dimensional vector. Then exactly one of the following is true:
\begin{itemize}
\item There exists $\lambda\in\mathbb R^q$ such that $A\lambda=b$ and $\lambda\ge 0$.
\item There exists $\mu\in\mathbb R^p$ such that $\mu^TA\ge 0$ and $\mu^T b<0$.
\end{itemize}
If $A$ and $b$ are rational then $\lambda$ and $\mu$ can also be chosen in $\mathbb Q^q$ and $\mathbb Q^p$, respectively.
\label{lemma:Farkas}
\end{lemma}
By this lemma, the following system has a solution with rational $\lambda\ge 0$:
\begin{subequations}\label{eq:GJALKJDGHAKSF}
\begin{eqnarray}
\lambda({\bf g})+\sum_{(f,x)\in\Gamma^+}\lambda(f,x)(f^m(x)-f^m({\bf g}(x)) &=&0\hspace{8pt}\qquad \forall {\bf g}\in\mathbb G \\
\lambda({\bf g})+\sum_{(f,x)\in\Gamma^+}\lambda(f,x)(f^m(x)-f^m({\bf g}(x)) &=&-1\qquad \forall {\bf g}\in\Omega'-\mathbb G
\end{eqnarray}
\end{subequations}

We will now define several instances of $\VCSP\Gamma$ where it will be convenient to use constraints with rational positive weights;
these weights can always be made integer by multiplying the instances by an appropriate positive integer,
which would not affect the reasoning, but make notation cumbersome.

We will define a $\Gamma$-instance $\calI$ with $m|D|^m$ variables $\calV=\{(i,z)\:|\:i\in[m],z\in D^m\}$.
The labelings $\calV\rightarrow D$ for this instance can be identified with mappings ${\bf g}=(g_1,\ldots,g_m)\in\calO^{(m\rightarrow m)}$,
if we define ${\bf g}(i,z)=g_i(z)$ for the coordinate $(i,z)\in\calV$.
We define  the cost function of $\calI$ as follows:
\begin{equation}
f_\calI({\bf g})=\sum_{(f,x)\in\Gamma^+,\lambda(f,x)\ne 0}\lambda(f,x)f^m({\bf g}(x))
\qquad \forall{\bf g}\in\calO^{(m\rightarrow m)}
\end{equation}
%
%
From~\eqref{eq:GJALKJDGHAKSF}  we get
\begin{subequations}
\label{eq:KDNFAIDFA}
\begin{eqnarray}
f_\calI({\mathds 1})&=&f_\calI({\bf g})-\lambda({\bf g})\;\le\; f_\calI({\bf g}) \;<\;\infty\qquad \forall {\bf g}\in\mathbb G \\
f_\calI({\mathds 1})&<&f_\calI({\bf g})-\lambda({\bf g})\;\le\; f_\calI({\bf g}) \;<\;\infty\qquad \forall {\bf g}\in\Omega'-\mathbb G
\end{eqnarray}
Furthermore, $f^m(\cdot)$ is invariant with respect to permuting its arguments, and thus
\begin{eqnarray}
f_\calI({\bf g})&=&f_\calI({\bf g}^\pi)\hspace{26pt}\qquad \forall {\bf g}\in\calO^{(m\rightarrow m)},\mbox{ permutation $\pi$ of $[m]$} \label{eq:instance:first':c}
\end{eqnarray}

\end{subequations}

Let $T$ be the set of tuples $(i,j,x,y)$ where $i,j\in[m]$, $x,y\in D^m$ and $i=\pi(j)$, $y=x^{\pi}$ for some permutation $\pi$ of $[m]$.
Define another $\Gamma$-instance $\calI'$ with variables $\calV$ and the cost function
\begin{equation}
f_{\calI'}({\bf g})=f_\calI({\bf g}) \; + \hspace{-5pt}\sum_{(i,j,x,y)\in T}\hspace{-5pt}=_D({\bf g}(i,x),{\bf g}(j,y))
+ \hspace{-5pt}\sum_{(f,x)\in\Gamma^+}\hspace{-5pt} (\dom f)^m({\bf g}(x))
\quad \forall{\bf g}\in\calO^{(m\rightarrow m)}
\label{eq:instance:second}
\end{equation}
where $=_D$ is the equality relation on $D$.
The instance $\calI'$ is a $\Gamma$-instance because of condition (b) in the definition of a block-finite language.
Note that the second term in~\eqref{eq:instance:second} for ${\bf g}=(g_1,\ldots,g_m)$ equals $0$
if $g_{\pi(j)}(x)=g_j(x^\pi)$ for all $j\in[m]$, $x\in D^m$ and permutation $\pi$ of $[m]$. Otherwise the second term equals $\infty$.
In other words, the second term is zero if and only if  mapping ${\bf g}$ satisfies condition~\eqref{eq:Omega:def:a}, i.e.\ if and only if  ${\bf g}\in\Omega$.
Similarly, the third term in~\eqref{eq:instance:second} is zero if ${\bf g}\in\Omega'$,
and $\infty$ if ${\bf g}\in\Omega-\Omega'$.
We obtain that
\begin{subequations}
\label{eq:FALIHGALSF}
\begin{eqnarray}
f_{\calI'}({\mathds 1})&\le &f_{\calI'}({\bf g})\;<\;\infty\qquad \forall {\bf g}\in\mathbb G \\
f_{\calI'}({\mathds 1})&<&f_{\calI'}({\bf g})\;<\;\infty\qquad \forall {\bf g}\in\Omega'-\mathbb G \\
f_{\calI'}({\bf g})&=&\infty\hspace{46pt}\qquad \forall {\bf g}\in\calO^{(m\rightarrow m)}-\Omega'
\end{eqnarray}
\end{subequations}


These equations imply that ${\mathds 1}\in\arg\min f_{\calI'}\subseteq {\mathbb G}$.
We will show next that $\arg\min f_{\calI'}= {\mathbb G}$.

%

For an index $k\in\mathbb Z$ let $\bar k\in[m]$ be the unique index with $\bar k-k=0~(\operatorname{mod}~m)$.
Let $\pi_k$ be the cyclic permutation of $[m]$ with $\pi_k(1)=\bar k$.
In particular, $\pi_1$ is the identity permutation.
Also, for $k \in \mathbb Z$ let $e^k\in\calO^{(m)}$ be the projection to the $\bar k$-th coordinate.
For a mapping ${\bf g}=(g_1,\ldots,g_m)\in\calO^{(m\rightarrow m)}$ and a tuple $z\in D^m$ we will denote ${\bf g}(k,z)={\bf g}(\bar k,z)\in D$.
From the definition, for any permutation $\pi$ of $[m]$ and any $(i,z)\in\calV$ we have ${\bf g}^\pi(i,z)=(g_{\pi(1)},\ldots,g_{\pi(m)})(i,z)=g_{\pi(i)}(z)={\bf g}(\pi(i),z)$. In particular, ${\bf g}^{\pi_j}(i,z)={\bf g}(i+j-1,z)$.

From~\eqref{eq:instance:first':c} we have $f_\calI({\mathds 1}^{\pi_1})=\ldots=f_\calI({\mathds 1}^{\pi_m})=f_\calI({\mathds 1})$.
Recall that $f_{\calI}\in\langle\Gamma\rangle$ admits a generalized fractional polymorphism $\omega$ with $\supp(\omega)=\mathbb G$.
Applying this polymorphism gives
%
\begin{equation}
\sum_{{\bf g}\in\supp(\omega)}\omega({\bf g})f^m_{\calI}({\bf g}({\mathds 1}^{\pi_1},\ldots,{\mathds 1}^{\pi_m}))\le f_\calI^m({\mathds 1}^{\pi_1},\ldots,{\mathds 1}^{\pi_m})=f_\calI({\mathds 1})
\label{eq:FASLASKFASFA}
\end{equation}
Here we view ${\mathds 1}^{\pi_1},\ldots,{\mathds 1}^{\pi_m}$ (and later ${\bf g}^{\pi_1},\ldots,{\bf g}^{\pi_m}$) as labelings for the instance $\calI$,
while ${\bf g}$ is a mapping in $\calO^{(m\rightarrow m)}$ acting on the first $m$ labelings coordinate-wise.
We claim that ${\bf g}({\mathds 1}^{\pi_1},\ldots,{\mathds 1}^{\pi_m})=({\bf g}^{\pi_1},\ldots,{\bf g}^{\pi_m})$
for each ${\bf g}=(g_1,\ldots,g_m)\in \supp(\omega)$. Indeed, we need to show that $g_j({\mathds 1}^{\pi_1},\ldots,{\mathds 1}^{\pi_m})={\bf g}^{\pi_j}$
for each $j\in[m]$.
Let us prove this for coordinate $(i,z)\in \calV$. We can write
\begin{eqnarray*}
g_j({\mathds 1}^{\pi_1}(i,z),\ldots,{\mathds 1}^{\pi_m}(i,z))
&=&g_j(e^i(z),\ldots,e^{i+m-1}(z)) \\
&=& g_j(z^{\pi_i})
\;=\; g_{\pi_i(j)}(z)
\;=\; {\bf g}(i+j-1,z)
\;=\; {\bf g}^{\pi_j}(i,z)
\end{eqnarray*}
which proves the claim. We can now rewrite~\eqref{eq:FASLASKFASFA} as follows:
\begin{equation}
\sum_{{\bf g}\in\supp(\omega)}\omega({\bf g})f^m_{\calI}({\bf g}^{\pi_1},\ldots,{\bf g}^{\pi_m})\le f_\calI({\mathds 1})
\end{equation}
Using \eqref{eq:instance:first':c} and the fact that $f_\calI({\bf g})=f_{\calI'}({\bf g})$ for each ${\bf g}\in\supp(\omega)={\mathbb G}$,
we obtain
\begin{equation}
\sum_{{\bf g}\in\supp(\omega)}\omega({\bf g})f_{\calI'}({\bf g})\le f_{\calI'}({\mathds 1})
\end{equation}
Since ${\mathds 1}\in\arg\min f_{\calI'}$, we conclude that
${\mathbb G}=\supp(\omega)\subseteq\arg\min f_{\calI'}$. Therefore, $\arg\min f_{\calI'}=\mathbb G$.

We can finally prove Theorem~\ref{th:specialInstance}.
We define function $f\in\closure\Gamma$ with $m$ variables as follows:
$$
f(x)=\min_{{\bf g}\in\calO^{(m\rightarrow m)}:{\bf g}(\hat x)=x}f_{\calI'}({\bf g})
\qquad\forall x\in D^m
$$
Consider tuple $x\in D^m$. We have $x\in\arg\min f$ if and only if  there exists ${\bf g}\in\arg\min f_{\calI'}=\mathbb G$ with ${\bf g}(\hat x)=x$.
The latter condition holds if and only if  $x\in\mathbb G(\hat x)$.
\end{proof}


\section{Proof of Theorem~\ref{th:symm}} \label{sec:thsym}

We will prove the following result.
\begin{theorem}
Assume that one of the following holds:
\begin{itemize}
\item[(a)] $m=2$ and $\Gamma$ admits a cyclic fractional polymorphism of arity at least 2.
\item[(b)] $m\ge 3$ and $\Gamma$ admits a symmetric fractional polymorphism of arity $m-1$.
\end{itemize}
Let $f\in\closure\Gamma$ be a function of arity $m$ with $\arg\min f=\mathbb G(\hat x)$, where $\hat x\in Range_1(\mathbb G^\ast)$.
Then for every distinct pair of indices $i,j\in[m]$ there exists $x\in\arg\min f$ with $x_i=x_j$.
\label{th:rangeMain}
\end{theorem}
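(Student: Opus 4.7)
My plan is to produce a tuple $x\in\arg\min f=\mathbb G(\hat x)$ with $x_i=x_j$ by applying the hypothesized cyclic (case~(a)) or symmetric (case~(b)) fractional polymorphism $\nu$ or $\mu$ to a carefully selected family of labelings drawn from $\mathbb G(\hat x)$. Since $\nu$ (resp.\ $\mu$) is also a fractional polymorphism of every function in $\langle\Gamma\rangle$ by Lemma~\ref{lem:exppower}, and $f$ is constantly equal to its minimum on $\mathbb G(\hat x)$, any tuple produced from inputs in $\mathbb G(\hat x)$ lies again in $\arg\min f$. The cyclic (resp.\ symmetric) property of each operation in $\supp(\nu)$ (resp.\ $\supp(\mu)$) will then force coordinates $i$ and $j$ of the output to coincide, provided the inputs are arranged so that the $i$-th and $j$-th columns of the $k\times m$ (resp.\ $(m-1)\times m$) input matrix are a cyclic rotation of, or match as multisets with, each other.

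The first step is to enlarge the supply of operations in $\mathbb G$ that we may freely apply to $\hat x$. By maximality of $\omega$, any $\mathbf{g}\in\Omega$ whose associated $\chi_\mathbf{g}$, averaged with $\omega$, remains a generalized fractional polymorphism of arity $m\to m$ must already lie in $\supp(\omega)=\mathbb G$. In case~(a), $m=2$, the swap $\mathbf{g}_{\mathrm{swap}}(x_1,x_2)=(x_2,x_1)$ lies in $\Omega$ (direct check of the defining condition~\eqref{eq:Omega:def:a} for the unique non-trivial $\sigma$) and makes inequality~\eqref{eq:G:omega} an equality, so $\mathbf{g}_{\mathrm{swap}}\in\mathbb G$; consequently $\mathbb G(\hat x)$ is closed under swapping coordinates. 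In case~(b), $m\geq 3$, for each $h\in\supp(\mu)$ I define $g_{h,i}(x_1,\ldots,x_m)=h(x_1,\ldots,x_{i-1},x_{i+1},\ldots,x_m)$; the mapping $\mathbf{g}_h=(g_{h,1},\ldots,g_{h,m})$ lies in $\Omega$ because $h$ is symmetric in its $m-1$ arguments, and the distribution $\omega_\mu(\mathbf{g}_h)=\mu(h)$ satisfies~\eqref{eq:G:omega} via a direct averaging of the arity-$(m-1)$ symmetric fractional polymorphism inequality over $i\in[m]$, so each $\mathbf{g}_h\in\mathbb G$.

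With these new members of $\mathbb G$ secured, I would construct the input labelings and apply the polymorphism. In case~(a) I take $k$ labelings $y^{2l-1}=\hat x$ and $y^{2l}=\mathbf{g}_{\mathrm{swap}}(\hat x)$ (all in $\mathbb G(\hat x)$): for any $g\in\supp(\nu)$, the first-coordinate sequence $(\hat x_1,\hat x_2,\hat x_1,\hat x_2,\ldots)$ is the cyclic shift by one of the second-coordinate sequence, so cyclicity of $g$ yields $g(y^1,\ldots,y^k)_1=g(y^1,\ldots,y^k)_2$. In case~(b) the aim is to find $m-1$ labelings in $\mathbb G(\hat x)$ whose $i$-th and $j$-th columns coincide as multisets, after which symmetry of $g\in\supp(\mu)$ does the rest; the $\mathbf{g}_h$'s in $\mathbb G$ together with $\hat x$ and further compositions of elements of $\mathbb G$ provide the raw material for assembling such a balanced family.

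I expect the main obstacle to be the balancing construction in case~(b) when $\hat x_i\neq\hat x_j$, since neither copies of $\hat x$ nor of individual $\mathbf{g}_h(\hat x)$'s equalize the $i$-th and $j$-th columns as multisets. My intended route is to observe that $\omega_\mu$ in fact satisfies the stronger condition~\eqref{eq:G:omega'}---since the $(m-1)$-ary symmetric fractional polymorphism inequality is exactly of that shape---and then invoke Theorem~\ref{th:G}(b) to obtain a distribution $\lambda$ over $\mathbb G^\ast$ equalizing $f^\lambda_i$ and $f^\lambda_j$ on $Range_m(\mathbb G^\ast)\cap[\dom f]^m$; combined with Theorem~\ref{th:G}(a) (which already shows that $\mathbb G$-action preserves $f^m$ on $Range_m(\mathbb G^\ast)$), an extremal/convexity argument on the support of $\lambda$ should then produce a single $\mathbf{g}^\ast\in\mathbb G^\ast$ for which $g^\ast_i(\hat x,\ldots,\hat x)$ and $g^\ast_j(\hat x,\ldots,\hat x)$ agree, yielding the desired element of $\arg\min f$. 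A secondary difficulty is odd arity $k$ in case~(a), where the alternating sequence does not close up into a length-$k$ cycle; this I would handle either by first passing to an even-arity cyclic fractional polymorphism derived from $\nu$ by self-composition, or by iterating the polymorphism inequality on a length-$2k$ alternation that revisits $\hat x$ twice.
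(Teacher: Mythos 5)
There are two genuine gaps, one in each case.

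For case (a) ($m=2$), your construction applies $\nu$ (of arity $r$) to the alternating sequence $\hat x,\;\mathrm{swap}(\hat x),\;\hat x,\ldots$, and needs the first- and second-coordinate rows to be cyclic shifts of each other, which fails when $r$ is odd: the row $(\hat x_1,\hat x_2,\ldots,\hat x_1)$ shifted by one gives $(\hat x_2,\ldots,\hat x_1,\hat x_1)$, not $(\hat x_2,\hat x_1,\ldots,\hat x_2)$. You flag this, but the suggested repairs are not justified: self-composition of a cyclic fractional operation does not in general give a cyclic fractional operation of larger arity (cyclicity under a unit shift of the larger tuple is not implied by cyclicity of the pieces), and ``iterating the inequality on a length-$2k$ alternation'' would require a single clean arity-$2k$ fractional polymorphism that you have not produced. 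The paper sidesteps this entirely by padding with a \emph{constant} pair: it picks $a\in A=\{x_1:x\in\mathbb G(\hat x)\}$ minimizing $f(a,a)$ (using Lemma~\ref{lemma:RangeInTheSameBlock} to know $(a,a)\in\dom f$), feeds $(a,b),(b,a),(a,a),\ldots,(a,a)$ into the cyclic $\nu$, and exploits that $(a,b,a,\ldots,a)$ and $(b,a,a,\ldots,a)$ differ by a single cyclic shift regardless of the parity of $r$, together with a minimality argument via the unary $u_A\in\closure\Gamma$, rather than the ``output stays in $\arg\min f$'' argument you sketch.

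For case (b) ($m\ge 3$), the route through $\tilde\omega$, condition~\eqref{eq:G:omega'}, and Theorem~\ref{th:G}(b) is the right starting point, but the final extraction step is not there. Theorem~\ref{th:G}(b) equalizes the \emph{averages} $f^\lambda_{i'}(x)$ across $i'$ for an input $x\in Range_n(\mathbb G^\ast)\cap[\dom f]^m$; equality of averages of $f$-values does not yield a single ${\bf g}^\ast$ with $g^\ast_i(\hat x)=g^\ast_j(\hat x)$, and the proposed ``extremal/convexity argument'' is not a substitute for an actual mechanism. Moreover, the way you want to apply the theorem (to $f$ itself, with $\hat x$) does not type-check: the theorem's $x$ must be a tuple of $m$ labelings. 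The paper's key missing ingredient is the auxiliary binary function $\bar f(a,b)=\min_{z:z_i=a,z_j=b}f(z)\in\closure\Gamma$ and the tuples $x^{[{\bf g}k]}\in D^2$ (set to $(x_i,x_j)$ for $k=i$, $(x_j,x_i)$ for $k=j$, $(x_k,x_k)$ otherwise, where $x={\bf g}(\hat x)$), assembled into $\tilde x\in[D^2]^m$ lying in $Range_2(\tilde{\mathbb G}^\ast)\cap[\dom\bar f]^m$. Applying Theorem~\ref{th:G}(b) to $\bar f$ and $\tilde x$ with indices $i$ and some $k\notin\{i,j\}$ makes the $i$-side average identically $\min\bar f$, which forces every positive-weight term on the $k$-side, each of the form $\bar f(x_k,x_k)$, to equal $\min\bar f$ as well --- exactly what is needed. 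Without this construction, your argument cannot close.
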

We claim that this will imply Theorem~\ref{th:symm}.
Indeed, we can use the following observation.
\begin{proposition}
Suppose that $\hat x\in Range_1(\mathbb G^\ast)$, and there exists $x\in\mathbb G(\hat x)$ with $x_i=x_j$ for some $i,j\in[m]$.
Then $\hat x_i=\hat x_j$.
\label{prop:GHAOGIAHSIFA}
\end{proposition}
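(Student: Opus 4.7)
My plan is to combine Proposition~\ref{prop:Ghatx}(b) with the defining symmetry property that $\mathbb G \subseteq \Omega$. The observation driving everything is that if $x_i = x_j$ then the transposition $\pi = (i\,j)$ fixes $x$, i.e.\ $x^\pi = x$, so any mapping ${\bf h} \in \Omega$ applied to $x$ must satisfy ${\bf h}(x) = {\bf h}(x^\pi) = {\bf h}^\pi(x)$, and reading off the $i$-th coordinate of this equality forces $h_i(x) = h_j(x)$.

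To exploit this, I first pick an ``inverse'' of ${\bf g}$ along $\hat x$. Concretely, since $\hat x \in Range_1(\mathbb G^\ast)$ and $x \in \mathbb G(\hat x)$, Proposition~\ref{prop:Ghatx}(b) produces a mapping ${\bf h} = (h_1,\ldots,h_m) \in \mathbb G$ with ${\bf h}(x) = \hat x$; in particular $\hat x_k = h_k(x)$ for every $k \in [m]$. By Proposition~\ref{prop:GisOmega} we have $\mathbb G = \supp(\omega) \subseteq \Omega$, so ${\bf h}$ satisfies the identity $h_{\pi(k)}(y) = h_k(y^\pi)$ for every $y \in D^m$, every permutation $\pi$ of $[m]$, and every $k \in [m]$.

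Applying this identity with $y = x$, $k = i$, and $\pi = (i\,j)$, the assumption $x_i = x_j$ gives $x^\pi = x$, hence
\[
\hat x_j = h_j(x) = h_{\pi(i)}(x) = h_i(x^\pi) = h_i(x) = \hat x_i,
\]
which is exactly the desired equality $\hat x_i = \hat x_j$.

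I don't anticipate any genuine obstacle: the entire argument is a two-line manipulation once one notices that $x^\pi = x$ under the hypothesis. The only conceptual point is that Proposition~\ref{prop:Ghatx}(b) lets us pass from $\hat x \mapsto x$ back to $x \mapsto \hat x$ within $\mathbb G$, so that the symmetry property of elements of $\Omega$ (rather than just of ${\bf g}$) can be applied to draw the conclusion about $\hat x$.
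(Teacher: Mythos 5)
Your proof is correct and essentially identical to the paper's: both invoke Proposition~\ref{prop:Ghatx}(b) to obtain a mapping in $\mathbb G$ sending $x$ to $\hat x$, then apply the symmetry identity from condition~\eqref{eq:Omega:def:a} with the transposition swapping $i$ and $j$, using $x^\pi = x$. Your version is slightly more explicit in citing Proposition~\ref{prop:GisOmega} to justify that $\mathbb G \subseteq \Omega$, which the paper leaves implicit.
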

\begin{proof}
By Proposition~\ref{prop:Ghatx}(b),
 there exists ${\bf g}\in\mathbb G$ such that ${\bf g}(x)=\hat x$.
Let $\pi$ be the permutation of $[m]$ that swaps $i$ and $j$. By the choice of $x$, we have $x^\pi=x$.
We can write $\hat x_j=g_j(x)=g_{\pi(i)}(x)=g_i(x^\pi)=g_i(x)=\hat x_i$. This proves the claim.
\end{proof}

\begin{corollary}
If the precondition of Theorem~\ref{th:rangeMain} holds,
then $\Gamma$ admits a symmetric fractional polymorphism of arity $m$.
\end{corollary}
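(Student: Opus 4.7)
The plan is to combine Theorem~\ref{th:rangeMain} with Theorem~\ref{th:specialInstance} and Proposition~\ref{prop:GHAOGIAHSIFA} to derive a strong structural constraint on $\mathbb G^\ast$, and then apply Theorem~\ref{th:G:rho} followed by the standard conversion from generalized $(m\rightarrow m)$-ary to ordinary $m$-ary fractional polymorphisms. The real work has already been done in Theorem~\ref{th:rangeMain}; what remains is essentially an assembly step, and so I do not expect a genuine obstacle.

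First I would fix an arbitrary $\hat x\in Range_1(\mathbb G^\ast)$ and show that $\hat x$ is a constant tuple, i.e., $\hat x_1=\cdots=\hat x_m$. By Theorem~\ref{th:specialInstance}, there exists $f\in\langle\Gamma\rangle$ of arity $m$ with $\arg\min f=\mathbb G(\hat x)$. For any distinct $i,j\in[m]$, Theorem~\ref{th:rangeMain} supplies some $x\in\arg\min f=\mathbb G(\hat x)$ with $x_i=x_j$, and Proposition~\ref{prop:GHAOGIAHSIFA} then forces $\hat x_i=\hat x_j$.

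Next I would translate this into a structural statement about $\mathbb G^\ast$. For every ${\bf g}^\ast=(g_1^\ast,\ldots,g_m^\ast)\in\mathbb G^\ast$ and every $y\in D^m$, the tuple ${\bf g}^\ast(y)=(g_1^\ast(y),\ldots,g_m^\ast(y))$ lies in $Range_1(\mathbb G^\ast)$, so by the previous step $g_1^\ast(y)=\cdots=g_m^\ast(y)$. Letting $y$ range over $D^m$ gives $g_1^\ast=\cdots=g_m^\ast$ as $m$-ary operations; denote this common operation by $g^\ast$. By Proposition~\ref{prop:GisOmega} we have $\mathbb G^\ast\subseteq\mathbb G=\supp(\omega)\subseteq\Omega$, so ${\bf g}^\ast$ satisfies condition~\eqref{eq:Omega:def:a}, which in the present case reads $g^\ast(y)=g^\ast(y^\pi)$ for every $y\in D^m$ and every permutation $\pi$ of $[m]$; thus $g^\ast$ is symmetric.

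Finally, I would apply Theorem~\ref{th:G:rho} with $\widehat{\mathbb G}=\mathbb G^\ast$ (admissible because every node of $\mathbb G$ reaches the union of sinks of the component DAG) to obtain a generalized fractional polymorphism $\rho$ of $\Gamma$ of arity $m\rightarrow m$ with $\supp(\rho)=\mathbb G^\ast$. Converting $\rho$ via the standard formula stated after the definition of generalized fractional polymorphisms, each ${\bf g}^\ast=(g^\ast,\ldots,g^\ast)\in\supp(\rho)$ contributes $\rho({\bf g}^\ast)\,\chi_{g^\ast}$ to the sum, so the resulting $m$-ary fractional polymorphism $\rho'$ is supported entirely on symmetric operations. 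This $\rho'$ is the required symmetric fractional polymorphism of $\Gamma$ of arity $m$.
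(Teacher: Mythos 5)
Your proposal is correct and follows essentially the same route as the paper's proof: first deduce that every $\hat x\in Range_1(\mathbb G^\ast)$ is a constant tuple (via Theorem~\ref{th:specialInstance}, Theorem~\ref{th:rangeMain}, Proposition~\ref{prop:GHAOGIAHSIFA}), then apply Theorem~\ref{th:G:rho} with $\widehat{\mathbb G}=\mathbb G^\ast$ and convert to an $m$-ary fractional polymorphism. Your middle paragraph merely makes explicit the step that the paper states tersely, namely why $\mathbb G^\ast\subseteq\Omega$ together with the constant-tuple property forces the component operations of each ${\bf g}^\ast\in\mathbb G^\ast$ to coincide with a single symmetric operation.
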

\begin{proof}
Using Theorem~\ref{th:specialInstance}, Theorem~\ref{th:rangeMain} and Proposition~\ref{prop:GHAOGIAHSIFA},
we conclude that for any $\hat x\in Range_1(\mathbb G^\ast)$ we have $\hat x_1=\ldots=\hat x_m$.
Indeed, by Theorem~\ref{th:specialInstance} there exists a function $f\in\closure\Gamma$ with $\mathbb G(\hat x)=\arg\min f$.
Theorem~\ref{th:rangeMain} implies that the precondition of Proposition~\ref{prop:GHAOGIAHSIFA} holds for any distinct pair of indices $i,j\in[m]$, and therefore $\hat x_i=\hat x_j$.

By Theorem~\ref{th:G:rho}, there exists a generalized fractional polymorphism $\rho$ of $\Gamma$ of arity $m\rightarrow m$
with $\supp(\rho)=\mathbb G^\ast$. Vector $\sum_{{\bf g}=(g_1,\ldots,g_m)\in\mathbb G^\ast}\rho({\bf g})\frac{1}{m}[\chi_{g_1}+\ldots+\chi_{g_m}]$
is then an $m$-ary fractional polymorphism of $\Gamma$; all operations in its support are symmetric because $\mathbb G^\ast\subseteq \Omega$ and
$\hat x_1=\ldots=\hat x_m$ for any $\hat x\in Range_1(\mathbb G^\ast)$.
\end{proof}

It remains to prove Theorem~\ref{th:rangeMain}.
A proof of parts (a) and (b) of Theorem~\ref{th:rangeMain} is given in Sections~\ref{sec:sym2} and~\ref{sec:symm}, respectively.
In both parts we will need the following result; it exploits the fact that $\Gamma$ is block-finite.
\begin{lemma}
Suppose that $\hat x\in Range_1(\mathbb G^\ast)$, $x\in \mathbb G(\hat x)$ and $f$ is an $m$-ary function in $\closure\Gamma$
with $\arg\min f=\mathbb G(\hat x)$. Then $(a,\ldots,a)\in\dom f$ for any $a\in\{x_1,\ldots,x_m\}$.
\label{lemma:RangeInTheSameBlock}
\end{lemma}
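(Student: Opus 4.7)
The plan is to derive $(a,\ldots,a)\in\dom f$ by applying, coordinatewise to $x$, the unary polymorphism $g_a$ supplied by property (a) of Definition~\ref{def:block-finite}. Since $f\in\closure\Gamma$, the relation $\dom f$ is primitive positive definable from $\Feas\Gamma$; combined with Lemma~\ref{lem:exppower}(1) this means every polymorphism of $\Feas\Gamma$ preserves $\dom f$. In particular, $g_a(x)=(g_a(x_1),\ldots,g_a(x_m))\in\dom f$. If the block $D_v$ containing $a$ also contains every $x_j$, then by property (a) we get $g_a(x_j)=a$ for each $j$, so $g_a(x)=(a,\ldots,a)$, finishing the proof. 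Thus the whole statement reduces to the \emph{monochromaticity claim}: there exists a block $D_v$ with $\{x_1,\ldots,x_m\}\subseteq D_v$ (and necessarily $a\in D_v$ since $a$ is one of the $x_j$).

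To establish monochromaticity I would first use block-finite property (c) and (b): every non-equality relation of $\Feas\Gamma$ is contained in a product of blocks, while the equality relation $=_D$ can only identify elements of the same block; propagating this through the pp-definition of $\dom f$ forces a definite block pattern $\dom f\subseteq D_{u_1}\times\cdots\times D_{u_m}$. Since $x\in\mathbb G(\hat x)\subseteq\arg\min f\subseteq\dom f$, we have $x_j\in D_{u_j}$, and monochromaticity is equivalent to showing $u_1=\cdots=u_m$. To get this coincidence I would exploit the $\Omega$-structure of $\mathbb G$: for any ${\bf g}\in\mathbb G$ we have $g_i(y)=g_1(y^{\pi_i})$, with $g_1$ symmetric in its last $m-1$ arguments and a polymorphism of each unary block relation $D_w$. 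Because $\hat x\in Range_1(\mathbb G^\ast)$, Proposition~\ref{prop:Range}(c) yields some ${\bf g}^\ast\in\mathbb G^\ast$ with ${\bf g}^\ast(\hat x)=\hat x$, and then $({\bf g}^\ast(\hat x))^\pi={\bf g}^\ast(\hat x^\pi)=\hat x^\pi$ for every permutation $\pi$; writing out the fixed-point identities $g^\ast_1(\hat x^{\pi_i})=\hat x_i\in D_{u_i}$ for all $i$ and combining with the block-preservation and symmetry properties of $g^\ast_1$ forces the block assignment of each coordinate to agree. Once this is done, ${\bf g}(\hat x)\in\mathbb G(\hat x)$ gives $x_j\in D_v$ uniformly.

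The delicate point — and where I expect the main difficulty to lie — is this last step: converting the identities $g^\ast_1(\hat x^{\pi_i})=\hat x_i$ into the conclusion $u_1=\cdots=u_m$. Intuitively, one needs that the partial freedom of $g^\ast_1$ outside monochromatic inputs, combined with the fact that it realises the ``first-coordinate projection'' on the orbit $\{\hat x^{\pi_i}\}_{i\in[m]}$, is incompatible with $\dom f$ having more than one coordinate block; this is the only place in the argument that truly uses the block-finite hypothesis. Once monochromaticity is in hand, the first paragraph concludes $(a,\ldots,a)=g_a(x)\in\dom f$ for every $a\in\{x_1,\ldots,x_m\}\subseteq D_v$.
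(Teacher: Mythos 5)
Your first paragraph correctly reduces the lemma to the \emph{monochromaticity} (``properness'') claim that $\{x_1,\ldots,x_m\}\subseteq D_v$ for a single block $D_v$, after which applying the unary polymorphism $g_a$ from Definition~\ref{def:block-finite}(a) coordinate-wise to $x\in\dom f$ yields $(a,\ldots,a)\in\dom f$. This reduction matches the paper exactly. However, the second half of your proposal does not close this gap, and you say so yourself: you propose to derive a block pattern $\dom f\subseteq D_{u_1}\times\cdots\times D_{u_m}$ from the pp-definition of $\dom f$ (a claim that itself is not justified for arbitrary $f\in\closure\Gamma$ and, in fact, \emph{cannot} hold with distinct $u_i$ if the lemma is to be true) and then to show $u_1=\cdots=u_m$ via the fixed-point identity $g^\ast_1(\hat x^{\pi_i})=\hat x_i$, but you explicitly flag that last conversion as ``the delicate point'' you cannot carry out. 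That step is precisely the content of the lemma, so the proposal has a genuine gap rather than a complete alternative argument.

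The paper's proof sidesteps the whole business of extracting block information from a pp-definition. Fix an arbitrary $a\in D$ and define $\mathbf{g}\in\calO^{(m\rightarrow m)}$ by $\mathbf{g}(z)=z$ if $z$ is proper (all coordinates in one block) and $\mathbf{g}(z)=(a,\ldots,a)$ otherwise. One checks $\mathbf{g}\in\Omega$ and, crucially using block-finite condition~(c), that $\mathbf{g}$ is the identity on $[\dom f]^m$ for every $f\in\Gamma-\{=_D\}$ (and the check for $=_D$ is trivial), so $\chi_{\mathbf{g}}$ is a generalized fractional polymorphism with support in $\Omega$; by maximality of $\omega$, $\mathbf{g}\in\mathbb G$. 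Now if $x$ were not proper, $\mathbf{g}(x)=(a,\ldots,a)$, i.e.\ the constant tuple lies in $\mathbb G(x)$; since $x\in Range_1(\mathbb G^\ast)$ by Proposition~\ref{prop:Ghatx}(a), Proposition~\ref{prop:GHAOGIAHSIFA} forces $x_1=\cdots=x_m$, meaning $x$ \emph{is} proper, a contradiction. The key idea you are missing is this ``non-properness detector'' $\mathbf{g}$: rather than reasoning structurally about $\dom f$, one manufactures an operation in $\mathbb G$ whose mere existence, combined with the already-established symmetry consequence (Proposition~\ref{prop:GHAOGIAHSIFA}), makes non-properness self-contradictory.
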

\begin{proof}
We say that a tuple $z\in D^m$ is {\em proper} if $z_1,\ldots,z_m\in D_v$ for some $v\in V$.
We will show that  $x$ is proper;
the lemma will then follow from
condition~(a) from the definition of a block-finite language and the fact that $x\in\dom f$.

Fix an arbitrary element $a\in D$, and define mapping ${\bf g}\in\calO^{(m\rightarrow m)}$ as follows:
$$
{\bf g}(z)=\begin{cases}
z & \mbox{if $z$ is proper} \\
(a,\ldots,a) & \mbox{otherwise}
\end{cases}
$$
We claim that ${\bf g}\in\Omega$. Indeed, consider $z \in D^m$. If ${\bf g}(z) = z$, the condition (\ref{eq:Omega:def:a}) holds trivially. Otherwise, we can easily check that
$${\bf g}^\pi(z) = \left(a,\ldots,a\right)^\pi = ( a,\ldots,a) = {\bf g}(z^\pi)$$
and so the condition (\ref{eq:Omega:def:a}) holds either way.

Let us now show that the vector $\rho=\chi_{\bf g}$ is a generalized fractional polymorphism of $\Gamma$ of arity $m\rightarrow m$.
Checking inequality~\eqref{eq:genfpol} for binary equality relation $f=(=_D)$ is straighforward. Consider function $f\in\Gamma-\{=_D\}$.
Since $\Gamma$ is block-finite, we have $\dom f\subseteq D_{v_1}\times\ldots\times D_{v_n}$  for some $v_1,\ldots,v_n\in V$.
This implies that for any $x\in[\dom f]^m$ we have ${\bf g}(x)=x$ (this can be checked coordinate-wise).
Therefore, we have an equality in~\eqref{eq:genfpol}.

By the results above we obtain that ${\bf g}\in\mathbb G$.
We are now ready to prove that $x$ is proper.
Suppose that this is not true, then ${\bf g}(x)=(a,\ldots,a)$.
We have $\hat x\in Range_1(\mathbb G^\ast)$ and $x\in\mathbb G(\hat x)$, so by Proposition~\ref{prop:Ghatx}(a) we conclude that
 $x\in Range_1(\mathbb G^\ast)$.
We also have $(a,\ldots,a)\in\mathbb G(x)$, so Proposition~\ref{prop:GHAOGIAHSIFA}
gives that $x_1=\ldots=x_m$. This means that $x$ is proper, which contradicts the earlier assumption.
\end{proof}


\subsection{Case $m=2$: proof of Theorem~\ref{th:rangeMain}(a)}\label{sec:sym2}

We start with the following observation.
\begin{proposition}
If $(a,b)\in\mathbb G(\hat x)$ then $(b,a)\in\mathbb G(\hat x)$.
\label{prop:m2-property}
\end{proposition}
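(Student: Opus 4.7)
The plan is to reduce the proposition to a closure property of the set $\mathbb{G}$ under the involution ${\bf g}\mapsto {\bf g}^\pi$, where $\pi$ is the unique nontrivial permutation of $[2]$. Once this closure is established the statement is immediate: if $(a,b)={\bf g}(\hat x)$ for some ${\bf g}=(g_1,g_2)\in\mathbb{G}$, then ${\bf g}^\pi=(g_2,g_1)$ lies in $\mathbb{G}$ as well, and ${\bf g}^\pi(\hat x)=(g_2(\hat x),g_1(\hat x))=(b,a)$, witnessing $(b,a)\in\mathbb{G}(\hat x)$.

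To prove the closure, I will first check that $\Omega$ itself is closed under ${\bf g}\mapsto {\bf g}^\pi$ in the case $m=2$: if ${\bf g}\in\Omega$, condition~\eqref{eq:Omega:def:a} gives $({\bf g}^\pi)^\pi={\bf g}$ and a direct calculation shows $({\bf g}^\pi)^\pi(x)={\bf g}(x)={\bf g}^\pi(x^\pi)$, which is precisely property~\eqref{eq:Omega:def:a} for ${\bf g}^\pi$. I will then define the distribution $\omega'$ on $\calO^{(2\rightarrow 2)}$ by setting $\omega'({\bf g}^\pi)=\omega({\bf g})$ for every ${\bf g}\in\supp(\omega)$. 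Since $\mathbb{G}=\supp(\omega)\subseteq\Omega$ by Proposition~\ref{prop:GisOmega}, we have $\supp(\omega')\subseteq\Omega$.

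Next I will verify that $\omega'$ is a generalized fractional polymorphism of $\Gamma$ of arity $2\rightarrow 2$. For any $f\in\Gamma$ and any $x\in[\dom f]^2$, the sequence $x^\pi$ is simply a reordering of the same two labelings and therefore still lies in $[\dom f]^2$. Then
\begin{align*}
\sum_{{\bf h}\in\supp(\omega')}\omega'({\bf h})\,f^m({\bf h}(x))
&=\sum_{{\bf g}\in\supp(\omega)}\omega({\bf g})\,f^m({\bf g}^\pi(x))
=\sum_{{\bf g}\in\supp(\omega)}\omega({\bf g})\,f^m({\bf g}(x^\pi))\\
&\le f^m(x^\pi)=f^m(x),
\end{align*}
where the second equality uses ${\bf g}\in\Omega$, the inequality applies the generalized fractional polymorphism property of $\omega$ to the labeling $x^\pi$, and the last equality is the invariance of $f^m$ under permuting its $m$ arguments. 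By the maximality clause in the selection of $\omega$, this forces $\supp(\omega')\subseteq\supp(\omega)=\mathbb{G}$, i.e.\ ${\bf g}^\pi\in\mathbb{G}$ whenever ${\bf g}\in\mathbb{G}$.

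There is essentially no obstacle beyond bookkeeping; the only subtle point is noting that $\Omega$ is closed under $\pi$-twisting in the $m=2$ case (which would fail for arbitrary $\sigma$ and $\pi$ when $m\ge 3$, since $\pi\sigma\ne\sigma\pi$ in general), which is why this argument is specific to the $m=2$ case and justifies its placement in Section~\ref{sec:sym2}. Note also that the cyclic fractional polymorphism assumption is not used directly in this particular step; it enters later in Section~\ref{sec:sym2} via richer probability distributions built from $\omega$.
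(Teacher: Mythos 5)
Your proof is correct, and it is closely related to the paper's. The paper reaches the same swap-closure of $\mathbb G$ slightly more economically: it shows that the single coordinate-swap mapping $\bar{\mathds 1}=(e_2^2,e_2^1)$ lies in $\Omega$ and that $\chi_{\bar{\mathds 1}}$ is a generalized fractional polymorphism of arity $2\rightarrow 2$, hence $\bar{\mathds 1}\in\mathbb G$ by the same maximality argument you invoke, and then concludes $\bar{\mathds 1}\circ{\bf g}\in\mathbb G$ using Proposition~\ref{prop:Range}(a). Since $\bar{\mathds 1}\circ{\bf g}=(g_2,g_1)={\bf g}^\pi$, this is precisely the closure property that you prove by twisting the entire distribution $\omega$ into $\omega'$ and applying maximality to $\omega'$. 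Your version re-runs the maximality step on a whole distribution instead of reusing the composition-closure of $\mathbb G$, so it is a bit longer but self-contained and equally valid; your remark that the argument relies on the commutativity of $S_2$ (equivalently, that $\Omega$ is closed under the twist involution only when $m=2$) is exactly the reason the paper's $\bar{\mathds 1}$ lies in $\Omega$.
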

\begin{proof}
Consider mapping $\bar{\mathds 1}=(e_2^2,e_2^1)$, where $e_2^k\in\calO^{(2)}$ is the projection to the the $k$-th variable.
It can be checked that $\bar{\mathds 1}\in\Omega$, and $\chi_{\bar{\mathds 1}}$ is a generalized fractional polymorphism of $\Gamma$
of arity $2\rightarrow 2$. Therefore, $\bar{\mathds 1}\in\mathbb G$.

We have $(a,b)={\bf g}(\hat x)$ for some ${\bf g}\in\mathbb G$. We also have $(b,a)=(\bar{\mathds 1}\circ{\bf g})(\hat x)$ and $\bar{\mathds 1}\circ{\bf g}\in\mathbb G$,
and therefore $(b,a)\in\mathbb G(\hat x)$.
\end{proof}

Denote $A=\{x_1\:|\:x\in \mathbb G(\hat x)\}\subseteq D$,
and let $a$ be an element in $A$ that minimizes $f(a,a)$.
Note that $(a,a)\in\dom f$ by Lemma~\ref{lemma:RangeInTheSameBlock}.
Condition $\arg\min f=\mathbb G(\hat x)$ and Proposition~\ref{prop:m2-property} imply that $(a,b),(b,a)\in\arg\min f$
for some $b\in A$.
By assumption, $\Gamma$ admits a cyclic fractional polymorphism $\nu$ of some arity $r\ge 2$.
Let us apply it to tuples $(a,b),(b,a),(a,a),\ldots,(a,a)$, where $(a,a)$ is repeated $r-2$ times:
\begin{equation}
\sum_{h\in\supp(\nu)} \nu(h) f(h(a,b,a,\ldots,a),h(b,a,a,\ldots,a))\le \frac{2}{r} f(a,b)+\frac{r-2}{r} f(a,a)
\label{eq:GHAJDHGAKSFA}
\end{equation}
We have $h(a,b,a,\ldots,a)=h(b,a,a,\ldots,a)$ since $\nu$ is cyclic; denote this element as $a_h$.
We claim that $a_h\in A$ for any $h\in\supp(\nu)$. Indeed, consider a unary function $u_A(x_1)=\min_{x_2}  f(x_1,x_2)$.
It can be checked that $\arg\min u_A=A$. Then
the presence of $u_A$ in $\closure\Gamma$ implies that after applying $\nu$ to $(a, b, a, \dots, a)$ one gets
$$\sum_{h\in\supp(\nu)} \nu(h)u_A(a_h) \leq \frac{r-1}{r} u_A(a) + \frac{1}{r}u_A(b) = \min u_A$$
and thus indeed $a_h\in \arg\min u_A = A$ for any $h\in\supp(\nu)$.

By the choice of $a$ we have $ f(a,a)\le  f(a_h,a_h)$ for any $h\in\supp(\nu)$.
From~\eqref{eq:GHAJDHGAKSFA} we thus get
\begin{equation}
 f(a,a)\le \frac{2}{r} f(a,b)+\frac{r-2}{r} f(a,a)
\end{equation}
and so $ f(a,a)\le  f(a,b)$, implying $(a,a)\in\arg\min  f$.

\subsection{Case $m\ge 3$: proof of Theorem~\ref{th:rangeMain}(b)}\label{sec:symm}

We define binary function $\bar f\in\closure\Gamma$ as follows: $\bar f(a,b)=\min_{x\in D^m:x_i=a,x_j=b} f(x)$.

If $z=(z_1,\ldots,z_m)$ is some sequence of size $m$ and $k$ is an index in $[m]$
then we will use $z_{-k}$ to denote the subsequence of $z$ of size $m-1$
obtained by deleting the $k$-th element.

Let $\tilde\omega$ be a symmetric fractional polymorphism of $\Gamma$ of arity $m-1$.
Following the construction in~\cite{kolmogorov15:power}, we define graph $(\tilde{\mathbb G},\tilde E)$ as described in Section~\ref{sec:G},
starting with the distribution $\tilde\omega$
where for $s\in\supp(\tilde\omega)$ mapping ${\mathds 1}^s\in\calO^{(m\rightarrow m)}$ is defined as follows:
$$
{\mathds 1}^s(x)=(s(x_{-1}),\ldots,s(x_{-m}))\qquad\forall x\in D^m
$$
It can be checked that if ${\bf g}=(g_1,\ldots,g_m)\in\tilde{\mathbb G}$
and $s\in\supp(\tilde\omega)$ then
${\bf g}^s=(s\circ{\bf g}_{-1},\ldots,s\circ{\bf g}_{-m})$.
It can also be checked that condition~\eqref{eq:G:omega'} holds for any $f\in\Gamma$:
it corresponds to the fractional polymorphism $\tilde\omega$ applied to $m-1$ tuples $x_{-i}\in [\dom f]^{m-1}$.
\begin{proposition}
There holds $\tilde{\mathbb G}\subseteq\mathbb G$. \label{prop:subset}
\end{proposition}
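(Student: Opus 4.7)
The plan is to exploit the maximality of $\omega$ that was used to define the graph $(\mathbb G, E)$. Since every element of $\tilde{\mathbb G}$ is a composition of ${\mathds 1}$ with finitely many generators ${\mathds 1}^s$, $s \in \supp(\tilde\omega)$, and $\Omega$ is closed under composition by Proposition~\ref{prop:Omegaclosed}, the core task is to show two things: first, that each such generator ${\mathds 1}^s$ lies in $\Omega$; and second, that there is a generalized fractional polymorphism of $\Gamma$ of arity $m\to m$ whose support equals $\tilde{\mathbb G}$.

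For the first point, write ${\mathds 1}^s = (g_1, \ldots, g_m)$ with $g_i(x) = s(x_{-i})$. Condition~\eqref{eq:Omega:def:a} demands $g_{\pi(i)}(x) = g_i(x^\pi)$ for every permutation $\pi$ of $[m]$. The left-hand side equals $s(x_{-\pi(i)})$, while the right-hand side equals $s((x^\pi)_{-i})$; since $(x^\pi)_{-i}$ is merely a reordering of the $(m-1)$-tuple $x_{-\pi(i)}$, the two values coincide because $s$ is a symmetric operation. This is the only genuine calculation, and I expect it to be the single nontrivial step in the entire argument. For the second point, apply Theorem~\ref{th:G:rho} to the graph $(\tilde{\mathbb G}, \tilde E)$ with $\widehat{\mathbb G} = \tilde{\mathbb G}$ (which trivially satisfies the hypothesis since a node reaches itself by the empty path), producing a generalized fractional polymorphism $\tilde\rho$ of $\Gamma$ of arity $m\to m$ with $\supp(\tilde\rho) = \tilde{\mathbb G}$.

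Combining these facts with Proposition~\ref{prop:Omegaclosed} and the observation that ${\mathds 1} \in \Omega$ yields $\tilde{\mathbb G} \subseteq \Omega$. Now form the convex combination $\omega' = \tfrac{1}{2}(\omega + \tilde\rho)$: this is again a generalized fractional polymorphism of $\Gamma$ of arity $m\to m$, and its support is $\mathbb G \cup \tilde{\mathbb G} \subseteq \Omega$. By the maximality property used to choose $\omega$ (and since $\supp(\omega) \subseteq \supp(\omega')$), we must have $\supp(\omega') = \supp(\omega) = \mathbb G$, from which $\tilde{\mathbb G} \subseteq \mathbb G$ follows immediately.
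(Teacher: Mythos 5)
Your proof is correct and follows essentially the same path as the paper's: verify ${\mathds 1}^s\in\Omega$ via the symmetry of $s$, use closure of $\Omega$ under composition (Proposition~\ref{prop:Omegaclosed}) to get $\tilde{\mathbb G}\subseteq\Omega$, invoke Theorem~\ref{th:G:rho} with $\widehat{\mathbb G}=\tilde{\mathbb G}$ to produce $\tilde\rho$ with $\supp(\tilde\rho)=\tilde{\mathbb G}$, and then appeal to the maximality of $\omega$ via the convex-combination argument. The only difference is that you spell out explicitly the averaging step $\omega'=\tfrac12(\omega+\tilde\rho)$ behind the paper's terse ``by maximality of $\omega$,'' which is a welcome unpacking rather than a divergence.
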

\begin{proof}
We claim that ${\mathds 1}^s\in\Omega$ for any $s\in\supp(\tilde\omega)$. Indeed, for a permutation $\pi$ of $[m]$ and $x \in D^m$ we can write
$$
{\mathds 1}^s(x^\pi)=(s(x^\pi_{-1}),\ldots,s(x^\pi_{-m})) =  (s(x_{-\pi(1)}),\ldots,s(x_{-\pi(m)}))  = ({\mathds 1}^s)^\pi(x),
$$
where the second equality uses that $s$ is symmetric.
Since each ${\bf g} \in \tilde{\mathbb G}$ has the form ${\bf g} = {\mathds 1}^{s_k}\circ\ldots\circ{\mathds 1}^{s_1}$ for some $s_1,\ldots,s_k\in\supp(\tilde\omega)$ and $\Omega$ is closed under composition by Proposition \ref{prop:Omegaclosed}, we get $\tilde{\mathbb G} \subseteq \Omega$.

Applying Theorem \ref{th:G:rho} (with $\tilde{\mathbb G}$ as both $\mathbb G$ and $\widehat{\mathbb G}$), we obtain a generalized fractional polymorphism $\tilde\rho$ with $\supp(\tilde\rho) = \tilde{\mathbb G} \subseteq \Omega$. By maximality of $\omega$ we get the desired $\tilde{\mathbb G} = \supp(\tilde\rho) \subseteq \supp(\omega) = \mathbb G$.
%
%
\end{proof}

For each ${\bf g}\in \tilde{\mathbb G}$ and $k\in[m]$ let us define labeling $x^{[{\bf g}k]}\in D^2$ as follows:
set $x={\bf g}(\hat x)$, and then
\begin{itemize}
\item If $k=i$, set $x^{[{\bf g}k]}=(x_i,x_j)$. We have $x^{[{\bf g}i]}\in\arg\min \bar f$ since $x\in\mathbb G(\hat x)=\arg\min f$.
\item If $k=j$, set $x^{[{\bf g}k]}=(x_j,x_i)$.
\item If $k\ne i$ and $k\ne j$, set $x^{[{\bf g}k]}=(x_k,x_k)$. We have $x^{[{\bf g}k]}\in \dom\bar f$ by Lemma~\ref{lemma:RangeInTheSameBlock}.
\end{itemize}
\begin{proposition}
Suppose that ${\bf g}\in \tilde{\mathbb G}$ and ${\bf g}^s={\bf h}$ where $s\in\supp(\tilde\omega)$ (so that ${\bf h}\in\tilde{\mathbb G}$).
Then
$$
{\mathds 1}^s(x^{[{\bf g}1]},\ldots,x^{[{\bf g}m]})=(x^{[{\bf h}1]},\ldots,x^{[{\bf h}m]}).
$$
\label{prop:GAIHGASFHAISG}
\end{proposition}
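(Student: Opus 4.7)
The plan is to verify the identity by separating the two $D$-coordinates of the pair-valued labelings and using the $\Omega$-property of ${\bf g}$ and ${\bf h}$ (available via Proposition~\ref{prop:subset}). Concretely, set $x={\bf g}(\hat x)$, $y={\bf h}(\hat x)=({\mathds 1}^s\circ{\bf g})(\hat x)={\mathds 1}^s(x)$, and let $\pi$ be the transposition of $[m]$ swapping $i$ and $j$. View $X=(x^{[{\bf g}1]},\ldots,x^{[{\bf g}m]})\in(D^2)^m$ as a pair of $D^m$-labelings $X^{(1)},X^{(2)}$ defined by $X^{(c)}_k=x^{[{\bf g}k]}_c$; the operator ${\mathds 1}^s$ acts on $X$ by acting on $X^{(1)}$ and $X^{(2)}$ independently.

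First I would unpack the definitions: from the case analysis in the three bullet points preceding the proposition one reads off $X^{(1)}_k=x_k$ for every $k\in[m]$, hence $X^{(1)}=x={\bf g}(\hat x)$, and $X^{(2)}_k=x_{\pi(k)}$ for every $k$ (this holds in all three cases because $\pi$ fixes $k$ when $k\ne i,j$), hence $X^{(2)}=x^{\pi}$. Since ${\bf g}\in\tilde{\mathbb G}\subseteq\Omega$ by Proposition~\ref{prop:subset}, condition~\eqref{eq:Omega:def:a} gives ${\bf g}(\hat x)^{\pi}={\bf g}^\pi(\hat x)={\bf g}(\hat x^\pi)$, so $X^{(2)}={\bf g}(\hat x^\pi)$.

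Next I would apply ${\mathds 1}^s$ componentwise. In the first coordinate,
\[
{\mathds 1}^s(X^{(1)})={\mathds 1}^s({\bf g}(\hat x))=({\mathds 1}^s\circ{\bf g})(\hat x)={\bf h}(\hat x)=y.
\]
In the second coordinate, using again the $\Omega$-identity, this time for ${\bf h}$:
\[
{\mathds 1}^s(X^{(2)})={\mathds 1}^s({\bf g}(\hat x^\pi))={\bf h}(\hat x^\pi)={\bf h}(\hat x)^{\pi}=y^{\pi}.
\]
Finally I would repack the output pair $(y,y^\pi)\in(D^m)^2$ into $(D^2)^m$: the $k$-th component is $\bigl(y_k,y_{\pi(k)}\bigr)$, which by inspection of the three defining cases for ${\bf h}$ is exactly $x^{[{\bf h}k]}$ (for $k=i$ this is $(y_i,y_j)$, for $k=j$ it is $(y_j,y_i)$, and otherwise $(y_k,y_k)$). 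This yields the desired equality.

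The argument is essentially a careful bookkeeping of which coordinate of which tuple corresponds to which index; the only nontrivial ingredient is the use of the symmetry condition~\eqref{eq:Omega:def:a} under the permutation $\pi$, and the expected main obstacle is just keeping the notation straight when switching between viewing an element of $(D^2)^m$ as $m$ pairs versus as two $m$-tuples.
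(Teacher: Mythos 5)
Your proof is correct and follows the same underlying bookkeeping as the paper's, which merely asserts ``it can be checked coordinate-wise that $x^{[{\bf h}k]}=s((x^{[{\bf g}1]},\ldots,x^{[{\bf g}m]})_{-k})$.'' Your version makes the check explicit by splitting the element of $(D^2)^m$ into its two $D^m$-coordinates $X^{(1)}=x$ and $X^{(2)}=x^\pi$, and you route the needed symmetry of $s$ through the $\Omega$-property of $\tilde{\mathbb G}$ (Proposition~\ref{prop:subset}) rather than invoking it on the spot; this is a tidy way to package the same computation. One tiny presentational point: when you first write $X^{(2)}=x^\pi$, it is worth noting explicitly that this uses $\pi$ being an involution fixing $[m]\setminus\{i,j\}$, which is exactly why the three cases collapse to $X^{(2)}_k=x_{\pi(k)}$.
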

\begin{proof}
Denote $x={\bf g}(\hat x)$ and $y={\bf h}(\hat x)$. We have $y={\mathds 1}^s(x)$, or $y_k=s(x_{-k})$ for any $k\in[m]$.
Also,
$$
x^{[{\bf g}k]}=\begin{cases}
(x_i,x_j) & \mbox{if }k=i \\
(x_j,x_i) & \mbox{if }k=j \\
(x_k,x_k) & \mbox{if }k\ne i\mbox{ and }k\ne j
\end{cases}
\qquad
x^{[{\bf h}k]}=\begin{cases}
(y_i,y_j) & \mbox{if }k=i \\
(y_j,y_i) & \mbox{if }k=j \\
(y_k,y_k) & \mbox{if }k\ne i\mbox{ and }k\ne j
\end{cases}
$$
It can be checked coordinate-wise (using that $s$ is symmetric) that  $$x^{[{\bf h}k]}=s((x^{[{\bf g}1]},\ldots,x^{[{\bf g}m]})_{-k})$$ for any $k\in[m]$. This gives the claim.
\end{proof}
Denote
$\tilde{\mathbb G}^\ast=\bigcup_{\mathbb H\in {\sf Sinks}({\tilde{\mathbb G},\tilde E})} \mathbb H\subseteq \tilde{\mathbb G}$.
Let us fix an arbitrary $\tilde{\bf g}\in \tilde{\mathbb G}^\ast$, and define $\tilde x=(x^{[\tilde{\bf g} 1]},\ldots,x^{[\tilde{\bf g} m]})\in [D^2]^m$.

\begin{proposition}
For any   ${\bf g}\in \tilde{\mathbb G}$ there holds
 ${\bf g}\circ\tilde{\bf g}\in\tilde{\mathbb G}$. Furthermore,
${\bf g}(\tilde x)=(x^{[({\bf g}\circ\tilde{\bf g})1]},\ldots,x^{[({\bf g}\circ\tilde{\bf g})m]})$.
\label{prop:almostDone}
\end{proposition}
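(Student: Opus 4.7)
My plan is to prove the two claims of Proposition~\ref{prop:almostDone} separately, with the second one being an induction on the length of a decomposition of ${\bf g}$ into generators ${\mathds 1}^s$.

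For the first claim, I would invoke Proposition~\ref{prop:Range}(a) applied to the graph $(\tilde{\mathbb G},\tilde E)$ instead of $(\mathbb G,E)$ (the proposition is a general statement about the construction of Section~\ref{sec:G}, so it applies equally to the graph $(\tilde{\mathbb G}, \tilde E)$ built from $\tilde\omega$). Since $\tilde{\bf g}\in\tilde{\mathbb G}^\ast$ lies in some sink component $\mathbb H\in{\sf Sinks}(\tilde{\mathbb G},\tilde E)$ and ${\bf g}\in\tilde{\mathbb G}$, part (a) of that proposition gives ${\bf g}\circ\tilde{\bf g}\in\mathbb H\subseteq\tilde{\mathbb G}$.

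For the second claim, I would write ${\bf g}={\mathds 1}^{s_k}\circ\cdots\circ{\mathds 1}^{s_1}$ for some $s_1,\ldots,s_k\in\supp(\tilde\omega)$ (with $k=0$ corresponding to ${\bf g}={\mathds 1}$) and argue by induction on $k$. The base case $k=0$ is immediate: ${\mathds 1}(\tilde x)=\tilde x=(x^{[\tilde{\bf g}1]},\ldots,x^{[\tilde{\bf g}m]})$ by the very definition of $\tilde x$. For the inductive step, suppose the identity holds for some ${\bf g}\in\tilde{\mathbb G}$, and consider ${\bf g}'={\mathds 1}^s\circ{\bf g}={\bf g}^s$. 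Applying ${\mathds 1}^s$ to both sides of the inductive hypothesis gives
\[
{\bf g}'(\tilde x)={\mathds 1}^s\bigl(x^{[({\bf g}\circ\tilde{\bf g})1]},\ldots,x^{[({\bf g}\circ\tilde{\bf g})m]}\bigr).
\]
Since ${\bf g}\circ\tilde{\bf g}\in\tilde{\mathbb G}$ by part one, Proposition~\ref{prop:GAIHGASFHAISG} applied to ${\bf g}\circ\tilde{\bf g}$ (with its image ${\bf h}=({\bf g}\circ\tilde{\bf g})^s={\mathds 1}^s\circ{\bf g}\circ\tilde{\bf g}={\bf g}'\circ\tilde{\bf g}$) rewrites the right-hand side as $(x^{[({\bf g}'\circ\tilde{\bf g})1]},\ldots,x^{[({\bf g}'\circ\tilde{\bf g})m]})$, completing the induction.

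The only subtle point is the bookkeeping identity $({\bf g}\circ\tilde{\bf g})^s=({\bf g}^s)\circ\tilde{\bf g}$, which follows directly from the definition ${\bf g}^s={\mathds 1}^s\circ{\bf g}$ and associativity of composition; this is what allows Proposition~\ref{prop:GAIHGASFHAISG} to be invoked in exactly the form we need. The main conceptual hurdle is recognizing that Proposition~\ref{prop:GAIHGASFHAISG} is designed precisely to propagate the labeling identity along a single edge of $(\tilde{\mathbb G},\tilde E)$, so iterating it along a path from ${\mathds 1}$ to ${\bf g}$ is the right strategy, and that the image ${\bf h}$ in that proposition always takes the form ${\bf g}'\circ\tilde{\bf g}$ in our induction.
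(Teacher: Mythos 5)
Your proof is correct and follows essentially the same route as the paper's: part one via Proposition~\ref{prop:Range}(a) applied to $(\tilde{\mathbb G},\tilde E)$, and part two by iterating Proposition~\ref{prop:GAIHGASFHAISG} along the generators, using the bookkeeping identity $({\bf g}\circ\tilde{\bf g})^s={\bf g}^s\circ\tilde{\bf g}$. The only cosmetic difference is that you induct on the length of a chosen decomposition of ${\bf g}$ into generators, whereas the paper inducts on the shortest distance $d({\mathds 1},{\bf g})$ in $(\tilde{\mathbb G},\tilde E)$; these are interchangeable here.
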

\begin{proof}
The first claim is by Proposition~\ref{prop:Range}(a); let us show the second one.
Let $d({\mathds 1},{\bf g})$ be the shortest distance from ${\mathds 1}$ to ${\bf g}$ in the graph $(\tilde{\mathbb G},\tilde E)$.
(By the definition of this graph, we have $0\le d({\mathds 1},{\bf g}) <\infty$ for any ${\bf g}\in\tilde{\mathbb G}$, and ${\mathds 1}\in\tilde{\mathbb G}$.)
We will use induction on $d({\mathds 1},{\bf g})$. The base case $d({\mathds 1},{\bf g})=0$ (i.e.\ ${\bf g}={\mathds 1}$) holds by construction.
Suppose that the claim holds for all mappings ${\bf g}\in\tilde{\mathbb G}$ with $d({\mathds 1},{\bf g})=k\ge 0$,
and consider mapping ${\bf h}\in\tilde{\mathbb G}$ with $d({\mathds 1},{\bf h})=k+1$.
There must exist mapping ${\bf g}\in\tilde{\mathbb G}$ and operation $s\in\supp(\tilde\omega)$
such that $d({\mathds 1},{\bf g})=k$ and ${\bf g}^s={\bf h}$.
Observe that $({\bf g}\circ\tilde{\bf g})^s={\mathds 1}^s\circ {\bf g}\circ\tilde{\bf g}={\bf g}^s\circ\tilde{\bf g}={\bf h}\circ\tilde{\bf g}$.
We can thus write
$$
{\bf h}(\tilde x)
=({\mathds 1}^{s}\circ{\bf g})(\tilde x)
= {\mathds 1}^{s}({\bf g}(\tilde x))
\stackrel{\mbox{\tiny(1)}}={\mathds 1}^{s}(x^{[({\bf g}\circ\tilde{\bf g})1]},\ldots,x^{[({\bf g}\circ\tilde{\bf g})m]})
\stackrel{\mbox{\tiny(2)}}=               (x^{[({\bf h}\circ\tilde{\bf g})1]},\ldots,x^{[({\bf h}\circ\tilde{\bf g})m]})
$$
where (1) holds by the induction hypothesis and (2) is by Proposition~\ref{prop:GAIHGASFHAISG}.
\end{proof}

\begin{proposition}
There holds $\tilde x\in Range_2(\tilde{\mathbb G}^\ast)\cap[\dom \bar f]^m$. 
\end{proposition}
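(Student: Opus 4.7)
The plan is to split the statement into its two membership claims and verify them independently.

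For the first claim, $\tilde x\in Range_2(\tilde{\mathbb G}^\ast)$, I would use that $\tilde{\bf g}\in \tilde{\mathbb G}^\ast$ already lies in some sink $\mathbb H^\ast\in{\sf Sinks}(\tilde{\mathbb G},\tilde E)$. Applying Proposition~\ref{prop:Range}(b) with $\mathbb H=\mathbb H^\ast$ yields a mapping ${\bf g}\in\mathbb H^\ast$ satisfying ${\bf g}\circ\tilde{\bf g}=\tilde{\bf g}$. Feeding this into Proposition~\ref{prop:almostDone} gives
\[
{\bf g}(\tilde x)=(x^{[({\bf g}\circ\tilde{\bf g})1]},\ldots,x^{[({\bf g}\circ\tilde{\bf g})m]})=(x^{[\tilde{\bf g}1]},\ldots,x^{[\tilde{\bf g}m]})=\tilde x,
\]
so $\tilde x={\bf g}(\tilde x)$ with ${\bf g}\in\tilde{\mathbb G}^\ast$, witnessing $\tilde x\in Range_2(\tilde{\mathbb G}^\ast)$.

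For the second claim, $\tilde x\in[\dom\bar f]^m$, I would handle each coordinate $k\in[m]$ by going through the three defining cases of $x^{[\tilde{\bf g}k]}$, with $x=\tilde{\bf g}(\hat x)$. For $k=i$, the definition already records $x^{[\tilde{\bf g}i]}=(x_i,x_j)\in\arg\min\bar f$: since $x\in\mathbb G(\hat x)=\arg\min f\subseteq\dom f$, we get $\bar f(x_i,x_j)\le f(x)<\infty$. For $k\ne i,j$, we have $x^{[\tilde{\bf g}k]}=(x_k,x_k)$, and Lemma~\ref{lemma:RangeInTheSameBlock} provides $(x_k,\ldots,x_k)\in\dom f$, which gives $\bar f(x_k,x_k)\le f(x_k,\ldots,x_k)<\infty$.

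The non-routine case is $k=j$, which requires $(x_j,x_i)\in\dom\bar f$, i.e.\ a tuple $y\in D^m$ with $y_i=x_j$, $y_j=x_i$, and $f(y)<\infty$. The natural candidate is $y=x^\pi$ where $\pi$ swaps $i$ and $j$, and the hard part is exhibiting some ${\bf h}\in\Omega'$ with ${\bf h}(\hat x)=x^\pi$ so that $f(x^\pi)\le f_{\calI'}({\bf h})<\infty$. I expect this to be the main obstacle because the naive choice ${\bf h}=\tilde{\bf g}^\pi$ need not lie in $\Omega$ for $m\ge 3$ (the swap of two coordinates is not central in $S_m$), so the property ${\bf h}\in\Omega'$ is not free. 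To overcome this I would lean on block-finiteness: Lemma~\ref{lemma:RangeInTheSameBlock} (and its proof) show that $x$ is proper, so $x_i,x_j$ lie in a common block $D_v$, and likewise $\hat x$ is proper by the same argument applied to $\hat x\in\mathbb G(\hat x)\cap Range_1(\mathbb G^\ast)$. The strategy is to construct an auxiliary mapping $\mathbf q$ in the same block-preserving spirit as the mapping built in the proof of Lemma~\ref{lemma:RangeInTheSameBlock}, use it to realise $\hat x^\pi$ as $\mathbf q(\hat x)$ within $\Omega'$, and then set ${\bf h}=\tilde{\bf g}\circ\mathbf q\in\Omega'$, so that ${\bf h}(\hat x)=\tilde{\bf g}(\hat x^\pi)=x^\pi$ using that $\tilde{\bf g}\in\Omega$. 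The sub-case $\hat x_i=\hat x_j$ is easy because then $x_i=x_j$ and $(x_j,x_i)=(x_i,x_i)$ is already covered by the $k\ne i,j$ argument via Lemma~\ref{lemma:RangeInTheSameBlock}; all the genuine work concentrates on the sub-case $\hat x_i\ne\hat x_j$, where block-finiteness of $\Gamma$ and the explicit symmetric form $\tilde g_k(z)=\tilde G(z_k;z_{-k})$ of $\tilde{\bf g}\in\Omega$ are what make the construction feasible.
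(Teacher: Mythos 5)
The first half of your argument, showing $\tilde x\in Range_2(\tilde{\mathbb G}^\ast)$ via Proposition~\ref{prop:Range}(b) and Proposition~\ref{prop:almostDone}, is exactly the paper's argument. The same is true for the coordinates $k=i$ and $k\notin\{i,j\}$ of the second claim. The problem is the coordinate $k=j$, which you yourself flag as the non-routine part, and where your plan has a genuine gap.

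You propose to find $\mathbf q\in\Omega'\subseteq\Omega$ with $\mathbf q(\hat x)=\hat x^\pi$ (where $\pi$ swaps $i$ and $j$), and then take ${\bf h}=\tilde{\bf g}\circ\mathbf q$. But such a $\mathbf q$ need not exist, and block-finiteness cannot repair this, because the obstruction is the $\Omega$-condition, not the block structure. Indeed, suppose $\hat x_j=\hat x_\ell$ for some $\ell\notin\{i,j\}$ while $\hat x_i\ne\hat x_j$ (the hard sub-case); for instance, $\hat x$ of the form $(a,b,b,\ldots)$ with $i=1,j=2,\ell=3$. Let $\tau$ be the transposition $(j\,\ell)$. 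Then $\hat x^\tau=\hat x$, so for any $\mathbf q\in\Omega$ one has $\mathbf q(\hat x)=\mathbf q(\hat x^\tau)=\mathbf q^\tau(\hat x)=(\mathbf q(\hat x))^\tau$, i.e.\ $\mathbf q(\hat x)$ must be fixed by $\tau$. However, $(\hat x^\pi)_j=\hat x_i\ne\hat x_j=\hat x_\ell=(\hat x^\pi)_\ell$, so $\hat x^\pi$ is \emph{not} fixed by $\tau$. Hence no $\mathbf q\in\Omega$ can map $\hat x$ to $\hat x^\pi$. The same invariance argument shows that in this situation no $\mathbf h\in\Omega$ at all can have $\mathbf h(\hat x)=x^\pi$ (since $x_j=x_\ell$ while $(x^\pi)_j=x_i\ne x_j=(x^\pi)_\ell$), so the whole route via the candidate tuple $y=x^\pi$ is blocked. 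Nothing in the hypotheses of the proposition rules out such $\hat x$ at this stage — the statement $\hat x_1=\ldots=\hat x_m$ is precisely what one is trying to prove downstream.

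The paper avoids the issue altogether: it does not try to exhibit a single labeling $y\in\dom f$ with $(y_i,y_j)=(x_j,x_i)$. Instead, since $\tilde{\bf g}$ lies in a sink $\mathbb H$ of $(\tilde{\mathbb G},\tilde E)$, there is a predecessor ${\bf h}\in\mathbb H$ and $s\in\supp(\tilde\omega)$ with ${\bf h}^s=\tilde{\bf g}$. Setting $\breve x=(x^{[{\bf h}1]},\ldots,x^{[{\bf h}m]})$, Proposition~\ref{prop:GAIHGASFHAISG} gives ${\mathds 1}^s(\breve x)=\tilde x$, hence $x^{[\tilde{\bf g}j]}=s(\breve x_{-j})$. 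The $m-1$ pairs in $\breve x_{-j}$ are all in $\dom\bar f$ by the $k=i$ and $k\ne i,j$ arguments applied to ${\bf h}$, and since $s\in\supp(\tilde\omega)$ is a polymorphism of $\dom\bar f$ (because $\bar f\in\closure\Gamma$), one gets $x^{[\tilde{\bf g}j]}\in\dom\bar f$. You should replace your $\mathbf q$-construction with this predecessor argument.
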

\begin{proof}
By Proposition~\ref{prop:Range}(b) there exists ${\bf g}\in\tilde{\mathbb G}^\ast$ with ${\bf g}\circ\tilde{\bf g}=\tilde{\bf g}$.
Using  Proposition~\ref{prop:almostDone}, we can  write ${\bf g}(\tilde x)=(x^{[({\bf g}\circ\tilde{\bf g})1]},\ldots,x^{[({\bf g}\circ\tilde{\bf g})m]})=(x^{[\tilde{\bf g}1]},\ldots,x^{[\tilde{\bf g}m]})=\tilde x$. This shows that $\tilde x\in Range_2(\tilde{\mathbb G}^\ast)$.

Now let us show $x^{[\tilde{\bf g} k]}\in\dom \bar f$ for each $k\in[m]$.
It suffices to prove it
 for $k=j$
(for other indices $k$ the claim holds by construction).
We have $\tilde{\bf g}\in\mathbb H$ for some strongly connected component $\mathbb H\in{\sf Sinks}(\tilde{\mathbb G},\tilde E)$.
There is a path from $\tilde{\bf g}$ to $\tilde{\bf g}$ in $(\mathbb H,E[\mathbb H])$, therefore
there exists mapping ${\bf h}\in \mathbb H\subseteq\tilde{\mathbb G}^\ast$ and $s\in\supp(\tilde\omega)$ with ${\bf h}^s=\tilde{\bf g}$.
Define $x=(x^{[{\bf h}1]},\ldots,x^{[{\bf h}m]})$, then by Proposition \ref{prop:GAIHGASFHAISG} we have ${\mathds 1}^s(x)=\tilde x$.
In particular,  $x^{[\tilde{\bf g} j]}=s(x_{-j})$.
Also, we have $x_{-j}\in[\dom \bar f]^{m-1}$ by construction.
Since $\Gamma$ admits $\tilde\omega$ and $s\in\supp(\tilde\omega)$, we conclude that $x^{[\tilde{\bf g} j]}\in \dom \bar f$.
\end{proof}

Pick $k\in[m]-\{i,j\}$.
By Theorem~\ref{th:G}(b) we obtain that there exists a probability distribution $\lambda$ over $\tilde{\mathbb G}^\ast$
such that $\bar f^\lambda_i(\tilde x)=\bar f^\lambda_k(\tilde x)$. Using Proposition~\ref{prop:almostDone},
we can rewrite this condition as
$$
\sum_{{\bf g}\in\tilde{\mathbb G}^\ast}\lambda_{\bf g}\bar f(x^{[({\bf g}\circ\tilde{\bf g})i]})
=\sum_{{\bf g}\in\tilde{\mathbb G}^\ast}\lambda_{\bf g}\bar f(x^{[({\bf g}\circ\tilde{\bf g})k]})
$$
Every tuple $x^{[({\bf g}\circ\tilde{\bf g})i]}$ on the LHS belongs to $\arg\min \bar f$.
Therefore, every tuple $x^{[({\bf g}\circ\tilde{\bf g})k]}$ on the RHS corresponding to mapping ${\bf g}\in\tilde{\mathbb G}^\ast$ with $\lambda_{\bf g}>0$
also belongs to $\arg\min \bar f$.

We proved that there exists $x\in\arg\min f$ with $x_i=x_j$.

\appendix
\section*{APPENDIX: Proofs for Section \ref{sec:G}}\label{sec:proof:G}
\setcounter{section}{1}

In this section we prove the properties of graph $(\mathbb G,E)$ stated in Section~\ref{sec:G}.

\subsection{Proof of Proposition~\ref{prop:Range}}\label{sec:proof:prop:Range}

\paragraph{Part (a)} We have ${\bf g}={\mathds 1}^{s_1\ldots s_{k}}$ and ${\bf h}={\mathds 1}^{s_{{k+1}}\ldots s_{\ell}}$
for some $s_1,\ldots,s_\ell\in\supp(\omega)$ and $0\le k\le \ell$.
Therefore,
${\bf h}\circ{\bf g}=[{\mathds 1}^{s_\ell}\circ\ldots\circ {\mathds 1}^{s_{k+1}}]\circ[{\mathds 1}^{s_k}\circ\ldots\circ{\mathds 1}^{s_1}]
={\mathds 1}^{s_1\ldots s_\ell}\in\mathbb G$. Also, ${\bf h}\circ{\bf g}={\bf g}^{s_{k+1}\ldots s_\ell}$,
and so there is path from ${\bf g}$ to ${\bf h}\circ{\bf g}$ in $(\mathbb G,E)$.
Since no edges leave the strongly connected component $\mathbb H$, we obtain that if ${\bf g}\in\mathbb H$ then ${\bf h}\circ{\bf g}\in\mathbb H$.

\paragraph{Part (b)}
Pick $\hat{\bf g}\in\mathbb H$. Since $\mathbb H'$ is strongly connected,
there is a path from $\hat{\bf g}\circ{\bf g}'\in\mathbb H'$ to ${\bf g}'\in\mathbb H'$
in $(\mathbb G,E)$, i.e.\
${\bf g}'=[\hat{\bf g}\circ{\bf g}']^{s_1\ldots s_k}={\bf h}\circ \hat{\bf g}\circ{\bf g}'$
where ${\bf h}={\mathds 1}^{s_1\ldots s_k}$. It can be checked that mapping ${\bf g}={\bf h}\circ \hat{\bf g}$
has the desired properties.

\paragraph{Part (c)} By assumption, $x={\bf g}'(y)$ for some ${\bf g}'\in\mathbb H'\in{\sf Sinks}({\mathbb G,E})$ and $y\in[D^n]^m$.
By part (b) there exists ${\bf g}\in\mathbb H$ satisfying ${\bf g}\circ{\bf g}'={\bf g}'$.
We get that ${\bf g}(x)={\bf g}({\bf g}'(y))=({\bf g}\circ{\bf g}')(y)={\bf g}'(y)=x$.

\subsection{Proof of Proposition \ref{prop:Ghatx}}\label{sec:proof:prop:Ghatx}
By assumption, we have $\hat x={\bf g}^\ast(y)$ for some ${\bf g}^\ast\in\mathbb G^\ast$, $y\in D^m$
and $x={\bf h}(\hat x)$ for some ${\bf h}\in\mathbb G$.

\paragraph{Part (a)}
We have $x=({\bf h}\circ{\bf g}^\ast)(y)$ with ${\bf h}\circ{\bf g}^\ast\in\mathbb G^\ast$;
this establishes the claim.

\paragraph{Part (b)}
Let $\mathbb H\in{\sf Sinks}({\mathbb G},E)$ be the strongly connected component to which ${\bf g}^\ast$ belongs.
There exists a path in $(\mathbb H,E[\mathbb H])$ from ${\bf h}\circ{\bf g}^\ast\in\mathbb H$ to ${\bf g}^\ast\in\mathbb H$,
i.e.\ ${\bf g}^\ast={\mathds 1}^{s_1\ldots s_k}\circ {\bf h}\circ{\bf g}^\ast$ for some $s_1,\ldots,s_k\in\supp(\omega)=\mathbb G$.
Define ${\bf g}={\mathds 1}^{s_1\ldots s_k}\in\mathbb G$, then
${\bf g}^\ast = {\bf g}\circ{\bf h}\circ{\bf g}^\ast$. We have
$\hat x={\bf g}^\ast(y) = ({\bf g}\circ{\bf h}\circ{\bf g}^\ast)(y)= ({\bf g}\circ{\bf h})(\hat x) = {\bf g}(x)$,
as claimed.

\subsection{Proof of Theorem~\ref{th:G:rho}}\label{sec:proof:G:rho}
First, we make the following observation.
\begin{proposition}
Suppose vector $\rho$ is a fractional polymorphism of $\Gamma$ of arity $m\rightarrow m$
and ${\bf g}\in\supp(\rho)$.
Then the following vector is also a fractional polymorphism of $\Gamma$ of arity $m\rightarrow m$:
\begin{equation}
\rho[{\bf g}]\ = \ \rho+\frac{\rho({\bf g})}{2}\left[-\chi_{\bf g}+\sum_{s\in\omega}\omega(s)\chi_{{\bf g}^s}\right]
\label{eq:proof:rho:G:rho:rhog}
\end{equation}
\end{proposition}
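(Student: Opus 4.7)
The plan is to verify directly that $\rho[{\bf g}]$ is a probability distribution on $\calO^{(m\to m)}$ and then check the defining inequality~\eqref{eq:genfpol} for each $f\in\Gamma$.

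First I would verify that $\rho[{\bf g}]$ is a valid probability distribution. Since $\omega$ is a probability distribution, $-1+\sum_{s\in\supp(\omega)}\omega(s)=0$, so the total mass change is zero and the entries still sum to~$1$. For non-negativity, the only coordinate whose coefficient is decreased is ${\bf g}$ itself, which drops from $\rho({\bf g})$ to at least $\rho({\bf g})-\tfrac{\rho({\bf g})}{2}=\tfrac{\rho({\bf g})}{2}\ge 0$ (additional $\omega(s)$-weighted contributions may only increase it if some ${\bf g}^s$ equals ${\bf g}$); every other coordinate only gains non-negative mass.

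Next, fix $f\in\Gamma$ and $x\in[\dom f]^m$. By linearity,
\begin{equation*}
\sum_{{\bf h}}\rho[{\bf g}]({\bf h})\,f^m({\bf h}(x))\;=\;\sum_{{\bf h}}\rho({\bf h})\,f^m({\bf h}(x))\;+\;\frac{\rho({\bf g})}{2}\Bigl[-f^m({\bf g}(x))\;+\;\sum_{s\in\supp(\omega)}\omega(s)\,f^m({\bf g}^s(x))\Bigr].
\end{equation*}
The first sum on the right is at most $f^m(x)$ because $\rho$ is a fractional polymorphism. For the bracketed correction term I would use the key identity ${\bf g}^s(x)={\mathds 1}^s({\bf g}(x))$ built into the definition of ${\bf g}^s$. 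Since $\rho$ is a fractional polymorphism of~$\Gamma$ and ${\bf g}\in\supp(\rho)$, the inequality $\sum_{\bf h}\rho({\bf h})f^m({\bf h}(x))\le f^m(x)<\infty$ forces $f^m({\bf g}(x))<\infty$, i.e.\ $y:={\bf g}(x)\in[\dom f]^m$. We may therefore apply the hypothesis~\eqref{eq:G:omega} on $\omega$ to the labeling $y$, yielding
\begin{equation*}
\sum_{s\in\supp(\omega)}\omega(s)\,f^m({\bf g}^s(x))\;=\;\sum_{s\in\supp(\omega)}\omega(s)\,f^m({\mathds 1}^s(y))\;\le\;f^m(y)\;=\;f^m({\bf g}(x)),
\end{equation*}
so the bracketed correction is $\le 0$. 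Combining, $\sum_{\bf h}\rho[{\bf g}]({\bf h})f^m({\bf h}(x))\le f^m(x)$, which is the required inequality~\eqref{eq:genfpol}.

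There is no real obstacle: the proof is a routine rewriting once one spots the compatibility ${\bf g}^s(x)={\mathds 1}^s({\bf g}(x))$ and uses it to reduce the correction term to an instance of~\eqref{eq:G:omega} applied at the point $y={\bf g}(x)$, which lies in $[\dom f]^m$ precisely because ${\bf g}\in\supp(\rho)$.
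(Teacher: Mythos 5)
Your proof is correct and follows essentially the same route as the paper's: observe that $\rho$ being a fractional polymorphism with ${\bf g}\in\supp(\rho)$ forces ${\bf g}(x)\in[\dom f]^m$, then apply condition~\eqref{eq:G:omega} to the labeling ${\bf g}(x)$ (using ${\bf g}^s={\mathds 1}^s\circ{\bf g}$) to see that the bracketed correction term is nonpositive. Your additional checks that $\rho[{\bf g}]$ remains a probability distribution and that finiteness of $f^m({\bf g}(x))$ is forced because $\Qc$ contains $+\infty$ but not $-\infty$ are details the paper leaves implicit, but they are not a different method.
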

\begin{proof}
Denote the vector in the square brackets as $\delta$.
Consider function $f\in\Gamma$ and labeling $x\in[\dom f]^m$. Since $\rho$ is a fractional polymorphism of $\Gamma$,
we have ${\bf g}(x)\in[\dom f]^m$. We can write
$$
\sum_{{\bf h}\in \supp(\rho[{\bf g}])}\delta({\bf h})f^m({\bf h}(x))
=-f^m({\bf g}(x))+\sum_{s\in\supp(\omega)}\omega(s)f^m({\bf g}^s(x))\le 0
$$
where the last inequality follows from condition~\eqref{eq:G:omega} applied to labelings ${\bf g}(x)$.
Thus, adding the extra term to $\rho$ in~\eqref{eq:proof:rho:G:rho:rhog} will not violate the
fractional polymorphism inequality for any $x\in[\dom f]^m$.
\end{proof}
Note that $\supp(\rho[{\bf g}])=\supp(\rho)\cup\{{\bf g}^s\:|\:s\in\supp(\omega)\}$ for ${\bf g}\in\supp(\rho)$.

We claim that $\Gamma$ admits a fractional polymorphism $\widehat \rho$ with $\supp(\widehat \rho)=\mathbb G$.
Indeed, we can start with vector $\rho=\chi_{\mathds 1}$ and then repeatedly modify it as $\rho\leftarrow \rho[{\bf g}]$
for mappings ${\bf g}\in\supp(\rho)$ that haven't appeared before; after $|\mathbb G|-1$ steps
we get a vector $\widehat \rho$ with the claimed property.

Let $\Omega$ be the set of fractional polymorphisms $\rho$ of $\Gamma$ with $\supp(\rho)\subseteq\mathbb G$
that satisfy $\rho({\bf g})\ge \widehat\rho({\bf g})$ for all ${\bf g}\in\widehat{\mathbb G}$.
Set $\Omega$ is non-empty since it contains $\widehat \rho$.
Let $\rho$ be a vector in $\Omega$ that maximizes $\rho(\widehat{\mathbb G})=\sum_{{\bf g}\in\widehat{\mathbb G}}\rho({\bf g})$.
(This maximum is attained since $\Omega$ is a compact subset of $\mathbb R^{|\mathbb G|}$). We claim
that $\supp(\rho)=\widehat{\mathbb G}$. Indeed, the inclusion $\widehat{\mathbb G}\subseteq \supp(\rho)$ is by construction.
Suppose there exists ${\bf g}\in\supp(\rho)-\widehat{\mathbb G}$. By the condition of Theorem~\ref{th:G:rho}
there exists a path ${\bf g}_0,\ldots,{\bf g}_k$ in $(\mathbb G,E)$ from ${\bf g}_0={\bf g}$
such that ${\bf g}_0,\ldots,{\bf g}_{k-1}\in\mathbb G-\widehat{\mathbb G}$ and
 ${\bf g}_k\in\widehat{\mathbb G}$.
It can be checked that vector $\rho'=\rho[{\bf g}_0]\ldots[{\bf g}_{k-1}]$
satisfies $\rho'\in\Omega$, $\rho'({\bf g})\ge \rho({\bf g})$ for ${\bf g}\in\widehat{\mathbb G}$,
and $\rho'({\bf g}_k)> \rho({\bf g}_k)$. This contradicts the choice of $\rho$.

\subsection{Proof of Theorem~\ref{th:G}(a)}\label{sec:proof:th:Ga}
Consider component $\mathbb H\in{\sf Sinks}({\mathbb G,E})$, and
denote $\mathbb H^\ast=\arg\min\{f^m({\bf g}(x))\:|\:{\bf g}\in\mathbb H\}$.
We claim that $\mathbb H^\ast=\mathbb H$. Indeed, consider
${\bf g}\in\mathbb H^\ast$. Applying inequality~\eqref{eq:G:omega}
to labelings ${\bf g}(x)\in[\dom f]^m$ gives
\begin{eqnarray}
\sum_{s\in\supp(\omega)} \omega(s)f^m({\bf g}^s(x))&\le& f^m({\bf g}(x))\qquad \forall x\in  [\dom f]^m
\end{eqnarray}
For each $s\in\supp(\omega)$ we have ${\bf g}^s\in\mathbb H$ and thus  $f^m({\bf g}^s(x)) \ge f^m({\bf g}(x))$.
This means that $f^m({\bf g}^s(x)) = f^m({\bf g}(x))$. We showed that if ${\bf g}\in\mathbb H^\ast$
and $({\bf g},{\bf h})\in E$ then ${\bf h}\in\mathbb H^\ast$. Since $\mathbb H$ is a
strongly connected component of $(\mathbb G,E)$, we conclude that $\mathbb H=\mathbb H^\ast$.

We showed that $f^m({\bf g}(x))$ is the same for all ${\bf g}\in\mathbb H$.
By Proposition~\ref{prop:Range}(c) there exists ${\bf h}\in\mathbb H$
with ${\bf h}(x)=x$, and therefore $f^m({\bf g}(x))=f^m({\bf h}(x))=f^m(x)$
for all ${\bf g}\in\mathbb H$. Since this holds for any $\mathbb H\in{\sf Sinks}({\mathbb G,E})$,
the claim follows.

\subsection{Proof of Theorem~\ref{th:G}(b)}\label{sec:proof:th:Gb}
We mainly follow an argument from~\cite{tz13:stoc} (although without using the language of Markov chains,
relying on the Farkas lemma instead, as in~\cite{kolmogorov15:power}).

Let $(\mathbb G^\ast,E')$ be the subgraph of $(\mathbb G,E)$ induced by $\mathbb G^\ast$.
For an edge $({\bf g},{\bf h)}\in E'$, define positive weight
$w({\bf g},{\bf h})=\hspace{-5pt}\sum\limits_{s\in\supp(\omega):{\bf g}^s={\bf h}}\hspace{-5pt} \omega(s)$. Note that we have
$
\sum\limits_{{\bf h}:({\bf g},{\bf h})\in E'} w({\bf g},{\bf h}) = 1
$
for all ${\bf g}\in\mathbb G^\ast$.

We claim that there exists vector $\lambda\in\mathbb R^{\mathbb G^\ast}_{\ge 0}$ that satisfies
\begin{subequations}\label{eq:proof:th:G:b:lambda}
\begin{eqnarray}
\sum_{{\bf g}:({\bf g},{\bf h})\in E'}  w({\bf g},{\bf h})\lambda_{\bf g} -\lambda_{\bf h}&=& 0\qquad\forall {\bf h}\in\mathbb G^\ast \label{eq:proof:th:G:b:lambda:a} \\
\sum_{{\bf g}\in\mathbb G^\ast}\lambda_{\bf g} &=& 1 \label{eq:proof:th:G:b:lambda:b}
\end{eqnarray}
\label{eq:GAISFLAISHGA}
\end{subequations}
Indeed, suppose system~\eqref{eq:GAISFLAISHGA} does not have a solution. By Farkas Lemma (see Lemma~\ref{lemma:Farkas}),
there exists a vector $y\in\mathbb R^{\mathbb G^\ast}$ and a scalar $z\in\mathbb R$ such that
\begin{subequations}
\begin{eqnarray}
z-y_{\bf g}+\sum_{{\bf h}:({\bf g},{\bf h})\in E'}  w({\bf g},{\bf h})y_{\bf h}&\ge& 0\qquad \forall {\bf g}\in\mathbb G^\ast \\
z & < & 0 \label{eq:PAOHGASJFHKAUSG}
\end{eqnarray}
\end{subequations}
Consider ${\bf g}\in\mathbb G^\ast$ with the maximum value of $y_{\bf g}$.
We have
$$
0\le z-y_{\bf g}+\sum_{{\bf h}:({\bf g},{\bf h})\in E'}  w({\bf g},{\bf h})y_{\bf h}\le
z-y_{\bf g}+\sum_{{\bf h}:({\bf g},{\bf h})\in E'}  w({\bf g},{\bf h})y_{\bf g}=z-y_{\bf g}+y_{\bf g}=z
$$
This contradicts~\eqref{eq:PAOHGASJFHKAUSG}, and thus proves that vector $\lambda\ge 0$
satisfying~\eqref{eq:proof:th:G:b:lambda} exists. Next, we will show that this vector satisfies
the property of Theorem~\ref{th:G}(b).

Let us rewrite condition~\eqref{eq:G:omega'} as follows:
\begin{eqnarray}
\sum_{s\in\supp(\omega)} \omega(s)f(x^{{\bf g}^si})&\le& \frac{1}{m-1}\sum_{j\in[m]-\{i\}}f(x^{{\bf g}j})\qquad \forall {\bf g}\in\mathbb G^\ast,i\in[m]
\end{eqnarray}
Multiplying this inequality by $\lambda_{\bf g}$ and summing over ${\bf g}\in\mathbb G^\ast$
(for a fixed $i\in[m]$) gives
\begin{eqnarray}
\sum_{{\bf g}\in\mathbb G^\ast}\sum_{{\bf h}:({\bf g},{\bf h})\in E'}w({\bf g},{\bf h})\lambda_{\bf g}f(x^{{\bf h}i})
&\le& \frac{1}{m-1}\sum_{{\bf g}\in\mathbb G^\ast}\lambda_{\bf g}\sum_{j\in[m]-\{i\}}f(x^{{\bf g}j})\qquad \forall  i\in[m]
\end{eqnarray}
Rearranging terms gives

\begin{eqnarray}
\sum_{{\bf h}\in\mathbb G^\ast}\left[\sum_{{\bf g}:({\bf g},{\bf h})\in E'}w({\bf g},{\bf h})\lambda_{\bf g}\right]f(x^{{\bf h}i})
&\le& \frac{1}{m-1}\sum_{j\in[m]-\{i\}}\sum_{{\bf g}\in\mathbb G^\ast}\lambda_{\bf g}f(x^{{\bf g}j})\qquad \forall i\in[m]\hspace{-30pt}~
\label{eq:FAILSFIAJSFOASFJAIOSFHA}
\end{eqnarray}
By \eqref{eq:proof:th:G:b:lambda:a} the expression in the square brackets equals $\lambda_{\bf h}$, and therefore~\eqref{eq:FAILSFIAJSFOASFJAIOSFHA}
can be rewritten as
\begin{eqnarray}
f^\lambda_i(x)\le \frac{1}{m-1}\sum_{j\in[m]-\{i\}}f^\lambda_j(x)\qquad\forall i\in[m]
\label{eq:GLAIHGASHGASF}
\end{eqnarray}
Consider index $i\in[m]$ with the maximum value of $f^\lambda_i(x)$.
We have $f^\lambda_i(x)\ge f^\lambda_j(x)$ for all $j\in [m]-\{i\}$,
which together with~\eqref{eq:GLAIHGASHGASF}
gives $f^\lambda_i(x)= f^\lambda_j(x)$ for all $j\in [m]-\{i\}$, as claimed.


\bibliographystyle{siamplain}
\bibliography{lit,csp2}
\end{document}